%BeginFileInfo
%%Publisher=ARXIV
%%Project=AOP
%%Manuscript=AOP1039
%EndFileInfo
%
% Institute of Mathematical Statistics (IMI)
% Journal "The Annals of Probabability"

%,secfloat,nameyear,number,noautosecdot
\input ./style/arxiv-general.cfg
\documentclass[aop,MSNbibl,secthm,seceqn,dvips]{arximspdf}
\makeatletter
   \@ifpackageloaded{graphicx}{}{\usepackage{graphicx}}
\makeatother
%\usepackage{}

% settings
%

% article settings
\doi{10.1214/15-AOP1039}% Updated by VTEXPTS2LaTeX.exe, 12.08.2015
%15:41
\volume{44}
\issue{4}
\pubyear{2016}
\firstpage{2980}
\lastpage{3031}
\docsubty{FLA}

\makeatletter
\newtheorem{theorem}{Theorem}[section]

\newtheorem{proposition}[theorem]{Proposition}
\newproclaim{conjecture}[theorem]{Conjecture}
\newtheorem{lemma}[theorem]{Lemma}
\newproclaim{Def}{Definition}
\newproclaim{definition}[theorem]{Definition}
\newproclaim{remark}[theorem]{Remark}
%\DeclareRobustCommand*{\refpropmatch}{\ref*{prop:match}}
%\DeclareRobustCommand*{\refeqn}{\ref*{432432}}
\makeatother

\begin{document}
\begin{frontmatter}

%\dochead{}
\title{Fractional Brownian motion with Hurst index $H=0$ and the
Gaussian Unitary Ensemble}
\runtitle{fBm with $H=0$ and the GUE}

\begin{aug}
% Corresponding author: Nicholas Simm - n.simm@qmul.ac.uk% Updated by
%VTEXPTS2LaTeX.exe, 12.08.2015 15:41
\author[A]{\fnms{Y. V.}~\snm{Fyodorov}\thanksref{T1}\ead
[label=e1]{y.fyodorov@qmul.ac.uk}},
\author[A]{\fnms{B. A.}~\snm{Khoruzhenko}\ead[label=e2]{b.khoruzhenko@qmul.ac.uk}}
\and
\author[A]{\fnms{N. J.}~\snm{Simm}\corref{}\thanksref{T1}\ead
[label=e3]{n.simm@qmul.ac.uk}}
\runauthor{Y. V. Fyodorov, B. A. Khoruzhenko and N. J. Simm}
\affiliation{Queen Mary University of London}
%\dedicated{}
\address[A]{School of Mathematical Sciences\\
Queen Mary University of London\\
Mile End Road\\
London E1 4NS\\
United Kingdom\\
\printead{e1}\\
\phantom{E-mail:\ }\printead*{e2}\\
\phantom{E-mail:\ }\printead*{e3}}
%\address[]{\\\printead{}}
\end{aug}
\thankstext{T1}{Supported by the EPSRC Grant EP/J002763/1 ``Insights
into Disordered Landscapes via Random Matrix Theory and Statistical
Mechanics.''}

% HISTORY:
%
\received{\smonth{12} \syear{2013}}% Updated by VTEXPTS2LaTeX.exe,
%12.08.2015 15:41
%
\revised{\smonth{5} \syear{2015}}% Updated by VTEXPTS2LaTeX.exe,
%12.08.2015 15:41

% ABSTRACT
%
\begin{abstract}
The goal of this paper is to establish a relation between
characteristic polynomials of $N \times N$ GUE random matrices
$\mathcal{H}$ as $N \to\infty$, and Gaussian processes with
logarithmic correlations.
We introduce a regularized version of fractional Brownian motion with
zero Hurst index, which is a Gaussian process with stationary
increments and logarithmic increment structure. Then we prove that this
process appears as a limit of $D_{N}(z)=-\log|\det(\mathcal{H}-zI)|$
on \textit{mesoscopic scales} as $N \to\infty$. By employing a
Fourier integral representation, we use this to prove a continuous
analogue of a result by Diaconis and Shahshahani
[\textit{J. Appl. Probab.} \textbf{31A} (1994) 49--62].
On the
\textit{macroscopic scale}, $D_{N}(x)$ gives rise to yet another type
of Gaussian process with logarithmic correlations. We give an explicit
construction of the latter in terms of a Chebyshev--Fourier random series.
\end{abstract}

% KEYWORDS
% Pirmas kwd is didziosios raides
%
\begin{keyword}[class=AMS]
\kwd[Primary ]{60B20}
%\kwd{}
\kwd[; secondary ]{60F17}
\kwd{60F05}
\kwd{15B52}
\end{keyword}
\begin{keyword}
\kwd{Random matrix theory}
\kwd{mesoscopic regime}
\kwd{logarithmically correlated}
\kwd{fractional Brownian motion}
\kwd{generalized processes}
\end{keyword}
\end{frontmatter}

%s1 #&#
\section{Introduction}\label{sec1}
Suppose that $\mathcal{H}$ is a random Hermitian matrix of size
$N\times N$ taken from the Gaussian Unitary Ensemble (GUE), with
ensemble distribution given by the measure %density
%
%e1.1 #&#
%
\begin{equation}
\label{dens} %P(\mathcal{H}) d \mathcal{H} =
\mathrm{Const.} \exp\bigl[-2N \operatorname{Tr} \bigl(
\mathcal{H}^{2} \bigr) \bigr] \prod_{j=1}^N
d\mathcal{H}_{jj} \prod_{1\le j<k\le N} d \operatorname{Re}
\mathcal{H}_{jk} d \operatorname{Im}\mathcal{H}_{jk}.
\end{equation}
It is well known that in the limit of infinite matrix dimensions $N\to
\infty$, the distribution of the eigenvalues of $\mathcal{H}$ is
supported on the interval $[-1,1]$ and has density $\frac{2}{\pi
}\sqrt{1-x^2}$ there. This is known as Wigner's semicircle law; see,
for example, \cite{PS11} and \cite{AGZ09} for precise statements. In
this paper, we are concerned with the \emph{random process} in $x$
defined by the logarithm
%
%e1.2 #&#
%
\begin{equation}
\label{krasov}
D_{N}(x) =-\log\bigl|\det( \mathcal{H} - xI)\bigr| %\qquad x
%\in(-1,1).
\end{equation}
of the characteristic polynomial of $\mathcal{H}$ in the limit $N\to
\infty$, with $x$ varying in $(-1,1)$. The quantity\vspace*{1pt} $D_N(x)$ is a
particular case of linear eigenvalue statistics $X_{N}(f) = \sum
_{k=1}^{N} f(x_k)$, where $x_1, \dots, x_N$ are the eigenvalues of
$\mathcal{H}$. It is well known that for suitably regular test
functions\vspace*{1pt} $f$, $X_{N}(f)$ is asymptotically normal as $N \to\infty$
with variance $\sigma^{2}(f) = \frac{1}{4}\sum_{k=1}^{\infty
}kc_{k}(f)^{2}$, where $c_{k}(f)$ are the \textit{Chebyshev--Fourier
coefficients}:
%e1.3 #&#
%
\begin{equation}
\label{chebshev}
c_{k}(f) = \frac{2}{\pi}\int_{-1}^{1}
\frac{f(u)T_{k}(u)}{\sqrt
{1-u^{2}}}\,du, \qquad T_{k}(u) = \cos\bigl(k
\operatorname{arccos}(u) \bigr).
\end{equation}
In fact, the asymptotic normality of $X_{N}(f)$ for regular $f$ has
been established for a variety of random matrix ensembles; see, for
example, \cite{Joh98,LP09,PS11} and references therein.

Since $x$ lies in the bulk of the eigenvalue distribution, our test
function, $f(u)=\log|u-x|$ is unbounded. Its Chebyshev--Fourier
coefficients are proportional to $1/k$, so that $\sigma^{2}(f) =\infty
$ and it is then natural to consider normalizing $D_N(x)$ before taking
the limit $N \to\infty$. Indeed, for any fixed $x\in(-1,1)$ the
variance of $D_N(x)$ grows with $N$ like $\frac{1}{2}\log N$, and for
any finite number of distinct points $x_{1}, \ldots,x_{m} $ in
$(-1,1)$ the random vector $(D_{N}(x_{1}),\ldots,D_{N}(x_{m}))/(\frac
{1}{2} \log N)^{1/2}$ converges in distribution, after centering, to a
collection of $m$ independent standard Gaussians as $N\to\infty$.
This can be inferred from the asymptotic identity due to Krasovsky
\cite{K07}:
%
%e1.4 #&#
%e1.5 #&#
%e1.6 #&#
%
\begin{eqnarray}
\mathbb{E} \bigl\{e^{-\sum_{k=1}^{m}\alpha_{k}D_{N}(x_{k})}
\bigr\}
&= &\prod_{k=1}^{m} \biggl[C \biggl(
\frac{\alpha_{k}}{2} \biggr) \bigl(1-x_{k}^{2}
\bigr)^{\alpha_{k}^{2}/8}N^{\alpha_{k}^{2}/4}e^{\alpha
_{k}N(2x_{k}^{2}-1-2\log(2))/2} \biggr]
\nonumber
\\[-8pt]
\label{krasovasympt}
\\[-8pt]
\nonumber
&&{}\times\prod_{1 \leq\nu< \mu\leq m}\bigl(2|x_{\nu}-x_{\mu
}|\bigr)^{-\alpha
_{\nu}\alpha_{\mu}/2}
\biggl(1+O \biggl(\frac{\log N}{N} \biggr) \biggr),
\end{eqnarray}
where $C(\alpha) = 2^{2\alpha^{2}}{G(\alpha+1)^{2}}/{G(2\alpha+1)}$
and $G(z)$ is the Barnes G-function.
The most salient feature of the asymptotics in (\ref{krasovasympt}) is
the product of differences on the second line, which when rewritten in
the form
%
%e1.7 #&#
%
\begin{equation}\label{kracovstruct}
\exp\biggl[ -\sum_{1 \leq\nu< \mu\leq m}\frac{\alpha_{\nu
}\alpha
_{\mu}}{2}
\log\bigl|2(x_{\nu}-x_{\mu})\bigr| \biggr],
\end{equation}
is suggestive of the existence of a logarithmic covariance structure in
the Gaussian process $D_N(x)$. However, this term is of sub-leading
order to the variance term. Clearly then, the normalization of the
process (\ref{krasov}) comes at a price, because the nontrivial
covariance structure implied by (\ref{kracovstruct}) is too small to
survive the limit $N \to\infty$.

This motivates the following question. How can we ``regularize'' the
process~(\ref{krasov}) so that it has a well-defined limit that
``feels'' the covariance structure implied by (\ref{kracovstruct})?
Hughes, Keating and O'Connell \cite{HKOC01} answered this question in
the context of the Circular Unitary Ensemble (Haar unitary matrices).
Employing convergence in functional spaces instead of point-wise
convergence, they proved that the logarithm $V_N(\theta)=-2\log
{|p_N(\theta)|}$ of the characteristic polynomial $p_N(\theta)=\det
{ (I-U  e^{-i\theta} )}$ of Haar unitary matrices $U$
converges as $N\to\infty$ to the stochastic process represented by
the Fourier series
%
%e1.8 #&#
%
\begin{equation}
\label{1f}
V(\theta)=\sum_{n=1}^{\infty}
\frac{1}{\sqrt{n}} \bigl(v_n e^{i n
\theta}+\overline{v}_n
e^{-i n \theta} \bigr).
\end{equation}
Here, the coefficients $v_n,\overline{v}_n$ are independent standard
\emph{complex} Gaussians, $\mathbb{E} \{v_n \overline{v}_n \}=1$,
and the convergence of the series is understood in the sense of
distributions in a suitable Sobolev space. This\vspace*{1pt} process has a
logarithmic singularity in the covariance structure:
$\mathbb{E} \{V(\theta_1)V(\theta_2)\}=-2\log{|e^{i\theta
_1}-e^{i\theta_2}|}$.

At this point, it is appropriate to mention that random processes and
fields with logarithmic covariance structure appear with astonishing
regularity in physics and also engineering applications; see, for
example, \cite{CLD01} and more recently \cite{FlDR12}. Those objects
are intimately related to multi-fractal cascades emerging in
turbulence, and from that angle attracted considerable mathematical
interest within the last decade; see, for example,
\cite{BacryMuzy2002} and \cite{BarralMandelbrot2004}. In fact, closely related
mathematical objects appear in the so-called ``multiplicative chaos''
construction going back to Kahane's work \cite{Kah85}; also see \cite
{RV13} and references therein for recent research in that direction
which was motivated, in particular, by Quantum Gravity applications. In
two spatial dimensions, the most famous example of the random field of
that type is the two-dimensional Gaussian Free Field \cite{GFF}. A
regularized version of this field appeared in a nontrivial way in the
work of Rider and Vir\'ag \cite{VR97}, who showed that it describes
the limiting law of the log-modulus of characteristic polynomials in
the Ginibre ensemble. The Gaussian Free Field also appeared more
recently as the limiting distribution of the eigenvalue counting
function in general $\beta$-Jacobi ensembles and their principal
sub-minors \cite{BG13}. As for the one-dimensional processes with
logarithmic correlations, they are known in natural sciences under the
general name of \textit{$1/f$ noises} (see Section~2 in \cite{FlDR12}
for some general references) since, in the spectral representation, the
Fourier transform of the covariance or structure function, interpreted
as a ``power'' of the signal, is inversely proportional to the Fourier
variable (i.e., the ``frequency'' $f$). The random process $V(\theta)$
is, arguably, the simplest time-periodic stationary version of $1/f$
noise. It was found to play an important role in the construction of
conformally invariant planar random curves \cite{AJKS11} and
statistical mechanics of disordered systems \cite{FB08}. We note in
passing that from a different angle, discrete sequences with $1/f$
properties were considered heuristically in the physics literature;
see, for example, \cite{Rel04} and \cite{MLD07}.

The motivation for the work in \cite{HKOC01} came from number theory,
as for large $N$, $p_N(\theta)$ provides a good model for describing
statistics of the values of the Riemann-zeta function high up the
critical line \cite{KS00}. The established relation of $p_N(\theta)$ to
%$1/f$ noise
$V(\theta)$ turned out to be fruitful. It allowed one to put forward
nontrivial conjectures about statistics of extreme and high values of
characteristic polynomials of Haar unitary matrices emerging as $N\to
\infty$, and eventually for the Riemann-zeta function \cite{FHK12,FK12}.

The main goal of this paper is to investigate further the relation
between $1/f$-noises and the characteristic polynomials of random
matrices in the limit $N\to\infty$. Significantly extending the
picture found in \cite{HKOC01}, we will show that the limiting process
depends on the \textit{spectral scale} at which one allows the argument
$x$ of the characteristic polynomial $\det(\mathcal{H}-xI)$ to vary. To
this end, let us remind the reader that, as is well known in random
matrix theory (see, e.g., \cite{PS11}), there exist three natural
scales in the spectra of large random matrices. One, known as the
global, or \textit{macroscopic} scale is set for the GUE by the width of
the support of the semicircle law and, in the normalization chosen in
the present paper [see (\ref{dens})] remains of the order of unity as
$N\to\infty$. Second, known as the local, or \textit{microscopic} scale
is set by the typical separation between neighbouring eigenvalues and
is, in the chosen normalization, of order $1/N$ for large $N$. Finally,
the third scale which is called \textit{mesoscopic} can be defined as
intermediate between those two.

Deferring precise statements to the next section, now we will outline
the two instances of $1/f$ noise that emerge in the limit $N\to\infty
$ for the GUE matrices. On the macroscopic scale, by adapting the
arguments of \cite{HKOC01} to our setting, we prove that, as $N\to
\infty$, the process $\{D_N(x): x\in(-1,1)\}$ converges, after
centering, to the (aperiodic) $1/f$ noise given by the random
Chebyshev--Fourier series
%
%e1.9 #&#
%
\begin{equation}
\label{1fch}
F(x) = \sum_{n=1}^{\infty}
\frac{1}{\sqrt{n}} a_{n} T_{n}(x), \qquad x \in(-1,1),
\end{equation}
where $a_n$, $n=1,2 \ldots$ is a sequence of independent standard real
Gaussians. As with the Fourier series in (\ref{1f}), the convergence
in (\ref{1fch}) has to be understood in the sense of distributions in
a suitable Sobolev space. The covariance structure associated with the
generalized process (\ref{1fch}) is given by an integral operator
with kernel
$ \mathbb{E}\{F(x)F(y)\} = -\frac{1}{2}\log(2|x-y|)$.

The problem of finding a suitable model to describe the statistical
properties of the characteristic polynomials of random matrices on the
\emph{mesoscopic} rather than macroscopic scale turned out to be much
more challenging and is the main focus of the present paper. Our main
finding is the emergence of fractional Brownian motion with Hurst index
$H=0$ in this context. To describe the latter, we recall that the
conventional \textit{fractional Brownian motion} (fBm) is a zero-mean
Gaussian process $B_H(t)$, $B_H(0)=0$, with stationary increments and
the covariance structure given by
%
%e1.10 #&#
%
\begin{equation}
\label{intr1}
\mathbb{E} \bigl\{ \bigl[B_H(t_1)-B_H(t_2)
\bigr]^2 \bigr\}=\sigma^2 |t_1-t_2|^{2H},
\end{equation}
where $H\in(0,1)$ and $\sigma^2>0$ are two parameters. Although first
introduced by Kolmogorov in 1940, fBm became very popular after the
seminal work of Mandelbrot and van Ness \cite{ManvNess68} and proved
to be a very rich mathematical object of high utility; see, for
example, articles by M. Taqqu and by G. Molchan in the book \cite
{Taq2003} for an introduction and further references and applications.
The utility of fBm is related to its properties of being \textit{self-similar}, that is, $\{ B_H(at): t\in\mathbb{R}\}\stackrel{d}= a^H
\{ B_H(t): t\in\mathbb{R}\}$ for any $a>0$, and having \textit{stationary increments}. These two properties characterize the
corresponding Gaussian process uniquely; see, for example, \cite
{Taq2003}. In the context of self-similarity, parameter $H$ is also
known as the Hurst index $H$ or the scaling exponent.

For $H=1/2$, the fBm $B_{1/2}(t)$ is proportional to the usual Brownian
motion (Wiener process). We will denote the latter simply as $B(t)$,
with $B(dt)$ being the corresponding white noise measure, $\mathbb
{E} \{B(dt) \}=0$ and $\mathbb{E} \{B(dt)\times\break B(dt')
\}=\delta(t-t')\,dt\,dt'$, where we have chosen the normalization
corresponding to the choice of $\sigma=1$ in (\ref{intr1}).

It is apparent from (\ref{intr1}) that the naive limit $H= 0$ of
$B_H(t)$ is not well defined. To overcome this problem, the first
author proposed some time ago to regularize the fBm in the limit $H\to
0$ as follows. Consider the stochastic Fourier integral
%
%e1.11 #&#
%
\begin{eqnarray}
B^{(\eta)}_H(t) &=& \frac{1}{2\sqrt{2}}\int
_0^{\infty}\frac
{e^{-\eta s}}{s^{1/2+H}} \bigl[
\bigl(e^{-its}-1 \bigr)B_c(ds)+ \bigl(e^{its}-1
\bigr)\overline{B_c(ds)} \bigr],
\nonumber
\\[-8pt]
\label{intr3}
\\[-8pt]
\eqntext{\displaystyle\eta\ge0,}
\end{eqnarray}
where $B_c(t)=B_R(t)+iB_I(t)$ and $B_R(t)$ and $B_I(t)$ are two
independent copies of the Brownian motion. For $H\in(0,1)$ the
integral in (\ref{intr3}) is well defined for all $\eta\ge0$ and
represents a zero-mean Gaussian process with stationary increments and
covariance
$\mathbb{E} \{[B^{(\eta)}_H(t_1)-B^{(\eta)}_H(t_2)]^2 \}
=2\phi^{(\eta)}_H(t_1-t_2) $, where
%
%e1.12 #&#
%e1.13 #&#
%
\begin{eqnarray}
\qquad\phi^{(\eta)}_H(t)&=&\frac{1}{2}\int
_0^{\infty}\frac{e^{-2\eta
s}}{s^{1+2H}} \bigl(1-\cos{(ts)}
\bigr)\, ds
\nonumber
\\[-8pt]
\label{intr5}
\\[-8pt]
\nonumber
&=&\frac{1}{4H}\Gamma(1-2H) \biggl[ \bigl(4\eta^2+t^2
\bigr)^H\cos{ \biggl(2 H \arctan{\frac{t}{2\eta}} \biggr)}-(2
\eta)^{2H} \biggr].
\end{eqnarray}

For\vspace*{1pt} fixed $H\in(0,1)$, $\lim_{\eta\to0} \phi^{(\eta)}_H(t) =\frac
{1}{4H} \Gamma(1-2H)\cos(\pi H) t^{2H}$, where $\Gamma(z)$ is the
Euler gamma-function. Hence, $B^{(0)}_H(t)$ is fBm. This also follows
from the so-called \textit{harmonizable representation} of the fBm, which
is precisely the integral on the RHS in (\ref{intr3}) when $\eta=0$;
see Proposition~9.2 in \cite{Taq2003}, or equation (7.16) in \cite{Sam06}.
On the other hand, for any fixed $\eta>0$, the limit of $H=0$ in (\ref
{intr3}) is well defined, and
%
%e1.14 #&#
%
\begin{equation}
\label{covphi}
\lim_{H\downarrow0} \phi^{(\eta)}_H(t)
= \frac{1}{4}\log\biggl(\frac{t^{2}}{4\eta^{2}}+1 \biggr).
\end{equation}
We consider the resulting limiting process
%$\{ B^{(\eta)}_{0}(t): t\in\mathbb{R} \}$,
%
%e1.15 #&#
%
\begin{equation}
\label{intr3h0}
\hspace*{6pt}B^{(\eta)}_{0}(\tau) = \frac{1}{2\sqrt{2}}\int
_{0}^{\infty}\frac
{e^{-\eta s}}{\sqrt{s}} \bigl\{
\bigl[e^{-i\tau s}-1 \bigr]B_{c}(ds)+ \bigl[e^{i\tau
s}-1
\bigr]\overline{B_{c}(ds)} \bigr\}
\end{equation}
as the most natural extension of the standard fBm to the case of zero
Hurst index $H=0$. This process can also be defined axiomatically.
\setcounter{footnote}{1}
\begin{Def*}
The regularized fBm with Hurst index $H=0$ is a
real-valued stochastic process $\{B_{0}^{(\eta)}(\tau), \tau\in
\mathbb{R} \}$ with the following properties:
\begin{longlist}[(iii)]
\item[(i)] $B_{0}^{(\eta)}(t)$ is a Gaussian process with mean 0 and
$B_{0}^{(\eta)}(0)=0$,
\item[(ii)] $\operatorname{Var}\{B_{0}^{(\eta)}(t)\}= \frac
{1}{2}\log
(\frac{t^{2}}{4\eta^{2}}+1 )$ for some $\eta>0$,
\item[(iii)] $B_{0}^{(\eta)}(t)$ has stationary increments.
\end{longlist}
\end{Def*}
The increment structure of $B_{0}^{(\eta)}(t)$ depends logarithmically
on the time separation:
%
%e1.16 #&#
%
\begin{equation}
\label{logcov}
\mathbb{E} \bigl\{ \bigl[B^{(\eta
)}_{0}(t_1)-B^{(\eta)}_{0}(t_2)
\bigr]^{2} \bigr\}= \frac
{1}{2}\log\biggl[\frac{(t_1-t_2)^{2}}{4\eta^{2}}+1
\biggr],
\end{equation}
and hence the regularized fBm with $H=0$ defines a bona fide
version of the $1/f$ noise with stationary increments.\footnote
{Compare (\ref{intr3h0}) with a stationary version of fBm with $H=0$
proposed in equation (16) of \cite{Schmitt2003}.} Therefore, the
stochastic process $B^{(\eta)}_{0}(\tau)$ is of interest in its own
right and deserves further study. We do not pursue this direction in
the present paper except for noting for future reference that the
regularized fBm has continuous sample paths.

\subsection*{Note}
After posting the initial version of this paper to the
arXiv, we learned of the work \cite{U09}, where a regularization of
fBm essentially equivalent to our $B_{H}^{(\eta)}(t)$ was introduced
for $H>0$. Note that neither the limit $H \to0$ nor the connection
with random matrices were identified or investigated there.

%s2 #&#
\section{Main results}
\label{semainresults}

%s2.1 #&#
\subsection{Macroscopic regime}\label{sec2.2}
We start with the simpler case of the macroscopic scale where we extend
the analogous construction of \cite{HKOC01} from unitary to Hermitian
matrices. The relation between characteristic polynomials of Haar
unitary matrices and the random Fourier series in (\ref{1f}) can be
understood by expanding $\log{|p_N(\theta)|}$ into the Fourier series
%
%e2.1 #&#
%
\begin{equation}
\label{diashah}\hspace*{6pt}
V_N(\theta)= -2\log\bigl|\det\bigl(I-Ue^{-i\theta}
\bigr)\bigr| = \sum_{n=1}^{\infty
}\frac{1}{\sqrt{n}}
\bigl(v_{n,N}e^{i n\theta}+\overline{v_{n,N}}e^{-in\theta}
\bigr),
\end{equation}
where $v_{n,N} = \frac{1}{\sqrt{n}}\operatorname{Tr} (U^{-n})$.
Now, the
coefficients $v_{n,N}$ converge in distribution as $N\to\infty$ to
independent standard complex Gaussians. This is a result due to
Diaconis and Shahshahani \cite{DS94} from which it can be inferred
\cite{HKOC01} that (\ref{1f}) represents the limit of $V_{N}(\theta)$ in a suitable functional space.

An analogue of the Diaconis--Shahshahani result for the $N \times N$
GUE matrices $\mathcal{H}$ was obtained by Johansson \cite{Joh98}. He
proved that for any fixed $m$ the vector $ (\frac{2}{\sqrt
{n}}\operatorname{Tr}
T_{n}(\mathcal{H}) )_{n=1}^m$, with $T_{n}(x) = \cos(n\arccos
(x))$ being Chebyshev polynomials, converges, after centering, to a
collection of independent standard Gaussians in the limit $N\to\infty
$. In view of the handy identity,
%
%e2.2 #&#
%
\begin{equation}
\label{haag1}
-\log\bigl(2|x-y|\bigr) = \sum_{n=1}^{\infty}
\frac{2}{n} T_n(x)T_n(y), \qquad x,y\in[-1,1],
x\neq y,
\end{equation}
the desired analogue of Fourier expansion is an expansion in terms of
Chebyshev polynomials,
%
%e2.3 #&#
%
\begin{eqnarray}
D_N(x)& =& -\log\bigl|\det(\mathcal{H} - xI)\bigr|
= \sum
_{n=1}^{\infty}\frac{a_{n,N}}{\sqrt{n}}T_{n}(x)+N
\log2+ R_{N}(x),
\nonumber
\\[-8pt]
\label{logdetH}
\\[-8pt]
\eqntext{\displaystyle a_{n,N} = \frac{2}{\sqrt{n}}\operatorname{Tr}
T_{n}(\mathcal{H}),}
\end{eqnarray}
where the error term $R_N(x)$ is due to the eigenvalues of $\mathcal H$
outside the support $[-1,1]$ of the semicircle law. Since the
probability of finding such an eigenvalue vanishes fast as $N\to\infty$,
it can be shown that the error term does not contribute in the limit
(see the proof of Proposition~\ref{propfindim} for a more precise
statement). %26.02.NJS Broke up the sentence here.
One then concludes that the natural limit of $D_N(x)$, after centering,
% $\tilde D_N(x)=-\log|q_N(x)| + \mathbb{E}\{ \log|q_N(x)| \}$
is given by the random Chebyshev--Fourier series (\ref{1fch}).

We will make this picture mathematically rigorous by working in a
suitable functional space. First, let us assign a formal meaning to the
series in (\ref{1fch}) and the corresponding stochastic process.
Consider the space $L^2=L^2 ((-1,1), \mu(dx))$ with $ \mu(dx) =
dx/\sqrt{1-x^2}$. The Chebyshev polynomials form an orthogonal basis
in this space, with $c_n(f)$ (\ref{chebshev}) being the coefficients
of the corresponding Chebyshev--Fourier series. For $a>0$, consider the
space $V^{(a)}$ of functions $f$ in $L^2$ such that $\sum
_{n=0}^{\infty} |c_n(f)|^2 (1+n^2)^a <\infty$. This is a Hilbert
space with the inner product
\[
\langle f,g \rangle_{a}=\sum_{n=0}^{\infty}
c_n(f)c_n(g) \bigl(1+n^{2}
\bigr)^{a}.
\]
Its dual, $V^{(-a)}$, is the Hilbert space of generalised functions
$F(x)=\break \sum_{n=0}^{\infty} c_n T_n(x) $ with $\Vert F \Vert_{-a}^2=\sum
_{n=0}^{\infty} |c_n|^2(1+n^2)^{-a}<\infty$. Setting here $c_0=0$ and
$c_n=a_n/\sqrt{n}$ with $a_n$, $n\ge1$, being independent standard
Gaussians, one obtains $F(x)$ of (\ref{1fch}). In such case, $\Vert F
\Vert_{-a}^2$ is finite with probability one. This defines $F(x)$ in (\ref
{1fch}) as a generalised random function (stochastic process) which
acts on a test function $f\in V^{(a)}$ in the usual way,
\[
F[f]=\sum_{n=1}^{\infty} \frac{a_n}{\sqrt{n}}c_n(f)=\langle f,F \rangle_{0}.
\]
This process is Gaussian with zero mean. Its covariance, $\mathbb{E}\{
F[f]F[g]\}$, is given by
%
%e2.4 #&#
%
\begin{equation}
\label{haag}
\mathbb{E} \bigl\{F[f]F[g] \bigr\} = \sum
_{n=1}^{\infty} \frac{1}{n} \int
_{-1}^1 \!\int_{-1}^1
f(x)g(y) T_n(x)T_n(y) \mu(dx)\mu(dy).
\end{equation}
%
%\begin{eqnarray}\nonumber
%\mathbb{E}\{F[f]F[g]\} &=&\mathbb{E}\left\{ \sum_{n,m=1}^{\infty}
%\frac{X_nX_m}{\sqrt{n}\sqrt{m}}\, c_n(f) c_m(g)\right\} = \sum_{n=1}^{
%\infty} \frac{1}{n} c_n(f) c_n(g) \\ \label{haag}
% &=& \sum_{n=1}^{\infty} \frac{1}{n} \int_{-1}^1 \int_{-1}^1 f(x)g(y)
%T_n(x)T_n(y)\ \mu(dx)\mu(dy) .
%\end{eqnarray}
It can be shown (see, e.g., Lemma~3.1 in \cite{GP13}) that the order
of summation and integration in (\ref{haag}) can be interchanged, and
in view of (\ref{haag1}), one obtains the covariance operator in
closed form:
\[
\mathbb{E} \bigl\{F[f]F[g] \bigr\}= -\int_{-1}^1\!
\int_{-1}^1 \frac{1}{2}\log\bigl(2|x-y|\bigr)
f(x)g(y)  \mu(dx)\mu(dy), \qquad\hspace*{-5pt} f,g\in V^{(a)}.
\]

We are now in a position to formulate our result. Consider the centered process:
%
%e2.5 #&#
%
\begin{equation}
\label{logdet}\qquad
\tilde{D}_{N}(x )=-\log\bigl|\det(\mathcal{H}-xI)\bigr|
+\mathbb{E}
\bigl\{\log\bigl|\det(\mathcal{H}-xI)\bigr| \bigr\}, \qquad x\in(-1,1).
\end{equation}
Since $\log|x|$ is locally integrable, $\tilde D_{N}\in V^{(-a)}$ for
every $N$.

\smallskip

%th2.1 #&#
%
\begin{theorem}
\label{thglobal}
For every $a>1/2$, $\tilde{D}_N (x) \Rightarrow F(x)$ in $V^{(-a)}$ as
$N\to\infty$, where $F(x)$ given by (\ref{1fch}).
\end{theorem}

Our proof of this theorem in Section~\ref{seweakconv} involves
solving at least two technical problems that did not arise in \cite
{HKOC01}. First,\vspace*{1pt} when proving convergence of the finite-dimensional
distributions of $\tilde D_{N}(x)$, we are faced with a test function
possessing square-root singularities at the edges of the spectrum,
arising from the Chebyshev--Fourier coefficients of the logarithm
outside $[-1,1]$; see Lemma~\ref{lechebycoeffs}. Most bounds and
concentration inequalities for linear statistics rely on the test
function having at least $C^{1}(\mathbb{R})$ regularity (see, e.g., \cite{PS11,LP09,AGZ09}), while ours is only $C^{1/2}(\mathbb
{R})$ (even the recent extension \cite{SW13} of such bounds to test
functions from the $C^{1/2+\varepsilon}(\mathbb{R})$ class does not
suffice here). Making use of fine asymptotics of orthogonal polynomials
and Airy functions, we prove that this linear statistic converges to
zero, a problem that did not appear in \cite{HKOC01}.

Second, when proving tightness of $(\tilde D_{N}(x))_{N=1}^{\infty}$
we need additional control over the variance of $ \operatorname
{Tr}(T_{n}(\mathcal{H}))$ for both large $N$ \textit{and} large $n$. In \cite{HKOC01},
the analogous quantity, namely $\operatorname{Var}\{ \operatorname
{Tr}(U^{-n}) \}$, was known
explicitly due to exact results for the unitary group obtained by
Diaconis and Shashahani \cite{DS94}. In contrast, for the GUE case,
$\operatorname{Var}\{ \operatorname{Tr}(T_{n}(\mathcal{H}))\}$ and
related quantities need to be
estimated asymptotically as $N \to\infty$, \textit{uniformly} in the
degree $n$ of the Chebyshev polynomial.

%s2.2 #&#
\subsection{Mesoscopic regime}
Now we proceed to our next task of extending the relation between
characteristic polynomials of random matrices and $1/f$-noises to the
mesoscopic scale. In this case, instead of working directly with a
generalised stochastic process, we find it more convenient to work with
their \textit{regularized} versions.
%26.02.NJS - Slightly modified the above paragraph.

To formulate\vspace*{1pt} our results more precisely, fix a parameter $\eta>0$ and
consider the following sequence of stochastic processes $\{W^{(\eta
)}_{N}(\tau): \tau\in\mathbb{R}\}$, $N=1,2, \ldots:$
%
%e2.6 #&#
%
\begin{eqnarray}
W^{(\eta)}_{N}(\tau) &=& -\log \biggl|\det\biggl[
\mathcal{H}- \biggl(x_{0}- \frac{\tau}{d_{N}} \biggr)I-
\frac
{i\eta}{d_{N}} I \biggr] \biggr|
\nonumber
\\[-8pt]
\label{wntau}
\\[-8pt]
\nonumber
&&{}+\log \biggl|\det\biggl[ \mathcal{H}-x_{0}I-
\frac{i\eta}{d_{N}}I \biggr] \biggr|.
\end{eqnarray}
Note that $ W^{(\eta)}_{N}(\tau)$ also depends implicitly on three
additional parameters: $\eta>0$, $x_{0}\in(-1,1)$ and $d_{N}>0$;
their importance is explained below, though for ease of notation we
will not emphasize the dependence on $x_0$ when referring to $ W^{(\eta
)}_{N}(\tau)$. We use the parameter $d_{N}>0$ to zoom into the
appropriate spectral scale of $\mathcal{H}$ centered around a point
$x_{0}$ inside the bulk of the limiting spectrum of the GUE matrices
${\mathcal H}$. On the macroscopic scale $d_N=1$, on the microscopic scale
$d_N=N$ whilst on the mesoscopic scale $d_N$ is in between these two
extremes, \mbox{$1 \ll d_N \ll N $}. The parameter $\eta$ is an arbitrary but
fixed positive real number, introduced to regularize the logarithmic
singularity at zero.

Our main result shows that in the \textit{mesoscopic limiting regime} where
%
%e2.7 #&#
%
\begin{equation}
\label{M2}
d_{N} \to\infty \quad  \mbox{and} \quad d_N=o(N/\log N) \qquad  \mbox{as }N \to\infty
\end{equation}
the stochastic process $ W^{(\eta)}_{N}(\tau)$ converges, after
centering, to $B_{0}^{(\eta)}(\tau)$; the regularized fractional
Brownian motion with Hurst index $H=0$. For finite-dimensional
distributions this is the content of the following theorem. Let
\[
\tilde{W}^{(\eta)}_{N}(\tau)= W^{(\eta)}_{N}(
\tau)-\mathbb{E} \bigl\{ W^{(\eta)}_{N}(\tau) \bigr\}.
\]

%th2.2 #&#
%
\begin{theorem}
\label{thmaintheorem}
Consider GUE random matrices ${\mathcal H}$ in
(\ref{dens}).
Assume that the reference point $x_0$ is in the bulk of the limiting
spectrum of ${\mathcal H}$, $x_0\in(-1,1)$, and the scaling factor $d_N$
satisfies (\ref{M2}). Then for any fixed $\eta>0$ and any finite
number of times $(\tau_{1},\ldots,\tau_{m}) \in\mathbb{R}^{m}$ we
have the convergence in distribution
%
%e2.8 #&#
%
\begin{equation}
\label{mesoconv}\qquad
\bigl(\tilde{W}^{(\eta)}_{N}(
\tau_{1}),\ldots,\tilde{W}^{(\eta
)}_{N}(
\tau_{m}) \bigr) \stackrel{d} {\Longrightarrow} \bigl(B^{(\eta)}_{0}(
\tau_{1}),\ldots,B^{(\eta)}_{0}(
\tau_{m}) \bigr) \qquad \mbox{as $N \to\infty$}.
\end{equation}
\end{theorem}

We prove this theorem in Section~\ref{semesoproof} by adopting
Krasovsky's derivation of identity (\ref{krasovasympt}) to the
mesoscopic scale. The characteristic function of the random vector on
the LHS in (\ref{mesoconv}) is given by a Hankel determinant whose
symbol possesses Fisher--Hartwig singularities. The Riemann--Hilbert
problem provides a powerful tool to obtain asymptotics of such Hankel
determinants \cite{De99,KV03,KMVaV04,K07}. On the mesoscopic scale the
Fisher--Hartwig singularities [these are located at points $x_0+(\tau
_k+i\eta)/d_N$] are all at distance of order $1/d_N$ from the point
$x_0 \in(-1,1)$. Because of this, the system of contours defining the
Riemann--Hilbert problem (inside of which the symbol is analytic) close
onto the real line as $N\to\infty$. In this regime, the estimates
become more delicate. In contrast, in the macroscopic regime the
Fisher--Hartwig singularities are real and spaced out and one does not
need to consider the case of shrinking contours.

Here, it is appropriate to mention that linear eigenvalue statistics on
the mesoscopic scale are more challenging to study compared to the
macroscopic scale. Known results are sparse and mostly limited to
regular test functions with compact support; see \cite
{BdMK99a,BdMK99b,S00} and also more recent works \cite
{EK13a,EK13b,DJ13,Bou14,Dui14}. One reason is that the majority of
concentration inequalities involving derivatives, such as, for example, Lipschitz norm \cite{AGZ09} or the Poincar\'e inequality
\cite{AGZ09,PS11} that proved to be so useful on the macroscopic
scale, get a factor of $d_N$ in the mesoscopic case, and hence, no
longer apply without appropriate modification. In this context, the
Riemann--Hilbert problem proves to be a powerful tool for estimating
the error terms down to very small scales (\ref{M2}).

%26.02.NJS - modified the above paragraph.

One can extend Theorem~\ref{thmaintheorem} to an infinite-dimensional
setting with a little bit more work. Let $L^{2}[a,b]$ denote the
Hilbert space of square integrable functions on $[a,b]$ with the inner product
%
%e2.9 #&#
%
\begin{equation}
\langle f, g\rangle_{2} = \int_{a}^{b}f(
\tau)\overline{g(\tau)} \,d\tau.
\end{equation}
Since\vspace*{1pt} the sample paths of $\tilde{W}^{(\eta)}_{N}$ are continuous,
$\Vert \tilde{W}^{(\eta)}_{N}\Vert_{2} < \infty$. Therefore, both
$W^{(\eta)}_{N}$ and its $N \to\infty$ limit $B^{(\eta)}_{0}$ can
be viewed\vspace*{1pt} as random elements in the space $L^{2}[a,b]$. We have the following.
%
%th2.3 #&#

\begin{theorem}
\label{thcompactconv}
Let $-\infty< a < b < \infty$. Then on mesoscopic scales (\ref{M2}),
the process $\tilde{W}^{(\eta)}_{N}$ converges weakly (in the sense
of probability law) to $B^{(\eta)}_{0}$ in $L^{2}[a,b]$ as $N \to
\infty$. Furthermore, for every $h \in L^{2}[a,b]$, we have the
convergence in distribution
%
%e2.10 #&#
%
\begin{equation}
\label{hl2} \int_{a}^{b}h(\tau)
\tilde{W}^{(\eta)}_{N}(\tau)\,d\tau\stackrel{d} {\Longrightarrow}
\int_{a}^{b}h(\tau)B^{(\eta)}_{0}(
\tau)\, d\tau, \qquad N \to\infty.
\end{equation}
\end{theorem}

This result follows from Theorem $3$ in \cite{G76}, which allows one
to deduce weak convergence for general processes in $L^{2}[a,b]$ under
the hypothesis that:
\begin{longlist}[(ii)]
\item[(i)] The finite-dimensional distributions of $\tilde{W}^{(\eta
)}_{N}$ converge to those of $B^{(\eta)}_{0}$ as $N \to\infty$.
\item[(ii)] For some $C>0$, the bound $\mathbb{E}\{|\tilde{W}^{(\eta
)}_{N}(\tau)|^{2}\} \leq C$ holds for all $N$ and $\tau\in[a,b]$ and
%
%e2.11 #&#
%
\begin{equation}
\lim_{N \to\infty}\mathbb{E} \bigl\{\bigl|\tilde{W}^{(\eta)}_{N}(
\tau)\bigr|^{2} \bigr\} = \mathbb{E} \bigl\{\bigl|B^{(\eta)}_{0}(
\tau)\bigr|^{2} \bigr\}.
\end{equation}
\end{longlist}
Note that item (i) is a restatement of Theorem~\ref{thmaintheorem}, while item (ii) will be shown to follow
from our proof of Theorem~\ref{thmaintheorem}.

Having established the relation between characteristic polynomials of
GUE matrices and $1/f$ noise on the mesoscopic scale, let us revisit
the series expansions of the macroscopic scale\vspace*{1pt} discussed at length in
Section~\ref{sec2.2}. Instead of expanding the process $W^{(\eta
)}_{N}(\tau)$ in a Chebyshev--Fourier series and applying the
Diaconis--Shahshahani result, in the mesoscopic regime it comes in
handy to expand $W^{(\eta)}_{N}(\tau)$ as a Fourier \textit
{integral}. %26.02.NJS rephrased.

To this end, we now provide a suitable Fourier-integral representation
for $W^{(\eta)}_{N}(\tau)$. Such a representation can be derived by
making use of the identity (see, e.g., equation (7.89) in \cite{De99})
%
%e2.12 #&#
%
\begin{equation}
\label{deiftint}
\frac{1}{2}\log\biggl(\frac{t^{2}}{\varepsilon
^{2}}+1 \biggr) = \int
_{0}^{\infty}\frac{e^{-\varepsilon s}}{s} \bigl[1-\cos(ts)
\bigr] \,ds, \qquad \varepsilon>0.
\end{equation}
It follows from (\ref{deiftint}) that
%
%e2.13 #&#
%
\begin{equation}
\label{detforident} W^{(\eta)}_{N}(\tau) = \frac{1}{2}\int
_{0}^{\infty}\frac{e^{-\eta
s}}{\sqrt{s}} \bigl\{
\bigl[e^{-i\tau s}-1 \bigr]b_{N}(s)+ \bigl[e^{i\tau s}-1
\bigr]  \overline{b_{N}(s)} \bigr\}\,ds,
\end{equation}
where
%
%e2.14 #&#
%
\begin{equation}
\label{bee} b_{N}(s) = \frac{1}{\sqrt{s}} \operatorname
{Tr}e^{-isd_{N}(\mathcal
{H}-x_{0}I)}.
\end{equation}
The identity (\ref{detforident}) can be thought of as the Fourier
integral version of the Fourier series (\ref{diashah}). Furthermore,
comparison of the harmonizable representation (\ref{intr3h0}) for
$B_0^{(\eta)}(t)$ [which can be thought as a natural integral analogue
of the series expansions in (\ref{1f}) and (\ref{detforident})],
suggests that the Fourier coefficients $b_{N}(s)$ converge in the
mesoscopic regime to Gaussian white noise. Such a statement may be
interpreted as a continuous analogue of the Diaconis--Shahshahani
result \cite{DS94} and is the content of our next theorem. %26.02.NJS
%rephrased.

Let $C^{\infty}_{0}(\mathbb{R}_{+})$ be the space of infinitely many
times differentiable functions with compact support on $\mathbb
{R}_{+}=\{x\in\mathbb{R}: x>0 \}$. Denote
%
%e2.15 #&#
%
\begin{equation}
\label{cn} c_{N}(\xi) = \int_{0}^{\infty}
\xi(s)b_{N}(s)\, ds.
\end{equation}

%th2.4 #&#
%
\begin{theorem}
\label{thfourierconv}
Consider the mesoscopic regime where $d_N=N^{\alpha}$ with any $\alpha
\in(0,1)$. Then for every $\xi\in C^{\infty}_0 (\mathbb{R}_{+})$
%
%e2.16 #&#
%
\begin{equation}
\label{whitenoiselimit}
\lim_{N \to\infty}\mathbb{E} \bigl\{
e^{-i\operatorname{Re}c_{N}(\xi)}
\bigr\} = \exp\biggl(-\frac{1}{4}\int_{0}^{\infty}\bigl|
\xi(s)\bigr|^{2}\, ds \biggr).
\end{equation}
Furthermore, for any finite number of $\xi_j \in C^{\infty}_0
(\mathbb{R}_{+})$, the vector $(c_{N}(\xi_{1}),\ldots,\break  c_{N}(\xi
_{m}))$ converges in distribution, as $N \to\infty$, to the centered
complex Gaussian vector $Z \in\mathbb{R}^{m}$ having relation matrix
$\mathbb{E}(ZZ^{\mathrm{T}})=0$ and covariance matrix $\Gamma=
\mathbb{E}(ZZ^{\dagger})$ given by
%
%e2.17 #&#
%
\begin{equation}
\Gamma_{j,k} = \int_{0}^{\infty}
\xi_{j}(s)\overline{\xi_{k}(s)}\, ds, \qquad j,k=1,
\ldots,m.
\end{equation}
\end{theorem}

\begin{pf}
See Section~\ref{sefouriercoeff}.\noqed
\end{pf}

%re2.5 #&#
%
\begin{remark}
As is often the case in random matrix theory, linear eigenvalue
statistics such as (\ref{cn}) have variance of the order of unity due
to strong correlations between the eigenvalues and converge to a
Gaussian random variable after centering. One would typically expect
that $\mathbb{E}\{c_{N}(\xi)\} = O(N/d_{N})$ as $N \to\infty$.
Instead, we find (see Section~\ref{sefouriercoeff}) that the
smoothness of $\xi$ and the rapid oscillations in (\ref{bee}) imply
$\mathbb{E}\{c_{N}(\xi)\} = O(d_{N}^{-1})$ as $N \to\infty$, and
thus, centering is not really needed.
\end{remark}

The rest of the paper is organized as follows.
Section~\ref{semesoproof} is devoted to the proof of Theorem~\ref{thmaintheorem}.
To do this, we begin by adapting the differential
identity used in \cite{K07} and then outline the relevant asymptotic
analysis of the Riemann--Hilbert problem, leaving estimation of all
error terms to Appendix~\ref{apriemann}. %26.02.NJS - Removed due to
%complaint from referee. It is found that the Szeg\"o function for the
%problem is ultimately responsible for the covariance structure
%\eqref{logcov}.
Section~\ref{sefouriercoeff} is devoted to proving the convergence of
the Fourier coefficients $b_{N}(s)$ to the white noise. In the final
section, we focus on the macroscopic scale and prove Theorem~\ref{thglobal}.

%s3 #&#
\section{Mesoscopic regime}
\label{semesoproof}
In this section, we prove Theorem~\ref{thmaintheorem}. Let us fix
\mbox{$m-1$} distinct times $\tau_1, \ldots, \tau_{m-1}$, $m\ge2$, and
consider the characteristic function
\[
\varphi_N ( \alpha_1, \ldots, \alpha_{m-1})=
\mathbb{E} \Biggl\{ \exp\Biggl(\sum\limits
_{k=1}^{m-1}
\alpha_{k} W^{(\eta)}_{N}(\tau_{k})
\Biggr) \Biggr\}
\]
of the random vector $(W^{(\eta)}_{N}(\tau_1), \ldots, W^{(\eta
)}_{N}(\tau_{m-1}))$. Our strategy will be to prove that $\varphi_N$
converges to the characteristic function of the multivariate Gaussian
distribution in the limit $N\to\infty$. Theorem~\ref{thmaintheorem}
will then follow by inspection of the quadratic form in the exponential.

To begin with, we will write the characteristic function $\varphi_N$
as the partition function of a matrix model with Gaussian weight,
modified by the singularities
%
%e3.1 #&#
%
\begin{equation}
\mu_{k}= \sqrt{2N} \biggl(x_{0}+\frac{\tau_{k}+i\eta}{d_{N}}
\biggr), \qquad\eta>0, %alteration 08/07/2014 changed the ordering
%here as
%per referee comment.
\end{equation}
where $k=1,\ldots,m$ and $\tau_{m}\equiv0$. A standard calculation
(changing variables of integration from ${\mathcal H}$ to the eigenvalues
and eigenvectors of ${\mathcal H}$ and integrating out the eigenvectors;
see, e.g., \cite{PS11}) yields
%
%e3.2 #&#
%
\begin{equation}\label{multint}\qquad
\varphi_N ( \alpha_1, \ldots, \alpha_{m-1}) =
\frac{1}{C}\int_{\mathbb{R}^{N}}\prod
_{j=1}^{N}w(x_{j})\prod
_{1 \leq i < j \leq
N}(x_{i}-x_{j})^{2}
\,dx_{1}\cdots dx_{N},
\end{equation}
where the weight function is given by
%
%e3.3 #&#
%
\begin{equation}
\label{weight} w(x) = e^{-x^{2}}\prod_{k=1}^{m}|x-
\mu_{k}|^{\alpha_{k}}, \qquad\operatorname{Im}(\mu_{k})
\neq0, \qquad k=1,\ldots,m
\end{equation}
and $\alpha_m=-\alpha_{1} - \cdots- \alpha_{m-1}$. Note the
discrepancy with the measure (\ref{dens}); for convenience we have
changed variables $x_{j} \to x_{j}/\sqrt{2N}$, the resulting
multiplicative constants cancelling each other out.

Our calculation will be guided by that of Krasovsky \cite{K07} who
treated a similar partition function, but only for the macroscopic
regime $d_{N}=1$ and $\eta=0$. In that case, the weight function
acquires Fisher--Hartwig singularities inside the spectral interval
$(-1,1)$. In contrast, our weight (\ref{weight}) possesses
singularities in the complex plane that merge toward the point $x_{0}$
on the spectral axis at rate $d_{N}$ as $N\to\infty$. Since this
merging process occurs sufficiently slowly [i.e., $d_{N} = o(N)$],
these singularities will not play a crucial role in the calculation.

A special feature of the weight function (\ref{weight}) is the \textit
{cyclic condition}
%
%e3.4 #&#
%
\begin{equation}
\label{cyclic} \sum_{k=1}^{m}
\alpha_{k}=0.
\end{equation}
This holds because the second term in (\ref{wntau}) is independent of
$\tau$. Our first step is to express the partition function (\ref
{multint}) in a form suitable for the computation of asymptotics.

%s3.1 #&#
\subsection{Orthogonal polynomials and differential identity}
The multiple integral in (\ref{multint}) is intimately connected to
the theory of orthogonal polynomials. Let
\[
p_{n}(x) = \chi_{n} \bigl(x^{n}+
\beta_{n}x^{n-1}+\gamma_{n}x^{n-2}+\cdots
\bigr), \qquad  n=0,1,2, \ldots,
\]
be orthogonal polynomials with respect to weight function\vspace*{1pt} 
$w(x)$:\break 
$\int_{-\infty}^{\infty}p_{m}(x) p_{n}(x)w(x)\,dx=\delta_{m,n}$.
When the $\alpha_{j}$'s are real and each $\alpha_{j}>-1/2$ we have
$w(x)\geq0$ and the existence of the polynomials $p_{n}(x)$ is well
known \cite{De99}. Then, as in \cite{K07}, the coefficients $\chi
_{n},\beta_{n}$ and $\gamma_{n}$ and the polynomials $p_{n}(x)$ are
defined for any $\{\alpha_{j}\}_{j=1}^{m} \in\mathbb{C}^{m}$ via
analytic\vspace*{1pt} continuation, provided each $\operatorname{Re}(\alpha_{j})>-1/2$.

Now, the partition function (\ref{multint}) can be written in terms of
the coefficients $\{\chi_{j}\}_{j=1}^{N}$ (see, e.g., \cite{Meh04})
%
%e3.5 #&#
%
\begin{equation}
\label{prodchi}
\varphi_N (\alpha_{1},\ldots,
\alpha_{m-1}) = \frac{N!}{C}\prod_{j=0}^{N-1}
\chi_{j}^{-2}.
\end{equation}
Thus, in principle, our problem is reduced to computing the asymptotics
of the orthogonal polynomials and related quantities with respect to
the weight $w(x)$.
%However, even if we had the asymptotics of $\chi_{N}$ as $N \to
%\infty$, we could only hope to reproduce the main term in the
%asymptotics of the product \eqref{prodchi}.
The crucial point observed in \cite{K07} is that by taking the
logarithmic derivative on both sides of (\ref{prodchi}) with respect
to any of the $\alpha_{j}$'s, the RHS can be written as a sum
involving only $O(m)$ terms, rather than $N$. To state the resulting
\textit{differential identity} we also need the following $2 \times2$
matrix involving the orthogonal polynomials and their Cauchy transforms:
%
%e3.6 #&#
%
\begin{equation}
\label{ymatrix}
Y(z) =
\pmatrix{ \displaystyle\chi_{N}^{-1}p_{N}(z)
& \displaystyle{\chi_{N}^{-1}\int_{-\infty}^{\infty}
\frac{p_{N}(x)}{x-z}\frac
{w(x)\,dx}{2\pi i}}\vspace*{6pt}
\cr
\displaystyle{-2\pi i
\chi_{N-1}p_{N-1}(z)} &\displaystyle{ -\chi_{N-1}\int
_{-\infty}^{\infty}\frac{p_{N-1}(x)}{x-z}w(x)\,dx} }.
\end{equation}

%le3.1 #&#
%
\begin{lemma}
For each $k=1,\ldots,m$, let $\mu_{k}$ in (\ref{weight}) be any
complex parameters satisfying $\operatorname{Im}(\mu_{k}) \neq0$ and
define $\alpha_{m+k}=\alpha_{k}$, $\mu_{m+k} = \overline{\mu
_{k}}$. Denoting by $'$ differentiation with respect to $\alpha_{j}$,
the following formula holds for any $j=1,\ldots,m$:
%
%e3.7 #&#
%e3.8 #&#
%
\begin{eqnarray}
(\log\varphi_N)'  &=&  -N(\log
\chi_{N}\chi_{N-1})'-2 \biggl(\frac
{\chi_{N-1}}{\chi_{N}}
\biggr)^{2} \biggl(\log\frac{\chi
_{N-1}}{\chi_{N}} \biggr)^{\prime}+2
\bigl( \gamma_{N}'-\beta_{N}\beta_{N}'
\bigr)
\nonumber
\\
\label{diffid}
&&{}+\frac{1}{2}\sum_{k=1}^{2m}
\alpha_{k} \bigl(Y_{11}(\mu_{k})'Y_{22}(
\mu_{k})-Y_{21}(\mu_{k})'Y_{12}(
\mu_{k})\\
&&{}+(\log\chi_{N}\chi_{N-1})'Y_{11}(
\mu_{k})Y_{22}(\mu_{k}) \bigr).\nonumber
\end{eqnarray}
\end{lemma}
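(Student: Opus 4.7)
The plan is to adapt Krasovsky's derivation of an analogous identity in \cite{K07} to our setting with complex Fisher--Hartwig singularities and a Gaussian factor. Starting from the representation \eqref{prodchi}, we have $(\log \varphi_N)' = -2\sum_{n=0}^{N-1}(\log \chi_n)'$. Let $P_n(x) = \chi_n^{-1} p_n(x)$ denote the monic orthogonal polynomial; then $\int P_n^2 w\,dx = \chi_n^{-2}$. Differentiating with respect to $\alpha_j$ and using that $\partial_{\alpha_j} P_n$ is a polynomial of degree at most $n-1$, and hence orthogonal to $P_n$ with respect to $w$, the $P_n(\partial_{\alpha_j} P_n)$ cross term drops out, leaving
\[
-2(\log \chi_n)' = \int_{-\infty}^{\infty} p_n(x)^2\, (\log w(x))'\, w(x)\,dx.
\]
Summing over $n$ and invoking the Christoffel--Darboux formula $\sum_{n=0}^{N-1} p_n(x)^2 = K_N(x,x)$ yields the master identity
\[
(\log \varphi_N)' = \int_{-\infty}^{\infty} K_N(x,x)\, (\log w(x))'\, w(x)\,dx.
\]

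Next, since $\mathrm{Im}\,\mu_k \neq 0$, I would write $|x-\mu_k|^{\alpha_k} = (x-\mu_k)^{\alpha_k/2}(x-\overline{\mu_k})^{\alpha_k/2}$ with branch cuts chosen off the real line, so that $(\log w)' = \tfrac12 \log(x-\mu_j) + \tfrac12 \log(x-\overline{\mu_j})$. The integrand in the master identity is then recast in terms of the matrix $Y(z)$: using that $Y_{12}(z)$ and $Y_{22}(z)$ are Cauchy transforms of $p_N w$ and $p_{N-1} w$ respectively, $K_N(x,x) w(x)$ equals an explicit combination of boundary values of $Y$ and $Y^{-1}$ on $\mathbb{R}$. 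Deforming the integration contour off the real axis into, say, the upper half-plane, the arc at infinity contributes the bulk terms $-N(\log \chi_N\chi_{N-1})'$, $-2(\chi_{N-1}/\chi_N)^2(\log \chi_{N-1}/\chi_N)'$ and $2(\gamma_N' - \beta_N\beta_N')$ via the large-$z$ expansion $Y(z) = (I + m_1 z^{-1} + m_2 z^{-2} + O(z^{-3}))\,\mathrm{diag}(z^N, z^{-N})$, whose subleading matrices $m_1, m_2$ are built precisely from $\chi_N, \chi_{N-1}, \beta_N, \gamma_N$; Gaussian decay of $w$ controls this arc. The branch-cut contributions at each $\mu_k$ and its conjugate then collectively produce
\[
\tfrac12 \sum_{k=1}^{2m}\alpha_k\bigl(Y_{11}(\mu_k)'Y_{22}(\mu_k) - Y_{21}(\mu_k)'Y_{12}(\mu_k) + (\log \chi_N\chi_{N-1})' Y_{11}(\mu_k) Y_{22}(\mu_k)\bigr),
\]
the $\alpha_j$-derivatives of $Y$ at these points entering through the $\alpha_j$-dependence of the Cauchy transforms $Y_{12}, Y_{22}$, whose jumps across $\mathbb{R}$ are $\alpha_j$-dependent through $w$.

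The main obstacle will be the careful bookkeeping of branch cuts, orientations, and signs: with $2m$ complex parameters $\mu_k$, branches of $\log(x-\mu_k)$ and of the factors $(x-\mu_k)^{\alpha_k/2}$ entering $w$ must be chosen consistently so that the contour deformation reproduces the right-hand side of \eqref{diffid} with exactly the stated coefficients. A minor secondary point is that the identity is first derived for $\mathrm{Re}(\alpha_k) > -1/2$, where $w \ge 0$ and the orthogonal polynomials are classically well defined, and is then extended to the full parameter range by analytic continuation, using that both sides are rational expressions in the $Y_{ij}(\mu_k)$ and the recurrence coefficients $\chi_N, \chi_{N-1}, \beta_N, \gamma_N$, all of which depend analytically on the $\alpha_k$.
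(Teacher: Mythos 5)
Your master identity is correct and matches the starting point in Krasovsky's Section 3: differentiating $\int P_n^2 w\,dx = \chi_n^{-2}$ with respect to $\alpha_j$ and noting that $\partial_{\alpha_j} P_n$ has degree at most $n-1$ gives $-2(\log\chi_n)' = \int p_n^2(\log w)' w\,dx$, and summing over $n$ with $\sum_{n=0}^{N-1}p_n^2 = K_N(x,x)$ yields $(\log\varphi_N)' = \int K_N(x,x)(\log w)'w\,dx$. The observation that the identity is first derived for $\re\alpha_k > -1/2$ and then analytically continued is also correct and is part of the argument.

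The gap is in the contour deformation step. The weight $w(z)$ carries the Gaussian factor $e^{-z^2}$, and $|e^{-z^2}| = e^{-\re(z^2)}$ \emph{grows} without bound as $|z|\to\infty$ in the sectors $|\arg z \mp \pi/2| < \pi/4$. Consequently, ``deforming the integration contour off the real axis into the upper half-plane'' does not yield a controlled arc at infinity, and the claim that ``Gaussian decay of $w$ controls this arc'' is false away from the real axis; in fact the integrand blows up there, and you cannot simultaneously have the arc ``contribute the bulk terms'' and be ``controlled.'' The derivation the paper points to in Sec.~3 of \cite{K07} avoids this by never leaving the real axis: one writes $K_N(x,x)$ via Christoffel--Darboux, $K_N(x,x) = \frac{\chi_{N-1}}{\chi_N}\bigl(p_N'(x)p_{N-1}(x)-p_N(x)p_{N-1}'(x)\bigr)$, integrates by parts on $\mathbb{R}$ (where $e^{-x^2}$ genuinely kills the boundary terms), and expresses the resulting Cauchy-type integrals $\int \frac{p_N(x)p_{N-1}(x)w(x)}{x-\mu_k}\,dx$ and their relatives directly in terms of $Y_{12}(\mu_k)$, $Y_{22}(\mu_k)$ and the large-$z$ coefficients $\chi_N,\chi_{N-1},\beta_N,\gamma_N$; the ``bulk'' terms come out of this algebra (specifically the large-$z$ expansion of $Y$ feeding into the integration-by-parts boundary data), not from an arc in the complex plane. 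Your sketch identifies the right endpoints (local contributions at $\mu_k$, $\overline{\mu_k}$ and coefficient data at $\infty$), but the mechanism as written would fail; it needs to be replaced by the real-axis integration-by-parts argument.
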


\begin{pf}
The proof follows from simple modifications of the arguments given in
Section~3 of \cite{K07}. In fact, further simplifications occur due to
the cyclic condition $\sum_{k=1}^{m}\alpha_{k}=0$ and the fact that
the singularities $\mu_{k}$ have nonzero imaginary part ($k=1,\ldots,m$).
\end{pf}

Note that $\chi_{N}$ and the coefficients $\beta_{N}$ and $\gamma
_{N}$ can be computed from the relations:
%
%e3.9 #&#
%e3.10 #&#
%
\begin{eqnarray}
Y_{11}(z) &=& z^{N}+
\beta_{N}z^{N-1}+\gamma_{N}z^{N-2}+\cdots,
\nonumber
\\[-8pt]
\label{coeffid}
\\[-8pt]
\nonumber
\chi_{N-1}^{2} &= &\lim_{z \to\infty}
\frac{iY_{21}(z)}{2\pi z^{N-1}}.
\end{eqnarray}
Therefore, our plan will be to compute the asymptotics of $Y(z)$ and
then, by making use of identities (\ref{coeffid}), evaluate the RHS of
(\ref{diffid}) to the desired accuracy in the limit as $N \to\infty
$. We will find that the error terms in the asymptotics are uniform in
the variables $\{\alpha_{k}\}_{k=1}^{m-1}$ belonging to a compact
subset of
%
%e3.11 #&#
%
\begin{equation}\label{omset}
\Omega= \bigl\{(\alpha_{1},\ldots,\alpha_{m-1}) \vert\operatorname{Re}(
\alpha_{k}) > -1/2,  k=1,\ldots,m-1 \bigr\}.
\end{equation}
This uniformity property then allows us to integrate the identity
(\ref{diffid}) recursively with respect to $\{\alpha_{k}\}
_{k=1}^{m-1}$ and obtain asymptotics for the characteristic function~(\ref{multint}). The asymptotics of $Y(z)$ in the limit $N\to\infty$
can be obtained by using an appropriate Riemann--Hilbert problem.
Although this technique is nowadays standard, for the reader's
convenience we will briefly summarise the necessary ingredients of the
corresponding calculation.

\subsection{The Riemann--Hilbert problem for $Y(z)$}

The relationship between orthogonal polynomials and Riemann--Hilbert
problems was established for general weights in \cite{FIK92} where it
was shown that $Y(z)$ solves the following\vadjust{\goodbreak} problem:
\begin{longlist}[3.]
\item[1.] $Y(z)$ is analytic in $\mathbb{C} \setminus\mathbb{R}$.
\item[2.] On the real line there is a jump discontinuity
%
%e3.12 #&#
%
\begin{equation}
Y_{+}(x) = Y_{-}(x) %
\pmatrix{ 1 & w(x)
\vspace*{2pt}
\cr
0 & 1 }, \qquad x \in\mathbb{R},
\end{equation}
where $Y_{+}(x)$ and $Y_{-}(x)$ denote the limiting values of $Y(z)$ as
$z$ approaches the point $x \in\mathbb{R}$ from above $(+)$ or below
$(-)$.
\item[3.] Near $z=\infty$, we have the following asymptotic behaviour:
%
%e3.13 #&#
%
\begin{equation}\label{Yasympt}
Y(z) = \biggl(I+O \biggl(\frac{1}{z} \biggr) \biggr)z^{N\sigma_{3}}.
\end{equation}
\end{longlist}
Here, $\sigma_{3}$ is the third Pauli matrix and serves as a
convenient notational tool. By definition of the matrix exponential,
the notation in (\ref{Yasympt}) has the meaning
%
%e3.14 #&#
%
\begin{equation}
z^{N\sigma_{3}} = %
\pmatrix{ z^{N} & 0\vspace*{2pt}
\cr
0 &
z^{-N} }.
\end{equation}

One can verify directly that $Y(z)$ of (\ref{ymatrix}) does indeed
solve this Riemann--Hilbert problem, while the uniqueness of this
solution can be deduced from the observation that $\det Y(z) \equiv1$,
in conjunction with the Liouville theorem. Further details regarding
existence and uniqueness of the problem can be found in \cite{De99}.

In order to obtain asymptotics as $N \to\infty$, we will perform a
sequence of transformations to our initial Riemann--Hilbert problem
known as the \textit{Deift--Zhou steepest descent} (see, e.g., \cite
{De99} and \cite{DKMVZ99}). The purpose of these transformations is to
identify a ``limiting'' problem that can be solved with elementary
functions, giving the leading order asymptotics to $Y(z)$. For the
reader's convenience, we briefly describe the key points underlying
these transformations:
\begin{longlist}[3.]
\item[1.] The first transformation $Y \to T$ normalizes the unsatisfactory
asymptotic behaviour in the third condition, equation (\ref{Yasympt}).
This comes with the cost that the entries of the jump matrix for $T(z)$
on the interval $(-1,1)$ are now oscillating in $N$ and do not have a
limit as $N \to\infty$.
\item[2.] The second transformation $T \to S$ aims to remove these
oscillations by splitting the contour $(-1,1)$ into lens shaped
contours where now the jump matrices are exponentially close to the
identity. For our particular \textit{mesoscopic} problem, we need the
lenses to pass below the singularities for each $k=1,\ldots,m$, so
that their distance from $(-1,1)$ is of order $O(d_{N}^{-1})$ (see
Figure~\ref{figcontour}).
\item[3.] Now it turns out that the jump matrices for $S$ tend to the
identity as $N \to\infty$, except on the contour $(-1,1)$. But the
jump across $(-1,1)$ is of a special form that can be solved exactly in
terms of elementary functions. This solution, denoted $P_{\infty}(z)$,
gives the leading order contribution to the asymptotics in the required
regions of the complex plane.
\end{longlist}

%f1
%f1 #&#
%
\begin{figure}

\includegraphics{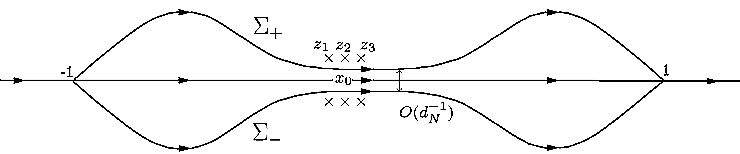}

\caption{The contour $\Sigma$ for the $S$ Riemann--Hilbert problem
with $m=3$. The crosses depict the $3$ singularities and their complex
conjugates, of distance $O(d_{N}^{-1})$ from the point $x_{0} \in
(-1,1)$. The lenses $\Sigma_{\pm}$ pass between the real line and the
singularities into the points $\pm1$.}
\label{figcontour}
\end{figure}

In Section~\ref{seasymptotics}, we will show that the asymptotics
obtained in this way lead directly to Theorem~\ref{thmaintheorem}.
However, to complete the proof, one has to show that the conclusion of
$(3)$, namely that $S(z) \sim P_{\infty}(z)$ as $N \to\infty$, is
really correct. This may be regarded as the most technical part of the
Deift--Zhou method. The main problem is that although the jump matrix
for $S(z)$ converges to that of $P_{\infty}(z)$, this convergence is
not uniform near the edges $z=\pm1$. To remedy this, local solutions
known as \textit{parametrices} have to be constructed near these
points, and then matched to leading order with the so-called outer
parametrix $P_{\infty}(z)$. These final technical issues will be
addressed in Appendix~\ref{apriemann}.

%s3.3 #&#
\subsection{$T$ and $S$ transformations of the Riemann--Hilbert problem}
\label{sesmatrix}
The $T$ transformation is performed in the usual way. First, we define
the $g$-function:
%
%e3.15 #&#
%
\begin{equation}
\label{gfun}
g(z) = \int_{-1}^{1}\log(z-s)
\rho(s)\,ds, \qquad z \in\mathbb{C}\setminus(-\infty,1],
\end{equation}
where throughout we take the principal branch of the logarithm. Here
and below, $\rho(s) = (2/\pi)\sqrt{1-s^{2}}$ denotes the limiting
density of eigenvalues. The $Y \to T$ transformation is then given by
the formula
%
%e3.16 #&#
%
\begin{equation}
Y(z\sqrt{2N}) = (2N)^{N\sigma_{3}}e^{Nl\sigma
_{3}/2}T(z)e^{N(g(z)-l/2)\sigma_{3}},
\end{equation}
where $l=-1-2\log(2)$. Notice that we have rescaled the
Riemann--Hilbert problem so that the singularities of the corresponding
weight function are of order $O(1)$ as $N \to\infty$, so that from
now on we deal with singularities of the form
%
%e3.17 #&#
%
\begin{equation}\label{sings}
z_{k} = \frac{\mu_{k}}{\sqrt{2N}} = x_{0}+\frac{\tau_{k}+i\eta
}{d_{N}}.
\end{equation}

The resulting jump matrix for $T(z)$ can now be computed from the
standard properties of the $g$-function:
%
%e3.18 #&#
%e3.19 #&#
%e3.20 #&#
%
\begin{eqnarray}
 g_{+}(x) + g_{-}(x)-2x^{2}-l & = &0,
\qquad x \in(-1,1),
\nonumber\\
 \label{gprops}
g_{+}(x) + g_{-}(x)-2x^{2}-l &<& 0, \qquad x
\in\mathbb{R}\setminus[-1,1],
\\
g_{+}(x)-g_{-}(x) &=&
\cases{ 2\pi i, & \quad
$x\leq-1$,\vspace*{3pt}
\cr
\displaystyle 2\pi i \int_{x}^{1}
\rho(s)\,ds, & \quad$x \in[-1,1]$,\vspace*{3pt}
\cr
0, & $\quad x \geq1$.}\nonumber
\end{eqnarray}
In addition, since $g(z) \sim\log(z)$ as $z \to\infty$, we have
$e^{Ng(z)\sigma_{3}} \sim z^{N\sigma_{3}}$. Thus, one easily verifies
that $T(z)$ is normalized at $z=\infty$. We now have the following
Riemann--Hilbert problem for $T(z)$:
\begin{longlist}[3.]
\item[1.] $T(z)$ is analytic in $\mathbb{C}\setminus\mathbb{R}$.
\item[2.] We have the jump condition
%
%e3.21 #&#
%e3.22 #&#
%
\begin{eqnarray}
T_{+}(x) &=& T_{-}(x) %
\pmatrix{ e^{-N(g_{+}(x)-g_{-}(x))} & \displaystyle{\prod
_{k=1}^{m}|x-z_{k}|^{\alpha_{k}}}
\vspace*{3pt}
\cr
0 & e^{N(g_{+}(x)-g_{-}(x))} },
\nonumber
\\[-8pt]
\label{tjump1}
\\[-8pt]
\eqntext{\displaystyle x \in(-1,1),}
\\
 T_{+}(x) &=& T_{-}(x) %
\pmatrix{ 1 & \displaystyle\prod_{k=1}^{m}|x-z_{k}|^{\alpha
_{k}}e^{N(g_{+}(x)+g_{-}(x)-2x^{2}-l)}
\vspace*{3pt}
\cr
0 & 1},
\nonumber
\\[-8pt]
\label{tjump2}
\\[-8pt]
\eqntext{ \displaystyle x \in\mathbb{R}\setminus[-1,1].}
\end{eqnarray}
\item[3.] $T(z) = I+O(z^{-1})$ as $z \to\infty$.
\end{longlist}
%
%alteration 4/07/2014
We see that although the problem for $T(z)$ is normalized at $\infty$;
the jump matrix~(\ref{tjump1}) on $(-1,1)$ has oscillatory diagonal
entries that not have a limit as $N \to\infty$. The Deift--Zhou
steepest descent procedure remedies this situation by splitting the
contour $(-1,1)$ into ``lenses'' in the complex plane (see Figure~\ref
{figcontour}), transforming the unwanted oscillations into
exponentially decaying matrix elements.

This procedure is facilitated by the factorization of the jump matrix
on $(-1,1)$:
\begin{eqnarray*}
&& \pmatrix{ e^{-Nh(x)} & \omega(x)\vspace*{2pt}
\cr
0 &
e^{Nh(x)}} \\
&&\qquad= \pmatrix{ 1 & 0\vspace*{2pt}
\cr
\omega(x)^{-1}e^{Nh(x)}
& 1} \pmatrix{ 0 & \omega(x)^{-1}\vspace*{2pt}
\cr
-
\omega(x)^{-1} & 0 } \pmatrix{ 1 & 0\vspace*{2pt}
\cr
\omega(x)^{-1}e^{-Nh(x)} & 1 },
\end{eqnarray*}
where
%e3.23 #&#
%e3.24 #&#
%
\begin{eqnarray}
\label{omdef}
\omega(x) &=& \prod_{k=1}^{m}|x-z_{k}|^{\alpha_{k}},
\\
\label{hdef} h(x) &=& g_{+}(x)-g_{-}(x) = -2\pi i \int
_{1}^{x}\rho(y)\,dy.
\end{eqnarray}

The latter objects (\ref{omdef}) and (\ref{hdef}) possess analytic
continuations into the lens shaped regions depicted in Figure~\ref{figcontour}. For the weight $\omega(x)$, we have
%
%e3.25 #&#
%
\begin{equation}
\label{omegaweight}
\omega(z) = %\prod_{k=1}^{m}\left((z-
%\re(z_{k}))^{2}+ \im(z_{k})^{2}\right)^{
%\alpha_{k}/2}=
\prod
_{k=1}^{m-1} \biggl[\frac{(z-x_{0}-\tau_{k}/d_{N})^{2}+(\eta
/d_{N})^{2}}{(z-x_{0})^{2}+(\eta/d_{N})^{2}}
\biggr]^{\alpha_{k}/2},
\end{equation}
where throughout we take the principal branch of the roots. This
function is analytic for all $z$ such that the inequality
%
%e3.26 #&#
%
\begin{equation}
\label{ineqanalytic}
\bigl(\operatorname{Re}(z)-\operatorname
{Re}(z_{k}) \bigr)^{2} > \bigl(
\operatorname{Im}(z_{k}) \bigr)^{2}- \bigl(\operatorname{Im}(z)
\bigr)^{2}
\end{equation}
is satisfied for every $k=1,\ldots,m$. One easily verifies that for
$x_{0} \in(-1+\delta,1-\delta)$, the inequality (\ref
{ineqanalytic}) holds for any $z$ chosen from the interior region
bounded by the lips $\Sigma_{\pm1}$ and the discs $z \in\partial
B_{\pm1}(\delta)$ of sufficiently small radius (see Figure~\ref
{figcontour}). Finally, let $h(z)$ denote the analytic continuation of
(\ref{hdef}) to $\mathbb{C}\setminus((-\infty,-1]\cup[1,\infty
))$. We are now ready to define the $T \to S$ transformation. Let
%
%e3.27 #&#
%
\begin{equation}
S(z) = %
\cases{ T(z), \qquad \mbox{for $z$ outside the lenses},
\vspace*{3pt}
\cr
T(z)
\pmatrix{ 1 & 0\vspace*{3pt}
\cr
-
\omega(z)^{-1}e^{-Nh(z)} & 1}, \vspace*{3pt}\cr
\qquad\qquad\hspace*{5pt}\mbox{for $z$ in the upper
part of the lenses},\vspace*{2pt}
\cr
T(z) %
\pmatrix{ 1 & 0
\vspace*{3pt}
\cr
\omega(z)^{-1}e^{Nh(z)} & 1}, \vspace*{3pt}\cr
\qquad\qquad\hspace*{5pt}\mbox{for
$z$ in the lower part of the lenses.}}
\end{equation}
Now we get the following Riemann--Hilbert problem for $S(z)$:
\begin{longlist}[3.]
\item[1.] $S(z)$ is analytic in $\mathbb{C}\setminus\Sigma$ where
$\Sigma= \Sigma_{+}\cup\mathbb{R}\cup\Sigma_{-}$.
\item[2.] $S(z)$ has the following jumps on $\Sigma$:
\begin{eqnarray*}
S_{+}(x) &=& S_{-}(x) %
\pmatrix{ 1 & 0
\vspace*{2pt}
\cr
\omega(x)^{-1}e^{\mp Nh(x)} & 1}, \qquad x \in
\Sigma_{\pm},
\\
S_{+}(x) &=& S_{-}(x) %
\pmatrix{ 0 & \omega(x)
\vspace*{2pt}
\cr
-\omega(x)^{-1} & 0 }, \qquad x \in(-1,1),
\\
S_{+}(x) &=& S_{-}(x) %
\pmatrix{ 1 &
\omega(x)e^{N(g_{+}(x)+g_{-}(x)-2x^{2}-l)}\vspace*{2pt}
\cr
0 & 1 }, \qquad x \in\mathbb{R}
\setminus[-1,1].
\end{eqnarray*}
\item[3.] $S(z) = I+O(z^{-1})$ as $z \to\infty$.
\end{longlist}

At this point in the asymptotic analysis, it becomes clear that the
mesoscopic regime under consideration becomes important. In order to
obtain asymptotics, it is essential that the jump matrix for $S(z)$
approaches the identity as $N \to\infty$ for $z\in\Sigma_{\pm}$.
In the\vspace*{1pt} \hyperref[app]{Appendix} (see Proposition~\ref{propdeltasigr}), we will see
that $|e^{\mp Nh(z)}| = O(e^{-c_{1}({N}/{d_{N}})})$ as $N \to\infty
$ uniformly on $\Sigma_{\pm}\setminus(B_{1}(\delta)\cup
B_{-1}(\delta))$. Notice that such a bound fails when one approaches
the critical situation $d_{N}=N$ corresponding to the \textit{local}
or \textit{microscopic} regime. It is precisely at this scale that one
would \textit{not} expect the appearance of a Gaussian process in the
limit $N \to\infty$.

Therefore, in the \textit{mesoscopic} regime it is reasonable to
expect that in the limit $N \to\infty$ we may neglect the jumps on
$\Sigma_{\pm}\cup(\mathbb{R}\setminus[-1,1])$ and approximate
$S(z)$ by a Riemann--Hilbert problem with jumps only on the interval
$(-1,1)$. This approximation will be valid only in the region
$U_{\infty}=\mathbb{C}\setminus(B_{1}(\delta)\cup B_{-1}(\delta))$
and will give rise to an error that is quantified in Appendix~\ref{apriemann}.

%s3.4 #&#
\subsection{Limiting Riemann--Hilbert problem: Parametrix in
\texorpdfstring{$U_{\infty}$}{Uinfty}}
\label{seouter}
Before we perform the final transformation $S \to R$ of the
Riemann--Hilbert problem, we must construct parametrices in the
appropriate regions of the complex plane. We saw in the last section
how the jump matrices for $S(z)$ converge to the identity as $N \to
\infty$, except on $[-1,1]$. Therefore, outside the lenses and the
discs, we expect the solution to the following problem to give a good
approximation to $S(z)$ for large~$N$:
\begin{longlist}[3.]
\item[1.] $P_{\infty}(z)$ is analytic in $\mathbb{C}\setminus[-1,1]$.
\item[2.] We have the jump condition
%
%e3.28 #&#
%
\begin{equation}
P_{\infty,+}(x)=P_{\infty,-}(x) %
\pmatrix{ 0 & \omega(x)
\cr
-\omega(x)^{-1} & 0}, \qquad x \in(-1,1).
\end{equation}
\item[3.] $P_{\infty}(z) = I+O(z^{-1})$ as $z \to\infty$.
\end{longlist}

This problem has the advantage that it has a completely explicit
solution. The solution, as obtained in \cite{KMVaV04}, is given by
%
%e3.29 #&#
%
\begin{eqnarray}
 P_{\infty}(z) &=& \frac{1}{2}(\mathcal{D}_{\infty
})^{\sigma_{3}}
\pmatrix{ a+a^{-1} & -i \bigl(a-a^{-1} \bigr)
\vspace*{2pt}
\cr
i \bigl(a-a^{-1} \bigr) & a+a^{-1}}
\mathcal{D}(z)^{-\sigma_{3}},
\nonumber
\\[-8pt]
\label{pinf}
\\[-8pt]
\nonumber
 a(z) &=& \frac{(z-1)^{1/4}}{(z+1)^{1/4}},
\end{eqnarray}
where $\mathcal{D}(z)$ is the Szeg\"o function
%
%e3.30 #&#
%
\begin{equation}
\label{sz}
\mathcal{D}(z) = \exp\biggl(\frac{\sqrt{z+1}\sqrt
{z-1}}{2\pi}\int
_{-1}^{1}\frac{\log\omega(x)}{\sqrt
{1-x^{2}}}\frac{dx}{z-x}
\biggr)
\end{equation}
and
%
%e3.31 #&#
%
\begin{equation}
\label{dinf} \mathcal{D}_{\infty} = \lim_{z \to\infty}
\mathcal{D}(z) = \exp\biggl(\frac{1}{2\pi}\int_{-1}^{1}
\frac{\log\omega(x)}{\sqrt
{1-x^{2}}}\,dx \biggr).
\end{equation}
Recalling the definition of the weight $\omega(x)$ in (\ref{omdef}),
the integrals in (\ref{sz}) can be calculated explicitly by extending
the procedure outlined in \cite{K07} to the case of complex singularities.

As we shall see in the next subsection, the Szeg\"o function $\mathcal
{D}(z)$ will turn out to be the key ingredient in deriving the
logarithmic covariance structure in (\ref{logcov}).

%s3.5 #&#
\subsection{Asymptotics of the polynomials and proof of
Theorem~\texorpdfstring{\protect\ref{thmaintheorem}}{2.2}}
\label{seasymptotics}
We are now ready to present the leading order asymptotics $N \to\infty
$ of the $Y$-matrix in (\ref{ymatrix}), leaving the technical matters
of estimation of errors and the final transformation of the
Riemann--Hilbert problem to Appendix~\ref{apriemann}. Our aim in this
subsection is to prove Theorem~\ref{thmaintheorem} using these asymptotics.

Tracing back the transformations $S \to T \to Y$, we find that
%
%e3.32 #&#
%
\begin{equation}
\label{yas} Y(z\sqrt{2N}) = (2N)^{N\sigma_{3}/2}e^{Nl\sigma
_{3}/2}S(z)e^{N(g(z)-l/2)\sigma_{3}}.
\end{equation}
%
%alteration 05/07/2014
According to (\ref{diffid}), we need the asymptotics for $Y(z)$ in two
different regions of the complex plane, near $z=\infty$ in the first
line of (\ref{diffid}) and at $z=z_{k}$ in the second line. In the
following proposition, let $\mathcal{A}$ denote the bounded subset of
$\mathbb{C}$ enclosed by the lenses $\Sigma_{\pm}$ and the discs
$\partial B_{\pm1}(\delta)$.
%
%pr3.2 #&#

\begin{proposition}
\label{11111}
Consider the Riemann--Hilbert problems $S(z)$ and $P_{\infty}(z)$ from
Sections~\ref{sesmatrix} and \ref{seouter}, respectively. Then the
following asymptotics hold as $N \to\infty$:
%
%e3.33 #&#
%
\begin{equation}\qquad
S(z) = \biggl(I+\frac{\tilde{R}_{1}(z)}{N}+O \biggl(\frac
{1}{Nd_{N}} \biggr)+O \bigl(
\log(d_{N})e^{-c_{1}({N}/{d_{N}})} \bigr) \biggr)P_{\infty}(z),
\label{zkas}
\end{equation}
uniformly for all $z \in\mathbb{C}\setminus\mathcal{A}$. The
function $\tilde{R}_{1}(z)$ has an asymptotic expansion of the form
$\tilde{R}_{1}(z) = (A/z+B/z^{2}+O(z^{-3}))$ as $z \to\infty$ where
$c_{1}$ is a positive constant depending only on $\delta$ and $\eta$ and
%
%e3.34 #&#
%
\begin{equation}
A = %
\pmatrix{ 0 & i/24\vspace*{2pt}
\cr
i/24 & 0 }, \qquad B =
\pmatrix{ -1/48 & 0\vspace*{2pt}\cr 0 & 1/48 }.
\end{equation}
\end{proposition}

\begin{pf}
See Appendix~\ref{apriemann}.
\end{pf}
%
%re3.3 #&#

\begin{remark}
\label{reuniformity}
The\vspace*{1pt} error terms in (\ref{zkas}) are uniform in the parameters $\{
\alpha_{k}\}_{k=1}^{m-1}$ belonging to $\Omega$ [cf. (\ref{omset})],
$\{\tau_{k}\}_{k=1}^{m-1}$ belonging\vspace*{1pt} to a compact subset of $\mathbb
{R}$ and $x_{0}$ belonging to a compact subset of $(-1+\delta,1-\delta)$.
Furthermore,\vspace*{1pt} every such error term is an analytic function in the
variables $\{\alpha_{k}\}_{k=1}^{m-1}$ whose derivatives with respect
to $\alpha_{j}$ have the same order in $N$ and have the same
uniformity property described above. Hence, in the remainder of this
section it will be implicit that the error terms involved are of this form.
\end{remark}

Now inserting the above asymptotics (\ref{zkas}) into the differential
identity (\ref{diffid}), we obtain:
%
%pr3.4 #&#

\begin{proposition}\label{prop}
Let $\varphi_{N}$ denote the characteristic function of the\break stochastic
process $W^{(\eta)}_{N}(\tau)$ defined in (\ref{multint}). Then in
the limit $N \to\infty$, we have
%
%e3.35 #&#
%e3.36 #&#
%e3.37 #&#
%
\begin{eqnarray}
\varphi_{N}(\alpha_{1},\ldots,
\alpha_{m-1}) &=& \exp\Biggl(N\sum_{k=1}^{m-1}
\alpha_{k} \bigl(\operatorname{Re}\bigl(g(z_{k}) \bigr
)-\operatorname{Re}
\bigl(g(z_{m}) \bigr) \bigr)\nonumber
\\
\label{charasymptotics}
&&{}+\sum_{k,j=1}^{m-1}\frac{\alpha_{k}\alpha_{j}}{2}
\bigl(\phi^{(\eta)}_{0}(\tau_{k}) +
\phi^{(\eta)}_{0}(\tau_{j})-\phi
^{(\eta)}_{0}(\tau_{k}-\tau_{j}) \bigr)
\\
\nonumber
&&{}+O \bigl(d_{N}^{-1} \bigr)+ O \biggl(N
\log(d_{N}) \exp\biggl(-c_{1}\frac{N}{d_{N}} \biggr)
\biggr) \Biggr),
\end{eqnarray}
where $g(z)$ is defined in (\ref{gfun}) and $\phi^{(\eta)}_{0}(\tau
)$ in (\ref{covphi}). The asympotics in (\ref{charasymptotics}) hold
uniformly in the same sense described in Remark~\ref{reuniformity}.
\end{proposition}

%re3.5 #&#

\begin{remark}
Notice that the asymptotics in (\ref{charasymptotics}) consist of both
\textit{global} error terms, which become large when $d_{N} \sim1$
and \textit{local} error terms, which become large when $d_{N} \sim
N$. Throughout the following proof, we will write $e_{N}$ for the local
error term of order
%
%e3.38 #&#
%
\begin{equation}
e_{N} = \log(d_{N})\exp\biggl(-c_{1}
\frac{N}{d_{N}} \biggr).
\end{equation}
\end{remark}

\begin{pf*}{Proof of Proposition~\protect\ref{prop}}
%alteration 05/07/2014
We remind the reader that the prime $'$ always denotes differentiation
with respect to $\alpha_{j}$. We begin by considering the second line
of (\ref{diffid}). Taking into account $\alpha_{m}=-(\alpha
_{1}+\cdots+\alpha_{m-1})$, we insert (\ref{zkas}) into (\ref{yas})
and make use of the explicit formula (\ref{pinf}) for $P_{\infty
}(z)$. Straightforward calculation then gives
%
%e3.39 #&#
%e3.40 #&#
%e3.41 #&#
%
\begin{eqnarray}
&& Y_{11}(\sqrt{2N}z_{k})'Y_{22}(
\sqrt{2N}z_{k})-Y_{21}(\sqrt{2N}z_{k})'Y_{12}(
\sqrt{2N}z_{k})\nonumber
\\
&&\qquad= \bigl(P_{\infty}(z_{k})
\bigr)'_{11} \bigl(P_{\infty}(z_{k})
\bigr)_{22}- \bigl(P_{\infty
}(z_{k})
\bigr)'_{21} \bigl(P_{\infty}(z_{k})
\bigr)_{12}
\nonumber
\\[-8pt]
\label{diffofpinfs}
\\[-8pt]
\nonumber
&&\qquad\quad{}+O \bigl(N^{-1} \bigr)+O(e_{N})
\\
\label{diffofys}
&&\qquad= C(z_{m},z_{k})-C(z_{j},z_{k})+O
\bigl(d_{N}^{-1} \bigr)+O(e_{N}),
\end{eqnarray}
where we introduced
%
%e3.42 #&#
%
\begin{equation}
\label{phifn}
C(\mu,z) = \frac{\sqrt{z+1}\sqrt{z-1}}{2\pi}\int_{-1}^{1}
\frac
{\log|x-\mu|}{\sqrt{1-x^{2}}}\frac{dx}{z-x},
\end{equation}
and (\ref{diffofys}) was obtained from (\ref{diffofpinfs}) using the
estimate $\mathcal{D}_{\infty}=1+O(d_{N}^{-1})$. Since
$C(z_{j},\overline{z_{k}}) = \overline{C(z_{j},z_{k})}$, we find from
(\ref{diffofys}) that
%
%e3.43 #&#
%e3.44 #&#
%e3.45 #&#
%
\begin{eqnarray}
\label{ysum} &&\hspace*{6pt}\qquad\frac{1}{2}\sum_{k=1}^{2m}
\alpha_{k} \bigl(Y_{11}(\sqrt{2N}z_{k})'Y_{22}(
\sqrt{2N}z_{k})-Y_{21}(\sqrt{2N}z_{k})'Y_{12}(
\sqrt{2N}z_{k}) \bigr)
\\
\label{ysum2} &&\hspace*{6pt}\qquad\qquad=\sum_{k=1}^{m}
\alpha_{k} \bigl(\operatorname{Re}\bigl(C(z_{m},z_{k})
\bigr)-\operatorname{Re}\bigl(C(z_{j},z_{k}) \bigr) \bigr)+O
\bigl(d_{N}^{-1} \bigr)+O(e_{N})
\\
\label{ysum3} &&\hspace*{6pt}\qquad\qquad=\sum_{k=1}^{m-1}
\alpha_{k} \bigl(\phi^{(\eta)}_{0}(\tau
_{k})+\phi^{(\eta)}_{0}(\tau_{j})-
\phi^{(\eta)}_{0}(\tau_{k}-\tau_{j})
\bigr)+O \bigl(d_{N}^{-1} \bigr)+O(e_{N}).
\end{eqnarray}
To obtain (\ref{ysum3}) from (\ref{ysum2}), we used the formula
(\ref{szasy2}) to compute the asymptotics of $\operatorname
{Re}(C(z_{j},z_{k}))$
and used that $\alpha_{m}=-(\alpha_{1}+\cdots+\alpha_{m-1})$.

Now let us compute the asymptotics of the coefficients $\beta_{N}$,
$\gamma_{N}$ and $\chi_{N-1}$ defined in (\ref{coeffid}) and
appearing in the first line of (\ref{diffid}). As usual, these
quantities are all obtained by expanding all $z$-dependent quantities
appearing in (\ref{yas}) in powers of $1/z$. First, the Szeg\"o
function (\ref{sz}) satisfies $\mathcal{D}(z) = \mathcal{D}_{\infty
}(1+\mathcal{D}_{1}/z+(\mathcal{D}_{1}^{2}/2+\mathcal
{D}_{2})/z^{2}+O(z^{-3}))$ as $z \to\infty$, where
%alterations here 07/07/2014 - typos
%
%e3.46 #&#
%
\begin{eqnarray}
\mathcal{D}_{1} &=& -\frac{1}{2}\sum
_{k=1}^{m}\alpha_{k}\operatorname{Re}\biggl(
\frac{1}{z_{k}+\sqrt{z_{k}+1}\sqrt{z_{k}-1}} \biggr),
\nonumber
\\[-8pt]
\label{d2}
\\[-8pt]
\nonumber
\mathcal{D}_{2} &=& -\frac{1}{8}\sum
_{k=1}^{m}\alpha_{k}\operatorname{Re}\biggl(
\frac{1}{(z_{k}+\sqrt{z_{k}+1}\sqrt{z_{k}-1})^{2}} \biggr),
\end{eqnarray}
and second, use of the definitions (\ref{pinf}) and (\ref{gfun})
shows that for $z \to\infty$
%
%e3.47 #&#
%
\begin{equation}\qquad
g(z) = \log(z)-\frac{1}{8z^{2}}+O \bigl(z^{-4} \bigr), \qquad
a(z)=1- \frac
{1}{2z}+\frac{1}{8z^{2}}+O \bigl(z^{-3} \bigr).
\end{equation}

Then expanding (\ref{zkas}) at $z=\infty$, we can compare with (\ref
{coeffid}) and obtain
\begin{eqnarray*}
\beta_{N}&=& \sqrt{2N} \biggl(-\mathcal{D}_{1}+
\frac{A_{11}}{N}+O \biggl(\frac{1}{Nd_{N}} \biggr)+O(e_{N}) \biggr),
\\
\gamma_{N}&=& 2N \biggl(1/8-N/8+\mathcal{D}_{1}^{2}/2-
\mathcal{D}_{2}+\frac{B_{11}-A_{11}\mathcal{D}_{1}-iA_{12}/2}{N}\\
&&{}+O
\biggl(\frac{1}{Nd_{N}}
\biggr)+O(e_{N}) \biggr),
\\
\chi_{N-1}^{2}&=& \frac{2^{N-1}}{\sqrt{\pi}(N-1)!} \biggl(\frac
{1}{\mathcal{D}_{\infty}^{2}}+
\frac{1}{N} \biggl(\frac{1}{12\mathcal
{D}_{\infty}^{2}}+2iA_{21} \biggr)+O
\biggl( \frac{1}{Nd_{N}} \biggr)+O(e_{N}) \biggr).
\end{eqnarray*}
A similar computation shows that the asymptotics of $\chi^{2}_{N}$ are
given by
%
%e3.48 #&#
%
\begin{equation}\qquad
\chi^{2}_{N} = \frac{2^{N}}{\sqrt{\pi}N!} \biggl(\frac{1}{\tilde
{\mathcal{D}}^{2}_{\infty}}+
\frac{1}{N} \biggl(\frac{1}{12\tilde
{\mathcal{D}}^{2}_{\infty}}+2iA_{12} \biggr)+O
\biggl( \frac
{1}{Nd_{N}} \biggr)+O(e_{N}) \biggr),
\end{equation}
where $\tilde{\mathcal{D}}_{\infty}$ denotes the quantity (\ref
{dinf}) with rescaled singularities $\tilde{z}_{k} =\break  \sqrt
{2N/(2N+2)}z_{k}$. This rescaling is necessary when estimating $\chi
^{2}_{N}$, because without it one obtains asymptotics with respect to
the weight $w(x)=\prod_{j}|x-\sqrt{2N+2}z_{k}|^{\alpha_{k}}$.
Cumbersome though routine manipulations with the above asymptotics yield
%
%e3.49 #&#
%e3.50 #&#
%
\begin{eqnarray}
\qquad-N(\log\chi_{N}\chi_{N-1})^{\prime}&=& 2N
\bigl(C(z_{j},\infty)-C(z_{m},\infty) \bigr)+O
\bigl(d_{N}^{-1} \bigr)+O(Ne_{N}),
\nonumber
\\[-8pt]
\label{meanterm}
\\[-8pt]
\nonumber
2 \bigl(\gamma_{N}'-\beta_{N}
\beta_{N}' \bigr)&=& -4N\mathcal{D}_{2}'+O
\bigl(d_{N}^{-1} \bigr)+O(Ne_{N}),
\end{eqnarray}
and
%
%e3.51 #&#
%e3.52 #&#
%
\begin{eqnarray}
%
%\begin{split}
(\log\chi_{N}\chi_{N-1})'Y_{11}(
\sqrt{2N}z_{k})Y_{22}(\sqrt{2N}z_{k}) &=& O
\bigl(d_{N}^{-1} \bigr)+O(e_{N}),
\nonumber
\\[-8pt]
\label{smallterms}
\\[-8pt]
\nonumber
2 \biggl(\frac{\chi_{N-1}}{\chi_{N}} \biggr)^{2} \biggl(\log
\frac
{\chi_{N-1}}{\chi_{N}}
\biggr)^{\prime} &=&  O \bigl(d_{N}^{-1}
\bigr)+O(e_{N}),
\end{eqnarray}
where we introduced
%
%e3.53 #&#
%e3.54 #&#
%
\begin{eqnarray}
%alteration 07/07/2014
C(\mu,\infty) &=& \lim_{z \to\infty}C(\mu,z) =
\frac{1}{2\pi
}\int_{-1}^{1}\frac{\log|x-\mu|}{\sqrt{1-x^{2}}}
\,dx
\\
\label{phiinf}
&=& \frac{1}{2}\log|z+\sqrt{z+1}\sqrt{z-1}|-\frac{1}{2}\log(2).
\end{eqnarray}

%Hence the biggest contribution to the asymptotics in \eqref{diffid}
%come from the terms \eqref{meanterm} and \eqref{ysum3}, all others
%being of order $O(d_{N}^{-1})$ as $N \to\infty$. Indeed, it is easy
%to see that the terms \eqref{meanterm} contribute to the mean of $W^{(
%\eta)}_{N}(\tau)$, while the terms \eqref{ysum3} give the correlations
%of the process.

Using the explicit formulae (\ref{phiinf}) and (\ref{d2}), we get
%alterations 07/07/2014
%
%e3.55 #&#
%
\begin{eqnarray}
&& 2 \bigl(C(z_{j},\infty)-C(z_{m},\infty) \bigr)-4
\mathcal{D}'_{2} = \operatorname{Re}\bigl(g(z_{j}) \bigr)-
\operatorname{Re}\bigl(g(z_{m}) \bigr),
\end{eqnarray}
where we exploited the convenient identity (see, e.g., the
derivation of equation~(7.89) in \cite{De99})
%
%e3.56 #&#
%
\begin{equation}\qquad
\log|z+\sqrt{z+1}\sqrt{z-1}|+\frac{1}{2}\operatorname{Re}\biggl
(\frac
{1}{(z+\sqrt{z+1}\sqrt{z-1})^{2}}
\biggr) = \operatorname{Re}\bigl(g(z) \bigr).
\end{equation}
Now inserting (\ref{meanterm}), (\ref{ysum3}) and (\ref{smallterms})
into (\ref{diffid}), we obtain
%
%e3.57 #&#
%e3.58 #&#
%
\begin{eqnarray}
 %
%\begin{split}
&&\frac{\partial}{\partial\alpha_{j}}\log
\varphi_{N}(\alpha_{1},\ldots,\alpha_{m-1})\hspace*{-6pt}\nonumber\\
\label{logasy}
 &&\qquad = N
\bigl(\operatorname{Re}\bigl(g(z_{j}) \bigr)-\operatorname
{Re}\bigl(g(z_{m}) \bigr)
\bigr)\hspace*{-6pt}
\\
\nonumber
&&\qquad\quad{}+\sum_{k=1}^{m-1}\alpha_{k}
\bigl(\phi^{(\eta)}_{0}(\tau_{k})+
\phi^{(\eta)}_{0}(\tau_{j})-\phi^{(\eta)}_{0}(
\tau_{k}-\tau_{j}) \bigr)+O \bigl(d_{N}^{-1}
\bigr)+O(Ne_{N}).\hspace*{-6pt} %\end{split}
\end{eqnarray}
Note that the error terms in (\ref{logasy}) hold \textit{uniformly}
in the parameters $(\alpha_{k})_{k=1}^{m-1}$ (see Remark~\ref
{reuniformity}), so that we may integrate both sides of (\ref{logasy})
according to the procedure discussed in Section $5$ of \cite
{K07}, arriving at the asymptotics (\ref{charasymptotics}).
\end{pf*}
%
%alteration 08/07/2014 - removed this as a 'remark' as per referee
%suggestion - replaced with 'proof of Theorems..'
%
\begin{pf*}{Proof of Theorems \protect\ref{thmaintheorem} and
\protect\ref{thcompactconv}}
Bearing\vspace*{1pt} in mind Remark~\ref{reuniformity}, we differentiate (\ref
{charasymptotics}) with respect to the parameters $(\alpha
_{k})_{k=1}^{m-1}$ and evaluate near the origin, leading to
%
%e3.59 #&#
%e3.60 #&#
%
\begin{eqnarray}\label{mean}
&&\hspace*{6pt}\mathbb{E} \bigl\{ W^{(\eta)}_{N}(\tau) \bigr\} = N \bigl
(\operatorname{Re}
\bigl(g(z_{k}) \bigr)-\operatorname{Re}\bigl(g(z_{m}) \bigr) \bigr)+O
\bigl(d_{N}^{-1} \bigr) + O(Ne_{N}),
\\
&& \hspace*{6pt}\operatorname{Cov} \bigl\{W^{(\eta)}_{N}(\tau), W^{(\eta)}_{N}(
\upsilon) \bigr\}
\nonumber
\\[-8pt]
\label{covar}
\\[-8pt]
\nonumber
&&\hspace*{6pt}\qquad = \phi^{(\eta)}_{0}(\tau)+
\phi^{(\eta)}_{0}(\upsilon)-\phi^{(\eta
)}_{0}(
\tau-\upsilon)+O \bigl(d_{N}^{-1} \bigr) +O(Ne_{N}),
\end{eqnarray}
where\vspace*{1pt} the error terms are uniform in $\tau$ and $\upsilon$ varying in
a compact subset of $\mathbb{R}$. Then defining the centered process
$\tilde{W}^{(\eta)}_{N}(\tau) = W^{(\eta)}_{N}(\tau)-\mathbb{E}\{
W^{(\eta)}_{N}(\tau)\}$ we immediately find from (\ref{mean}) and
(\ref{charasymptotics}) that in the mesoscopic regime (\ref{M2}), we have
%
%e3.61 #&#
%
\begin{eqnarray}
&& \lim_{N \to\infty}\mathbb{E} \bigl\{e^{i\sum_{k=1}^{m}s_{k}\tilde
{W}^{(\eta)}_{N}(\tau_{k})} \bigr\}
\nonumber
\\[-8pt]
\\[-8pt]
\nonumber
&&\qquad= \exp
\Biggl(-\frac
{1}{2}\sum_{k=1}^{m}\sum
_{j=1}^{m}s_{k}s_{j}
\bigl(\phi_{0}(\tau_{k})+\phi_{0}(
\tau_{j})-\phi_{0}(\tau_{k}-
\tau_{j}) \bigr) \Biggr),
\end{eqnarray}
%
%alteration 05/07/2014 - corrected a typo (m instead of k)
where $(s_{k})_{k=1}^{m} \in\mathbb{R}^{m}$. Theorem~\ref
{thmaintheorem} follows immediately. To complete the proof of
Theorem~\ref{thcompactconv}, it suffices to note that the error terms in
(\ref{covar}) are uniform, so that the sequence $(\mathbb{E}\{(\tilde
{W}_{N}(\tau))^{2}\})_{N=1}^{\infty}$ is uniformly bounded.
\end{pf*}

%s4 #&#
\section{Convergence to white noise in the spectral representation}
\label{sefouriercoeff}
The main achievement of the previous section was to prove that for any
mesoscopic scales of the form (\ref{M2}), the process $\tilde
{W}^{(\eta)}_{N}(\tau)$ converges in the sense of finite-dimensional
distributions to the regularized fractional Brownian motion $B^{(\eta
)}_{0}(\tau)$. We also proved Theorem~\ref{thcompactconv} which
extends this convergence to an appropriate function space.

In this section, we will study $\tilde{W}^{(\eta)}_{N}(\tau)$ from a
different point of view, namely by means of the Fourier coefficients
$b_{N}(s)$ appearing in the spectral decomposition~(\ref
{detforident}). We remind the reader of the definition
%
%e4.1 #&#
%
\begin{equation}\label{bee2}
b_{N}(s) = \frac{1}{\sqrt{s}}\operatorname{Tr} \bigl(e^{-isd_{N}(\mathcal
{H}-x_{0}I)}
\bigr), \qquad s>0.
\end{equation}
A useful and interesting feature of the integral representations (\ref
{detforident}) and its \mbox{$N \to\infty$} limit (\ref{intr3}) is that they
are suggestive of a corresponding limiting law satisfied by the
coefficients $b_{N}(s)$. Namely,\vspace*{1pt} we expect that $b_{N}(s)$ should
``converge'' to the white noise measure $B_{c}(ds)/\sqrt{2}$. The
precise mode of the convergence we consider is described in
Theorem~\ref{thfourierconv} and it is our goal in this section to
prove this result.

%In constrast to the last section, we now only consider the more
%restricted scales of the form \eqref{M1} for any fixed $0<\alpha<1$.
%To obtain the statistics of the Fourier coefficients $b_{N}(s)$ we
%will examine the \textit{increments}
%\begin{equation}
%\begin{split}
%\Delta_{p}(\tilde{W}^{(\eta)}_{N}(\tau)) :&= \tilde{W}^{(\eta)}_{N}(
%\tau)-\tilde{W}^{(\eta)}_{N}(\tau+p)\\
%&=\frac{1}{2}\int_{0}^{\infty}\frac{e^{-\eta s}}{\sqrt{s}}
%\left([1-e^{-isp}]\tilde{b}_{N}(s)e^{-is\tau}+[1-e^{isp}]\overline{
%\tilde{b}_{N}(s)}e^{is\tau}\right)\,ds, \qquad p>0 \label{specdiffs}
%\end{split}
%\end{equation}
%where $\tilde{b}_{N}(s) = b_{N}(s)-\mathbb{E}(b_{N}(s))$. Intuitively,
%one sees that the coefficients $\tilde{b}_{N}(s)$ can be extracted
%from an appropriate Fourier transform of $\Delta_{p}(\tilde{W}^{(
%\eta)}_{N}(\tau))$. Therefore, one might expect that Theorem
%\ref{thfourierconv} should follow somehow from Theorem
%\ref{thmaintheorem}. This is the content of Section~\ref{sefulldist}.

By its very definition, the white noise measure $B_{c}(ds)$ cannot be
understood in a pointwise sense and must be regularized by integrating
against a test function. We will consider test functions $\xi\in
C^{\infty}_{0}(\mathbb{R}_{+})$, that is, $\xi$ is a smooth function
with compact support on $\mathbb{R}_{+}$. Then we have the correspondence:
%
%e4.2 #&#
%
\begin{equation}\label{cn2}
c_{N}(\xi)=\int_{0}^{\infty}
\xi(s)b_{N}(s)\,ds = \sum_{j=1}^{N}f
\bigl(d_{N}(x_{j}-x_{0}) \bigr) =:
X_{N}(f),
\end{equation}
where
%
%e4.3 #&#
%
\begin{equation}
\label{gxi} f(x) = \int_{0}^{\infty}
\frac{\xi(s)}{\sqrt{s}}e^{-isx}\,ds.
\end{equation}
By our assumptions on $\xi$, it follows that $f$ belongs to the
Schwartz space of rapidly decaying smooth functions, that is, $f \in
S(\mathbb{R})$ where
%
%e4.4 #&#
%
\begin{equation}\label{schwartz}
S(\mathbb{R}) = \biggl\{f \in C^{\infty}(\mathbb{R}):  \sup
_{x
\in\mathbb{R}} \biggl|x^{\gamma}\frac{d^{\beta}f(x)}{dx^{\beta
}} \biggr|<\infty,
\gamma,\beta=0,1,2,\ldots\biggr\}.
\end{equation}
In the following three subsections, we will obtain results for the
mean, variance and distribution of the random variable (\ref{cn2}) as
$N \to\infty$.

%s4.1 #&#
\subsection{Mean}
We begin by proving that centering is not required in Theorem~\ref
{thfourierconv}.
%
%pr4.1 #&#

\begin{proposition}
On any mesoscopic scales of the form $d_{N} = N^{\alpha}$ with any
$\alpha\in(0,1)$, we have
%
%e4.5 #&#
%
\begin{equation}
\mathbb{E} \bigl\{c_{N}(\xi) \bigr\} = O \bigl(d_{N}^{-1}
\bigr), \qquad N \to\infty.
\end{equation}
\end{proposition}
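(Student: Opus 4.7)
The plan is to reduce the expectation to a single integral of $f$ against the mean eigenvalue density $\rho_N$ of $\mathcal{H}$ and then exploit the fact that the Fourier transform of $\xi$ is supported strictly away from $0$. By definition of $c_N(\xi)$ and the change of variable $u = d_N(x-x_0)$,
\begin{equation*}
\mathbb{E}\{c_N(\xi)\} = \mathbb{E}\{X_N(f)\} = \int_{\mathbb{R}} f(d_N(x-x_0))\,\rho_N(x)\,dx = \frac{1}{d_N}\int_{\mathbb{R}} f(u)\,\rho_N(x_0 + u/d_N)\,du,
\end{equation*}
where $f(x)=\int_0^\infty \xi(s)/\sqrt{s}\,e^{-isx}\,ds$ lies in $S(\mathbb{R})$ and, since $\xi\in C_0^\infty(\mathbb{R}_+)$, its Fourier transform vanishes in a neighborhood of the origin. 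I would then decompose $\rho_N=N\rho_{\mathrm{sc}}+R_N$ with $\rho_{\mathrm{sc}}(x)=(2/\pi)\sqrt{1-x^2}\mathbf{1}_{[-1,1]}(x)$, where standard Plancherel--Rotach asymptotics for the Hermite kernel give $|R_N(x)|\le C$ uniformly on every compact subset of $(-1,1)$.

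For the semicircle piece, I would swap orders of integration to write
\begin{equation*}
\frac{N}{d_N}\int_{\mathbb{R}} f(u)\,\rho_{\mathrm{sc}}(x_0+u/d_N)\,du = N\int_0^\infty \frac{\xi(s)}{\sqrt{s}}\,e^{isd_N x_0}\,\widehat{\rho_{\mathrm{sc}}}(-sd_N)\,ds,
\end{equation*}
and insert the explicit identity $\widehat{\rho_{\mathrm{sc}}}(t)=2J_1(t)/t$ together with the full Bessel asymptotic expansion of $J_1$. This reduces the contribution to a convergent sum of oscillatory integrals of the form
\begin{equation*}
\frac{N}{d_N^{3/2+n}}\int_0^\infty g_n(s)\,e^{isd_N(x_0\pm 1)}\,ds, \qquad n \ge 0,
\end{equation*}
with each $g_n\in C_0^\infty(\mathbb{R}_+)$. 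Since $x_0\pm 1\ne 0$ for $x_0\in(-1,1)$ and each $g_n$ is smooth with support bounded away from zero, repeated integration by parts in $s$ (non-stationary phase) produces super-polynomial decay in $d_N$ for every such integral. Thus the total semicircle contribution is $O(N d_N^{-M})$ for every $M\ge 1$, which is $o(d_N^{-1})$ whenever $d_N=N^\alpha$ with $\alpha\in(0,1)$.

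For the correction $R_N$, split the $u$-integration at $|u|=d_N^{1-\varepsilon}$. On the bulk range $|u|\le d_N^{1-\varepsilon}$ the argument $x_0+u/d_N$ stays in a fixed compact subset of $(-1,1)$, so $|R_N|\le C$ and the combined $L^1$ bound yields a contribution $O(\|f\|_{L^1}/d_N)=O(d_N^{-1})$. On the tail $|u|>d_N^{1-\varepsilon}$, the Schwartz decay of $f$ together with the crude bound $\rho_N(x)\le N$ on $[-1,1]$ and sub-Gaussian decay of $\rho_N$ outside $[-1,1]$ produces a contribution that is super-polynomially small in $d_N$.

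The main obstacle is not the correction piece but rather the leading semicircle piece: the naive bound $\rho_N\sim N\rho_{\mathrm{sc}}$ would give the misleading estimate $\mathbb{E}\{c_N(\xi)\}=O(N/d_N)$, which is far larger than the target $O(d_N^{-1})$. The cancellation that reduces it to super-polynomial size is exactly the content of the remark following Theorem \ref{th:fourierconv}, and it relies crucially on $\xi$ being smooth and compactly supported strictly in $(0,\infty)$, so that the oscillatory phases $e^{isd_N(x_0\pm 1)}$ are non-stationary in the support of the amplitude.
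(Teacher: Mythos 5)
Your proposal is correct and follows essentially the same route as the paper: reduce to an integral against the mean density, approximate $\rho_N$ by the semicircle law with a uniformly bounded correction on a compact subset of the bulk (which yields the $O(d_N^{-1})$ term), and then show the apparently dominant $O(N/d_N)$ semicircle contribution is in fact super-polynomially small by passing to the Bessel-function representation and integrating by parts against the non-stationary phases $e^{isd_N(x_0\pm1)}$, using that $\xi$ is smooth with compact support strictly inside $(0,\infty)$. The only differences from the paper's argument are cosmetic (a $d_N^{1-\varepsilon}$ cutoff rather than a fixed $\epsilon$-neighborhood of $x_0$, and naming the integration-by-parts step as non-stationary phase).
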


\begin{pf}
We write the expectation above as an integral over the normalized
density of states $\rho_{N}(x)$,
%
%e4.6 #&#
%
\begin{equation}
\label{fullint}
\mathbb{E} \bigl\{c_{N}(\xi) \bigr\}= N\int
_{-\infty}^{\infty
}f \bigl(d_{N}(x-x_{0})
\bigr)\rho_{N}(x)\,dx,
\end{equation}
where
%
%e4.7 #&#
%
\begin{equation}
\rho_{N}(x) = \frac{1}{N}\mathbb{E} \Biggl\{\sum
_{j=1}^{N}\delta(x-x_{j}) \Biggr\}.
\label{dos}
\end{equation}
Firstly, note that the tails of the integral (\ref{fullint}) can be
removed using the rapid decay of $f$. For any $\varepsilon>0$, we have
%
%e4.8 #&#
%
\begin{equation}
\label{inst}
\mathbb{E} \bigl\{c_{N}(\xi) \bigr\}= N\int
_{x_{0}-\varepsilon}^{x_{0}+\varepsilon
}f \bigl(d_{N}(x-x_{0})
\bigr)\rho_{N}(x)\,dx + O \bigl(Nd_{N}^{-\infty} \bigr),
\end{equation}
where here and elsewhere, the notation $O(Nd_{N}^{-\infty})$ refers to
a quantity that is $O(Nd_{N}^{-\gamma})$ for any $\gamma>0$. Such a
contribution tends to zero for the power law scales $d_{N} = N^{\alpha
}$ with any $\alpha\in(0,1)$. Then for small enough $\varepsilon$, we
have the uniform estimate (see \cite{PS11}, Chapter~5.2)
%
%e4.9 #&#
%
\begin{equation}
\label{denslim}
\rho_{N}(x) = \frac{2}{\pi}\sqrt{1-x^{2}}+O
\bigl(N^{-1} \bigr), \qquad x \in(x_{0}-
\varepsilon,x_{0}+\varepsilon).
\end{equation}

After inserting (\ref{denslim}) into (\ref{inst}), we find that
%
%e4.10 #&#
%
\begin{equation}
\label{climit}\qquad
\mathbb{E} \bigl\{c_{N}(\xi) \bigr\} =
\frac{2N}{\pi} \int_{x_{0}-\varepsilon
}^{x_{0}+\varepsilon}f
\bigl(d_{N} (x-x_{0}) \bigr)\sqrt{1-x^{2}}\,
dx+E_{N}+O \bigl(Nd_{N}^{-\infty} \bigr),\hspace*{-12pt}
\end{equation}
where the error term $E_{N} = O(d_{N}^{-1})$, since
%
%e4.11 #&#
%
\begin{eqnarray}
|E_{N}| &\leq &  C \Biggl|\int_{x_{0}-\varepsilon}^{x_{0}+\varepsilon
}f
\bigl(d_{N}(x-x_{0}) \bigr)\,dx \Biggr|\leq\frac{C}{d_{N}}
\int_{-\infty
}^{\infty}\bigl|f(x)\bigr|\,dx.
\end{eqnarray}

Similarly, we can replace the integration limits in (\ref{climit})
with $\pm1$ using the Schwartz property of $f$. We have
%
%e4.12 #&#
%
\begin{equation}\label{semiasympt}
\mathbb{E} \bigl\{c_{N}(\xi) \bigr\} = \frac{2N}{\pi}\int
_{-1}^{1}f \bigl(d_{N}(x-x_{0})
\bigr)\sqrt{1-x^{2}}\,dx + O \bigl(d_{N}^{-1}
\bigr).
\end{equation}

Next, we substitute $f$ with the definition (\ref{gxi}) and
interchange the order of integration [justified by the rapid decay of
$\xi(s)$] so that
%
%e4.13 #&#
%
\begin{eqnarray}
\qquad\mathbb{E} \bigl\{c_{N}(\xi) \bigr\} &= & \frac{2N}{\pi}\int
_{0}^{\infty}\xi(s)s^{-1/2}e^{isd_{N}x_{0}}
\int_{-1}^{1}e^{-isd_{N}x}
\sqrt{1-x^{2}} \, dx\,ds+O \bigl(d_{N}^{-1} \bigr)
\nonumber
\\[-8pt]
\label{bessel}
\\[-8pt]
\nonumber
&=&  2N\int_{0}^{\infty}\xi(s)s^{-3/2}J_{1}(d_{N}s)e^{isd_{N}x_{0}}
\, ds + O \bigl(d_{N}^{-1} \bigr),
\end{eqnarray}
where $J_{1}(z)$ is the Bessel function of index $1$. To complete the
proof, note that $J_{1}(d_{N}s)$ has an asymptotic expansion (for any
fixed $\gamma\in\mathbb{N}$ and $s>0$) as $N \to\infty$,
%
%e4.14 #&#
%
\begin{eqnarray}
\sqrt{\frac{\pi}{2}}J_{1}(d_{N}s) &=&
\cos(d_{N}s-3\pi/4)\sum_{k=0}^{\gamma-1}
\frac{C_{k}}{d_{N}^{2k+1/2}s^{2k+1/2}}
\nonumber
\\[-8pt]
\label{sumerrors}
\\[-8pt]
\nonumber
&&{}+\sin(d_{N}s-3\pi/4)\sum_{k=0}^{\gamma-1}
\frac
{D_{k}}{d_{N}^{2k+3/2}s^{2k+3/2}}+E_{N}(s),
\end{eqnarray}
where the error term satisfies the bound $|E_{N}(s)| \leq|C_{\gamma
}d_{N}^{-2\gamma-1/2}s^{-2\gamma-1/2}|$ and $C_{k},D_{k}$ are
constants depending only on $k$. Such asymptotics can be found in, for
example, \cite{NIST} or \cite{K11}.

Inserting (\ref{sumerrors}) into (\ref{bessel}), we see that the
contribution from each term in the sum in (\ref{sumerrors}) is an
oscillatory integral of order $O(Nd_{N}^{-\infty})$, as follows from
repeated integration by parts. The final\vspace*{1pt} error term $E_{N}(s)$ is
integrable with respect to $\xi(s)$ and gives rise to an error of
order $O(Nd_{N}^{-2\gamma})$. Since $\gamma>0$ was arbitrary, we
conclude that the term proportional to $N$ in (\ref{semiasympt}) is in
fact asymptotically smaller than the error term. This completes the
proof of the proposition.
\end{pf}

%\begin{remark}
%The \textit{pointwise} expectation of $b_{N}(s)$ on the other hand
%does not converge to zero as $N \to\infty$, as can be seen from an
%explicit computation of $\mathbb{E}(b_{N}(s))$. It is given by the
%following formula:
%\begin{eqnarray}
% \mathbb{E}(b_{N}(s)) &=
%s^{-1/2}e^{-s^{2}d_{N}^{2}/2N}L^{(1)}_{N-1}(s^{2}d_{N}^{2}/N)
%\label{expl1}
%\end{eqnarray}
%The polynomials $L^{(\nu)}_{N}(z)$ are Laguerre polynomials of degree
%$N$ with parameter $\nu$. They have the explicit form:
%\begin{equation}
% L^{(\nu)}_{N}(z) = \sum_{i=0}^{N}(-1)^{i}\binom{N+\nu}{N-i}
%\frac{z^{i}}{i!}.
%\end{equation}
%For proofs of \eqref{expl1}, see \cite{BH97} (their formula (2.17))
%where the left-hand side of \eqref{expl1} was written as a contour
%integral; an appropriate change of variables reveals the standard
%contour integral representation for the Laguerre polynomial on the
%right-hand side of \eqref{expl1}. More recently, in \cite{HT03} (their
%formula (0.2)) the left-hand side of \eqref{expl1} is written as a
%confluent hypergeometric function. Closer inspection reveals that the
%hypergeometric function terminates and is simply a Laguerre
%polynomial, leading again to formula \eqref{expl1}.
%\end{remark}

%s4.2 #&#
\subsection{Covariance}
\label{secov}
Having studied the expectation of $b_{N}(s)$ in the previous
subsection, we now consider the fluctuations. In the \hyperref[sec1]{Introduction}, it
was remarked, in accordance with the expected white noise limit for
$b_{N}(s)$ that we should have $\lim_{N \to\infty}\mathbb{E}\{
b_{N}(s_{1})\overline{b_{N}(s_{2})}\} = \delta(s_{1}-s_{2})$. In this
subsection, we will make this assertion precise by proving that
%
%e4.15 #&#
%
\begin{equation}
\label{deltacors} \lim_{N \to\infty}\mathbb{E} \bigl\{c_{N}(
\xi_{1})\overline{c_{N}(\xi_{2})} \bigr\} = \int
_{0}^{\infty}\xi_{1}(s)\overline{
\xi_{2}(s)}\,ds
\end{equation}
for all smooth functions $\xi_{1},\xi_{2}$ with compact support on
$\mathbb{R}_{+}$.

It turns out that there is an exact finite-$N$ formula for the
covariance (see equation (4.2.38) in \cite{PS11}):
%
%e4.16 #&#
%
\begin{equation}\label{orig}\qquad
\mathbb{E} \bigl\{\tilde{X}_{N}(f_{1})\tilde{X}_{N}(f_{2})
\bigr\}=\frac
{1}{8}\int_{\mathbb{R}^{2}}\Delta
f_{1}(d_{N}x)\Delta f_{2}(d_{N}x)K_{N}^{2}(x_{1},x_{2})
\,dx_{1}\,dx_{2},
\end{equation}
where $f_{1}$ and $f_{2}$ are defined in terms of $\xi_{1}$ and $\xi
_{2}$ as in formula (\ref{gxi}) and we introduced the notation $\Delta
f(x) = f(x_{1})-f(x_{2})$ for any $f$. The function
$K_{N}(x_{1},x_{2})$ is the kernel of the GUE ensemble (see, e.g., \cite{Meh04,PS11}) having the explicit formula
%
%e4.17 #&#
%
\begin{equation}\label{guekernel}
K_{N}(x,y) = \frac{\psi^{(N)}_{N}(x_{1})\psi^{(N)}_{N-1}(x_{2})-\psi
^{(N)}_{N}(x_{2})\psi^{(N)}_{N-1}(x_{1})}{x_{1}-x_{2}},
\end{equation}
where
%
%e4.18 #&#
%
\begin{equation}
\label{wavefns}
\psi^{(N)}_{l}(x) = e^{-Nx^{2}}P^{(N)}_{l}(x),
\end{equation}
and $P^{(N)}_{l}(x)$ are (rescaled) Hermite polynomials, normalized by
the condition that $\{\psi^{(N)}_{l}\}_{l=1}^{\infty}$ forms an
orthonormal family on $\mathbb{R}$. By making use of the known
Plancherel--Rotach asymptotics for the functions $\psi^{(N)}_{l}(x)$,
we deduce the following covariance formula. After noting the
correspondence (\ref{gxi}), we immediately derive from it the $\delta$-correlations (\ref{deltacors}).

%pr4.2 #&#
%
\begin{proposition}\label{prop1}
Let the test functions $f_{1}$ and $f_{2}$ belong to the Schwartz space
$S(\mathbb{R})$ defined in (\ref{schwartz}) and consider the
mesoscopic regime $d_{N} = N^{\alpha}$ with any $\alpha\in(0,1)$. We have
%
%e4.19 #&#
%
\begin{equation}\label{eqnfour}
\lim_{N \to\infty}\mathbb{E} \bigl\{\tilde{X}_{N}(f_{1})
\tilde{X}_{N}(f_{2}) \bigr\} = \frac{1}{2\pi}\int
_{-\infty}^{\infty
}|s|\hat{f_{1}}(s)
\hat{f_{2}}(-s)\,ds,
\end{equation}
where $\hat{f}(s) = (2\pi)^{-1/2}\int_{-\infty}^{\infty
}f(x)e^{-isx}\,dx$.
\end{proposition}

%re4.3 #&#
%
\begin{remark}
Formula (\ref{eqnfour}) is already known for $C^{1}$ functions with
compact support, as in Theorem~5.2.7(iii) of \cite{PS11}. It was also
proved recently in \cite{EK13a} for a class of Wigner matrices with
$f$ a Schwartz test function, but only up to scales $d_{N} = N^{\alpha
}$ with any $0<\alpha<1/3$. Our main contribution in this subsection
is to adapt the argument given in \cite{PS11} to our test functions
$f$ in (\ref{gxi}), which cannot be compactly supported due to our
assumptions on $\xi$. We note that our proof holds on the full range
$0 < \alpha< 1$ and that the smoothness hypothesis can be relaxed to
$C^{1}$ functions with rapid decay at $\pm\infty$.
\end{remark}
\begin{pf*}{Proof of Proposition~\protect\ref{prop1}}
Here, we only consider the contribution to integral (\ref{orig})
coming from the square $I_{\delta}^{2} = [-(1-\delta),(1-\delta
)]^{2}$ for some small $\delta>0$. In Appendix~\ref{apintcomp}, we
will show that the complement of this region can be neglected for small
enough $\delta$. We will need the following asymptotic formula for the
functions $\psi^{(N)}_{N+k}$ defined in (\ref{wavefns}). Uniformly
for $|x|<(1-\delta)$ and $k=O(1)$, we have
%
%e4.20 #&#
%
\begin{eqnarray}
\qquad\psi^{(N)}_{N+k}(x) &=& \biggl(
\frac{2}{\pi\sqrt{1-x^{2}}} \biggr)^{1/2}\cos\bigl(N\alpha(x)+(k+1/2)
\cos^{-1}(x)-\pi/4 \bigr)
\nonumber
\\[-8pt]
\label{planch2}
\\[-8pt]
\nonumber
&&{}+O \bigl(N^{-1} \bigr),
\end{eqnarray}
where $\alpha(x) = 2\int_{-1}^{x}dt \sqrt{1-t^{2}}$. Formula
(\ref{planch2}) follows immediately from the classical asymptotic
results of Plancherel and Rotach (see Sections $5$ in \cite{PS11} and
$8$ in~\cite{Sze39}).

Now, using the symmetry about the line $x_{1}=x_{2}$, we see that the
integral (\ref{orig}) restricted to $I_{\delta}^{2}$ can be written
in the convenient form,
%
%e4.21 #&#
%
\begin{equation}
\label{intsquare} \frac{1}{4}\int_{I_{\delta}^{2}}\frac{\Delta
f_{1}(d_{N}x)}{\Delta
x}
\frac{\Delta f_{2}(d_{N}x)}{\Delta x}\mathcal{F}_{N}(x_{1},x_{2})\,
dx_{1}\,dx_{2},
\end{equation}
where
%
%e4.22 #&#
%
\begin{eqnarray}
\mathcal{F}_{N}(x_{1},x_{2})& =&
\psi^{(N)}_{N}(x_{1})^{2}\psi
^{(N)}_{N-1}(x_{2})^{2}
\nonumber
\\[-8pt]
\label{cfn}
\\[-8pt]
\nonumber
&&{}-
\psi^{(N)}_{N}(x_{1})\psi^{(N)}_{N-1}(x_{1})
\psi^{(N)}_{N}(x_{2})\psi^{(N)}_{N-1}(x_{2}).
\end{eqnarray}

We insert the Plancherel--Rotach formula (\ref{planch2}) into (\ref
{intsquare}) and denote $\theta(x) = \cos^{-1}(x)$. Using the double
angle formula for the cosine, we find that the contribution of $(\ref
{planch2})$ to the product of squares in (\ref{cfn}) is
%
%e4.23 #&#
%e4.24 #&#
%
\begin{eqnarray}\label{costerms1}
&&\qquad\frac{1+\cos(2N\alpha(x_{1})+\theta(x_{1})/2-\pi/4)+\cos
(2N\alpha(x_{2})-\theta(x_{2})/2-\pi/4)}{\pi^{2}\sqrt
{1-x_{2}^{2}}\sqrt{1-x_{1}^{2}}}\hspace*{-12pt}
\\
&&\qquad\qquad{}+\frac{\cos(2N\alpha(x_{1})+\theta(x_{1})/2-\pi/4)\cos(2N\alpha
(x_{2})-\theta(x_{2})/2-\pi/4)}{\pi^{2}\sqrt{1-x_{2}^{2}}\sqrt
{1-x_{1}^{2}}}\hspace*{-12pt}
\nonumber
\\[-8pt]
\label{costerms2}
\\[-8pt]
\nonumber
&&\qquad\qquad{}+O \bigl(N^{-1} \bigr).\hspace*{-12pt}
\end{eqnarray}
Inserting the oscillatory terms in lines (\ref{costerms1}) and (\ref
{costerms2}) into (\ref{intsquare}) gives rise to error terms that are
$O((N/d_{N})^{-\infty})$ as $N \to\infty$ for every $\delta>0$.
This can be shown by repeated integration by parts, using the fact that
$\alpha(x)$ is smooth and increasing on the interval $I_{\delta}$.
Combined with a similar calculation applied to the second term in
(\ref{cfn}), we see that the integral (\ref{intsquare}) is equal to
%
%e4.25 #&#
%
\begin{eqnarray}
&& \frac{1}{4\pi^{2}}\int_{I_{\delta}^{2}} \frac{\Delta
f_{1}(d_{N}x)}{\Delta x}
\frac{\Delta f_{2}(d_{N}x)}{\Delta x}\frac
{1-x_{1}x_{2}}{\sqrt{1-x_{1}^{2}}\sqrt{1-x_{2}^{2}}}\,dx_{1}\, dx_{2}+O
\bigl((N/d_{N})^{-\infty} \bigr)
\nonumber
\\
\label{underint}
&&\qquad =\frac{1}{4\pi^{2}}\int_{\mathbb{R}^{2}}\frac{\Delta
f_{1}(x)}{\Delta x}
\frac{\Delta f_{2}(x)}{\Delta x}\frac
{1-x_{1}x_{2}/d_{N}^{2}}{\sqrt{1-x_{1}^{2}/d_{N}^{2}}\sqrt
{1-x_{2}^{2}/d_{N}^{2}}}
\\
\nonumber
&&\qquad\quad{}\times\chi_{I_{N}}(x_{1})
\chi_{I_{N}}(x_{2})\, dx_{1}\,dx_{2}+O
\bigl((N/d_{N})^{-\infty} \bigr),
\end{eqnarray}
where $\chi_{I_{N}}(x_{1})$ is the indicator function on the set
$I_{N}=(-(1-\delta)d_{N},(1-\delta)d_{N})$.

Now Lebesgue's dominated convergence theorem can be applied to take the
limit under the integral in (\ref{underint}). Indeed, it is easy to
see that the integrand in~(\ref{underint}) is bounded by the
integrable function
%
%e4.26 #&#
%
\begin{equation}
\biggl(\frac{2}{\delta^{2}}-1 \biggr) \biggl|\frac{\Delta
f_{1}(x)}{\Delta x} \biggr| \biggl|
\frac{\Delta f_{2}(x)}{\Delta
x} \biggr|
\end{equation}
for any $N \in\mathbb{N}$, $(x_{1},x_{2})\in\mathbb{R}^{2}$ and
$0<\delta<1$. We finally see that for all $0<\delta<1$, we have
%
%e4.27 #&#
%
\begin{eqnarray}
&&\lim_{N \to\infty}\frac{1}{4}\int_{I_{\delta}^{2}}
\frac{\Delta
f_{1}(d_{N} x)}{\Delta x}\frac{\Delta f_{2}(d_{N} x)}{\Delta
x}\mathcal{F}_{N}(x_{1},x_{2})
\,dx_{1}\,dx_{2}
\nonumber
\\[-8pt]
\label{prefour}
\\[-8pt]
\nonumber
&&\qquad= \frac{1}{4\pi
^{2}}\int
_{\mathbb{R}^{2}}\frac{\Delta f_{1}(x)}{\Delta x}\frac
{\Delta f_{2}(x)}{\Delta x}
\,dx_{1}\,dx_{2}.
\end{eqnarray}
Rewriting $f_{1}$ and $f_{2}$ in terms of their Fourier transforms and
applying the Plancherel theorem gives the identity
%
%e4.28 #&#
%
\begin{eqnarray}
&&\frac{1}{4\pi^{2}}\int_{\mathbb{R}^{2}}\frac
{f_{1}(x_{1})-f_{1}(x_{2})}{x_{1}-x_{2}}
\frac
{f_{2}(x_{1})-f_{2}(x_{2})}{x_{1}-x_{2}}\,dx_{1}\,dx_{2}
\nonumber
\\[-8pt]
\\[-8pt]
\nonumber
&&\qquad= \frac
{1}{2\pi}\int
_{\mathbb{R}}|s|\hat{f_{1}}(s)\hat{f_{2}}(-s)
\,ds,
\end{eqnarray}
which is precisely the RHS of (\ref{eqnfour}). To complete the proof,
we just need to show that the integral (\ref{orig}) restricted to the
complement of the square $I_{\delta}^{2}$ can be neglected in the
limit $N \to\infty$. Namely, we prove in the \hyperref[app]{Appendix} that
%
%e4.29 #&#
%
\begin{equation}
\label{432432}\qquad
\lim_{N \to\infty}\int_{(I_{\delta}^{2})^{\mathrm{c}}}
\Delta f_{1}(d_{N}x)\Delta f_{2}(d_{N}x)K_{N}^{2}(x_{1},x_{2})
\,dx_{1}\, dx_{2}=O(\delta), \qquad\delta\to0,\hspace*{-12pt}
\end{equation}
and so complete the proof of the proposition by choosing $\delta>0$
sufficiently small.
\end{pf*}
%
%s4.3 #&#
\subsection{Convergence in distribution}
\label{sefulldist}
The aim of this subsection is to study the full distribution of the
coefficients $b_{N}(s)$ and ultimately to prove Theorem~\ref
{thfourierconv}. First, we need a preliminary result regarding the
stochastic process $\tilde{W}^{(\eta)}_{N}(\tau)$. It will be
convenient to consider the \textit{increments}
%
%e4.30 #&#
%e4.31 #&#
%
\begin{eqnarray}
%
%\begin{split}
&&\Delta_{p} \bigl[\tilde{W}_{N}^{(\eta)}
\bigr](\tau)\nonumber
\\
\label{detforident2}
&&\qquad:= \tilde{W}_{N}^{(\eta
)}(\tau)-
\tilde{W}_{N}^{(\eta)}(\tau+p)
\\
\nonumber
&&\qquad=  \frac{1}{2}\int_{0}^{\infty}
\frac{e^{-\eta s}}{\sqrt{s}} \bigl\{ \bigl[1-e^{-ips} \bigr
]e^{-i\tau s}
\tilde{b}_{N}(s)+ \bigl[1-e^{ips} \bigr]e^{i\tau
s}
\overline{\tilde{b}_{N}(s)} \bigr\}\,ds,
%\end{split}
%
\end{eqnarray}
where $\tilde{b}_{N}(s) = b_{N}(s)-\mathbb{E}\{b_{N}(s)\}$.

Similarly, the corresponding limiting object is given by the following
stationary Gaussian process:
%
%e4.32 #&#
%e4.33 #&#
%
\begin{eqnarray}
%
%\begin{split}
&& \Delta_{p} \bigl[B^{(\eta)}_{0}
\bigr](\tau)\nonumber\\
\label{b0inc}
 &&\qquad:= B^{(\eta)}_{0}(\tau)-B^{(\eta
)}_{0}(
\tau+p)
\\
\nonumber
&&\qquad= \frac{1}{2\sqrt{2}}\int_{0}^{\infty}
\frac{e^{-\eta s}}{\sqrt
{s}} \bigl\{ \bigl[1-e^{-ips} \bigr]e^{-i\tau s}B_{c}(ds)+
\bigl[1-e^{ips} \bigr]e^{i\tau
s}\overline{B_{c}(ds)}
\bigr\}.
\end{eqnarray}
%
%pr4.4 #&#

\begin{proposition}
\label{proptailsh}
Let $p \in\mathbb{R}$. For any $h \in S(\mathbb{R})$ and on any
power law scales $d_{N} = N^{\alpha}$ with $\alpha\in(0,1)$, we have
the convergence in distribution:
%
%e4.34 #&#
%
\begin{equation}\label{convhschwartz}
\qquad\int_{-\infty}^{\infty}h(\tau)\Delta_{p} \bigl[
\tilde{W}^{(\eta
)}_{N} \bigr](\tau)\,d\tau\stackrel{d} {
\Longrightarrow} \int_{-\infty
}^{\infty}h(\tau)
\Delta_{p} \bigl[B^{(\eta)}_{0} \bigr](\tau)\,d\tau,
\qquad N \to\infty.\hspace*{-12pt}
\end{equation}
\end{proposition}

\begin{pf}
The proof will be analogous to our proof of Theorem~\ref{thcompactconv}, the main difference being we must have good enough
control of the tails in the above integrals. This will be taken care of
by the rapid decay of $h$. To proceed, we fix some (arbitrary) $M \in
\mathbb{R}$ and $\delta_{0}>0$ and decompose the LHS of (\ref
{convhschwartz}) as
%
%e4.35 #&#
%e4.36 #&#
%
\begin{eqnarray}
%
%\begin{split}
&& \int_{-M}^{M}h(\tau)
\Delta_{p} \bigl[\tilde{W}^{(\eta)}_{N} \bigr](\tau)\,
d \tau+\int_{|\tau|\in[M,\delta_{0}d_{N}]}h(\tau)\Delta_{p} \bigl[
\tilde{W}^{(\eta)}_{N} \bigr](\tau)\,d\tau
\nonumber
\\[-8pt]
\label{decomp}
\\[-8pt]
\nonumber
&&\quad {}+\int
_{|\tau| \in
[\delta_{0}d_{N},\infty)}h(\tau)\Delta_{p} \bigl[
\tilde{W}^{(\eta
)}_{N} \bigr](\tau)\,d\tau\hspace*{-12pt}
%\end{split}
%
\end{eqnarray}
and label each of the integrals in (\ref{decomp}) with $\mathcal
{I}_{1}, \mathcal{I}_{2}$ and $\mathcal{I}_{3}$. Let us begin with
the first integral, $\mathcal{I}_{1}$. By Theorem~\ref
{thmaintheorem} and the Cram\'er--Wold device, the finite-dimensional
distributions of $\Delta_{p}[\tilde{W}^{(\eta)}_{N}](\tau)$
converge in law to those of $\Delta_{p}[B^{(\eta)}_{0}](\tau)$.
Furthermore, by the uniform estimate (\ref{covar}) we have that there
is a constant $C>0$ such that $\mathbb{E}\{(\Delta_{p}[B^{(\eta
)}_{0}(\tau)])^{2}\}\leq C$ for all $\tau\in[-M,M]$ and for all $N$.
Therefore, the hypotheses of Theorem~3 in \cite{G76} are satisfied and
we conclude that the first integral in (\ref{decomp}) converges in
distribution to the RHS of (\ref{convhschwartz}) in the limit $N \to
\infty$ followed by $M \to\infty$. To complete the proof, it
suffices to show that the second and third integrals in (\ref{decomp})
converge in probability to $0$ in the same limit.

For notational convenience, we just consider the contributions to
$\mathcal{I}_{2}$ and $\mathcal{I}_{3}$ where $\tau>0$ as the
situation $\tau<0$ is almost identical. By Chebyshev's inequality and
Cauchy--Schwarz, we have
%
%e4.37 #&#
%
\begin{equation}\label{varrhp}\qquad
\mathbb{P}\bigl\{|\mathcal{I}_{2}|>\varepsilon\bigr\} \leq\varepsilon^{-2}
\int_{M}^{\delta_{0}d_{N}}\bigl|h(\tau)\bigr|\,d\tau\int
_{M}^{\delta
_{0}d_{N}}\bigl|h(\tau)\bigr|\mathbb{E} \bigl\{
\Delta_{p} \bigl[\tilde{W}^{(\eta
)}_{N} \bigr](
\tau)^{2} \bigr\} \, d\tau.
\end{equation}
We will now argue that the variance term in (\ref{varrhp}) is
uniformly bounded. Since $|\tau|\leq\delta_{0}d_{N}$, by choosing
$\delta_{0}$ small enough we see that $|x_{0}+\tau/d_{N}| < 1-\delta
$ for some $\delta>0$ independent of $N$. Hence, the singularities of
the logarithm in (\ref{wntau}) remain inside the bulk region
$(-1+\delta,1-\delta)$ for all $N$ and we may apply the methods of
Section~\ref{semesoproof} with $m=2$ and weight [cf. (\ref{omdef})]
%
%e4.38 #&#
%
\begin{eqnarray}
\omega(z)&=& \biggl[\frac{(z-x_{0}(\tau,N)-p/d_{N})^{2}+(\eta
/d_{N})^{2}}{(z-x_{0}(\tau,N))^{2}+(\eta/d_{N})^{2}} \biggr]^{\alpha
/2},
\nonumber
\\[-8pt]
\\[-8pt]
\eqntext{\displaystyle x_{0}(\tau,N)=x_{0}+\tau/d_{N}.}
\end{eqnarray}
The only difference in the analysis of the Riemann--Hilbert problem
with this weight is that the new reference point $x_{0}(\tau,N)$ can
vary with $N$ in the small fixed neighbourhood $[x_{0}-\delta
_{0},x_{0}+\delta_{0}]$. However, all the estimates we obtain are
uniform for $x_{0}$ varying in compact subsets of $(-1+\delta,1-\delta
)$ so that the variance bound~(\ref{covar}) (with $\upsilon= \tau$)
remains valid. This implies that for some $N$-independent $C>0$,
%
%e4.39 #&#
%
\begin{equation}
\mathbb{P}\bigl\{|\mathcal{I}_{2}|>\varepsilon\bigr\} \leq\varepsilon^{-2}C
\biggl(\int_{M}^{\delta_{0}d_{N}}\bigl|h(\tau)\bigr|\,d\tau
\biggr)^{2} \to0,
\end{equation}
in the limit $N \to\infty$ followed by $M \to\infty$.

To bound the integral $\mathcal{I}_{3}$, we again apply Chebyshev's
inequality and exploit the rapid decay of $h$. We have
%
%e4.40 #&#
%e4.41 #&#
%
\begin{eqnarray}
&&\qquad\mathbb{P}\bigl\{|\mathcal{I}_{3}|>\varepsilon\bigr\}
\nonumber
\\[-8pt]
\\[-8pt]
\nonumber
&&\qquad\qquad\leq  \varepsilon^{-2}
\int_{\delta_{0}d_{N}}^{\infty}\int_{\delta_{0} d_{N}}^{\infty}
\mathbb{E} \bigl\{ h(\tau_{1})\Delta_{p} \bigl[
\tilde{W}^{(\eta)}_{N} \bigr](\tau_{1})\overline{h(
\tau_{2})}\Delta_{p} \bigl[\tilde{W}^{(\eta)}_{N}
\bigr](\tau_{2}) \bigr\}\,d\tau_{1}\,d\tau_{2}
\\
&&\qquad\qquad =\varepsilon^{-2}\int_{\delta_{0}d_{N}}^{\infty}\int
_{\delta
_{0}d_{N}}^{\infty}\int_{-\infty}^{\infty}
\int_{-\infty}^{\infty
}h(\tau_{1})\overline{h(
\tau_{2})}\prod_{j=1}^{2}
\bigl(q(x_{1},\tau_{j})-q(x_{2},
\tau_{j}) \bigr)
\nonumber
\\[-8pt]
\\[-8pt]
\nonumber
&&\qquad\qquad\quad {}\times K_{N}^{2}(x_{1},x_{2})
\,dx_{1}\,dx_{2}\,d\tau_{1}\,d\tau_{2},
\end{eqnarray}
where we computed the expectation using the identity (\ref{orig}) and
%
%e4.42 #&#
%
\begin{equation}
q(x,\tau) = -\log\biggl|x-x_{0}-\frac{\tau+i\eta}{d_{N}} \biggr|+\log
\biggl|x-x_{0}-\frac{\tau+p+i\eta}{d_{N}} \biggr|.
\end{equation}
%
%alteration 08/07/2014
Now, since $h$ is a Schwartz test function, we know that for any
$\gamma>0$ and $u>0$, we have $|h(ud_{N})| \leq(d_{N}u)^{-\gamma}$
for $N$ large enough. Then using the inequalities $|q(x,\tau)|\leq
C_{p,\eta}$ for some finite constant depending only on $p$ and $\eta
$, $K_{N}^{2}(x_{1},x_{2}) \leq N^{2}\rho_{N}(x_{1})\rho_{N}(x_{2})$
and substituting $\tau_{j}=ud_{N}$ we obtain
%
%e4.43 #&#
%
\begin{equation}
\label{tailineq}
\mathbb{P}\bigl(|\mathcal{I}_{3}|>\varepsilon\bigr) \leq4
\varepsilon^{-2} C_{p,\eta}^{2}N^{2}\,d_{N}^{-2\gamma+2}
\biggl(\int_{\delta
_{0}}^{\infty}u^{-\gamma}\,du
\biggr)^{2}.
\end{equation}
Then provided $d_{N}$ takes the form $d_{N} = N^{\alpha}$ with $\alpha
\in(0,1)$ we can always choose $\gamma>0$ large enough such that the
RHS of (\ref{tailineq}) tends to $0$ as $N \to\infty$.
\end{pf}
We can now translate the result (\ref{convhschwartz}) into a statement
about the Fourier coefficients $b_{N}(s)$, allowing us to prove
Theorem~\ref{thfourierconv}. For the convenience of the reader, we
repeat the
statement of the latter result here.
%
%th4.5 #&#

\begin{theorem}
\label{propconvcomp}
Let $\xi_{1},\ldots,\xi_{m}$ be smooth
functions compactly supported on $\mathbb{R}_{+}$. Then the vector
$(c_{N}(\xi_{1}),\ldots,c_{N}(\xi_{m}))$ converges in distribution
to a centered complex Gaussian vector $Z$ with relation matrix
$C=\mathbb{E}\{ZZ^{\mathrm{T}}\}=0$ and covariance matrix $\Gamma=
\mathbb{E}\{ZZ^{\dagger}\}$ given by
%
%e4.44 #&#
%
\begin{equation}
\Gamma_{j,k} = \int_{0}^{\infty}
\xi_{j}(s)\overline{\xi_{k}(s)}\, ds, \qquad j,k=1,
\ldots,m.
\end{equation}
\end{theorem}

\begin{pf}
Define functions $h_{k}$ in terms of their Fourier transform as
%
%e4.45 #&#
%
\begin{equation}
\label{hdef2} \int_{-\infty}^{\infty}h_{k}(
\tau)e^{-i\tau s}\,d\tau= \frac
{\sqrt{s}}{1-e^{-ips}}e^{\eta s}
\xi_{k}(s), \qquad k=1,\ldots,m.
\end{equation}
Then for sufficiently small $p$, the RHS of (\ref{hdef2}) is smooth
and compactly supported. Therefore, its Fourier transform $h_{k}$ is a
Schwartz function, that is, $h_{k} \in S(\mathbb{R})$. Next, note that
with $c_{N}(\xi)$ as in (\ref{cn2}), we have the identity
%
%e4.46 #&#
%
\begin{equation}
c_{N}(\xi_{k})-\mathbb{E} \bigl(c_{N}(
\xi_{k}) \bigr) = 2\int_{-\infty
}^{\infty}h_{k}(
\tau)\Delta_{p} \bigl[\tilde{W}^{(\eta)}_{N} \bigr](
\tau)\, d\tau
\end{equation}
which holds almost surely and follows after inserting the
representation (\ref{detforident2}) and interchanging the order of
integration, justified by the rapid decay of $\xi_{k}$ and $h_{k}$.
Now we apply Proposition~\ref{proptailsh} with $h(\tau)=\sum
_{k=1}^{m}\alpha_{k}h_{k}(\tau)$ where $\alpha_{k} \in\mathbb{C}$.
Since $\mathbb{E}(c_{N}(\xi_{k}))=O(d_{N}^{-1})$, we get the
convergence in distribution
%
%e4.47 #&#
%
\begin{equation}\label{limitcm}
\qquad\sum_{k=1}^{m}\alpha_{k}c_{N}(
\xi_{k}) \stackrel{d} {\Longrightarrow} 2\sum
_{k=1}^{m}\alpha_{k}\int
_{-\infty}^{\infty}h_{k}(\tau)
\Delta_{p} \bigl[B^{(\eta)}_{0} \bigr](\tau)\,d\tau,
\qquad N \to\infty.
\end{equation}
By the Cram\'er--Wold device, this implies the convergence in distribution
%
%e4.48 #&#
%
\begin{equation}
\bigl(c_{N}(\xi_{1}),\ldots,c_{N}(
\xi_{k}) \bigr) \stackrel{d} {\Longrightarrow} \bigl(Z(h_{1}),
\ldots,Z(h_{m}) \bigr),
\end{equation}
where
%
%e4.49 #&#
%
\begin{equation}
Z(h_{k}) = 2\int_{-\infty}^{\infty}h_{k}(
\tau)\Delta_{p} \bigl[B^{(\eta
)}_{0} \bigr](\tau)
\,d \tau.
\end{equation}
Since $\Delta_{p}[B^{(\eta)}_{0}](\tau)$ is a Gaussian process, one
easily sees that $(Z(h_{1}),\ldots,Z(h_{m}))$ is a mean zero complex
Gaussian vector. Then by a simple computation using the integral
representation (\ref{b0inc}) and basic properties of the white noise
measure $B_{c}(ds)$, we find the covariance structure
%
%e4.50 #&#
%
\begin{equation}
\Gamma_{j,k}=\mathbb{E} \bigl\{Z(h_{j})
\overline{Z(h_{k})} \bigr\} = \int_{0}^{\infty}
\xi_{j}(s)\overline{\xi_{k}(s)}\,ds,
\end{equation}
and $C_{j,k}=\mathbb{E}\{Z(h_{j})Z(h_{k})\}=0$ for all $j,k=1,\ldots,m$.
\end{pf}

%% S5
%s5 #&#
\section{Macroscopic regime}
\label{seweakconv}
The main goal of this section is to prove Theorem~\ref{thglobal}.
Namely, we will show that the process $\tilde{D}_{N}(x)$ (\ref
{logdet}) converges in probability law as $N \to\infty$ to the
generalized Gaussian process $F(x)$ given by (\ref{1fch}). The
convergence is interpreted in the Sobolev space $V^{(-a)}$, that is, the assertion of Theorem~\ref{thglobal} is that for any
bounded continuous functional $q$ on $V^{(-a)}$, we have
%
%e5.1 #&#
%
\begin{equation}
\label{eqweakconveq} \lim_{N\to\infty} \mathbb{E} \bigl\{ q(
\tilde{D}_{N} ) \bigr\}= \mathbb{E} \bigl\{ q(F) \bigr\}.
\end{equation}
Our proof is an adaptation for the GUE matrices ${\mathcal H}$ of the proof
of a similar result for the CUE matrices given in \cite{HKOC01}.
First, we will prove that the finite-dimensional distributions of
$\tilde D_N(x)$ converge to those of $F(x)$ and then establish that the
sequence $\tilde D_N(x)$ is tight in $V^{(-a)}$. This will imply the
convergence in probability law in $V^{(-a)}$ as in (\ref
{eqweakconveq}). As explained in Section~\ref{sec2.2}, for the GUE
matrices there are additional analytical complications compared with
the case of CUE matrices.

We start with a deterministic result, writing down the
Chebyshev--Fourier series for $\tilde D_N(x)$.

%le5.1 #&#

\begin{lemma}
\label{lechebycoeffs} Let $\mathcal H$ be a Hermitian matrix of size
$N\times N$ with eigenvalues $x_1, \ldots, x_N$. Then
\[
-\log\bigl|\det(\mathcal{H}-xI)\bigr| = N\log2+ \sum_{k=0}^{\infty}
c_k (D_N) T_k(x),
\]
where the convergence is pointwise for any $x\in[-1,1]\setminus\{
x_1, \ldots, x_N \}$ and the Chebyshev--Fourier coefficients
$c_{k}({D}_{N})$ are given for any $k > 0$ by the formula
%
%e5.2 #&#
%
\begin{equation}
\label{insertcoeffs}
c_{k}(D_{N}) = \sum
_{j=1}^{N}\frac{2}{k}T_{k}(x_j)+
\sum_{j=1}^{N} r^{+}_{k}(x_{j})+
\sum_{j=1}^{N}r^{-}_{k}(x_{j})
\end{equation}
and
%
%e5.3 #&#
%
\begin{equation}
\label{insertcoeffs2} c_{0}(D_{N}) = -\sum
_{j=1}^{N}r^{+}_{0}(x_{j})
-\sum_{j=1}^{N}r^{-}_{0}(x_{j}),
\end{equation}
where for $k > 0$
%
%e5.4 #&#
%
\begin{equation}
\label{testfns} %\phi^{\pm}_{k}(x) = \mp\frac{2}{k}\left(T_{k}(x)-
%\left(x-\sqrt{x^{2}-1}
%\right)^{k}\right)\chi_{(\pm1,\pm\infty)}(x)
r^{\pm}_{k}(x) =
\bigl[(2/k) (-T_{k}(x) + \bigl(x \mp\sqrt{x^{2}-1}
\bigr)^{k} \bigr]\chi_{(\pm1, \pm\infty)}(x)
\end{equation}
and
%
%e5.5 #&#
%
\begin{equation}
%\phi^{\pm}_{0}(x) = \log\bigg{|}x-\sqrt{x^{2}-1}\bigg{|}\chi_{(\pm1,
%\pm\infty)}(x)
r^{\pm}_{0}(x) = \log\bigl|x \mp
\sqrt{x^{2}-1}\bigr| \chi_{(\pm1, \pm
\infty)}(x).
\end{equation}
In the above formulae, $\chi_{J}(x)$ is the indicator function on the
set $J$. % Note that due to the subtraction of the expectation in the
%definition \eqref{logdet}, some terms independent of $k$ and
%$(x_{j})_{j=1}^{N}$ have cancelled out of the expressions.
\end{lemma}

\begin{pf}
This follows immediately from Lemma~3.1 in \cite{GP13}.
\end{pf}

It follows from this lemma that for our random matrices $\mathcal H$, with
probability one,
\[
%\label{DNexpansion}
\tilde{D}_{N}(x)= \sum_{k=0}^{\infty}
c_k (\tilde D_N) T_k(x) \qquad
\mbox{where }  c_k (\tilde D_N)
=c_{k}({D}_{N})- \mathbb{E} \bigl\{ c_{k}({D}_{N})
\bigr\}.
\]

% Now to establish Theorem~\ref{thglobal} it will, as usual, be
%sufficient to prove both the convergence of finite-dimensional
%distributions and to show uniform tightness of $D_{N}$ in the space
%$V^{(-a)}$, see \cite{B68} . In the context of $V^{(-a)}$, the
%finite-dimensional convergence entails showing the convergence in
%distribution of any finite truncation of the sum \eqref{DNexpansion}.
%s5.1 #&#
\subsection{Convergence of finite-dimensional distributions}
The main goal of this subsection is to establish the following.
%
%pr5.2 #&#

\begin{proposition}
\label{propfindim}
Fix $M \in\mathbb{N}$ and let $X_{1},\ldots,X_{M}$ be independent
Gaussian random variables with mean zero and variance one. Then for any
$(t_{k})_{k=1}^{M} \in\mathbb{R}^{M}$ we have the convergence in distribution
%
%e5.6 #&#
%
\begin{equation}\label{chebydist}
\sum_{k=0}^{M}c_{k}(
\tilde{D}_{N})t_{k} \stackrel{d} {\Longrightarrow} \sum
_{k=1}^{M}\frac{X_{k}}{\sqrt{k}}t_{k},
\qquad N \to\infty.
\end{equation}
\end{proposition}

\begin{pf}
We begin by inserting equation (\ref{insertcoeffs}) into the LHS of
(\ref{chebydist}). Then from \cite{Joh98} or \cite{PS11}, we know
that the sum
%
%e5.7 #&#
%
\begin{equation}
\sum_{k=1}^{M}t_{k} \Biggl(
\sum_{j=1}^{N}\frac
{2}{k}T_{k}(x_{j})-
\mathbb{E} \Biggl\{\sum_{j=1}^{N}
\frac
{2}{k}T_{k}(x_{j}) \Biggr\} \Biggr)
\end{equation}
converges in distribution to the RHS of (\ref{chebydist}) as $N \to
\infty$. The main technical part of our proof of (\ref{chebydist})
consists in showing that the other terms appearing in (\ref
{insertcoeffs}) and~(\ref{insertcoeffs2}) do not contribute in the
limit $N \to\infty$. All such terms that appear are of the form
%
%e5.8 #&#
%
\begin{equation}
\label{linstatsing} A^{\pm}_{k,N} = \sum
_{j=1}^{N}r^{\pm}_{k}(x_{j})
\end{equation}
and by definition of the test function $r^{\pm}_{k}(x)$, they are
nonzero only when an eigenvalue $x_{j}$ lies outside the bulk of the
limiting spectrum $[-1,1]$. Intuitively, this is a rare event and we
show below that in fact $\mathbb{E}|A^{\pm}_{k,N}| \to0$ as $N \to
\infty$. We note in passing that the regularity of the test functions
$r^{\pm}_{k}(x)$ lies outside the best known $C^{1/2+\varepsilon}$
threshold in \cite{SW13}, due to the singularities at the spectral edges.

Let us focus our attention on the case $\mathbb{E}\{|A^{+}_{k,N}|\}$,
since the estimation of $\mathbb{E}\{|A^{-}_{k,N}|\}$ follows exactly
the same pattern. First, one sees from the explicit formula (\ref
{testfns}) and the elementary inequality $(x-\sqrt{x^{2}-1})^{k} \leq
T_{k}(x) \leq(x+\sqrt{x^{2}-1})^{k}$, $x\geq1$ that $-r^{+}_{k}(x)$
is nonnegative for all $x \in\mathbb{R}$. Therefore, $\mathbb{E}\{
|A^{+}_{k,N}|\} = -\mathbb{E}\{A^{+}_{k,N}\}$. %Hence we can suppose
%without loss of generality that the $A^{\pm}_{k,N}$ are all
%\textit{non-negative} random variables.%
%Hence Markov's inequality implies
%\begin{equation}
%\mathbb{P}\{|A^{\pm}_{k,N}| \geq\delta\} \leq\mathbb{E}\{|A^{
%\pm}_{k,N}|\}/\delta
%\end{equation}
%Hence to show convergence in probability, it suffices to just show
%that $\mathbb{E}\{A^{\pm}_{k,N}\} \to0$ as $N \to\infty$.

In terms of the normalized eigenvalue density, we have %alteration
%07/07/2014 - missed + in \phi.
%
%e5.9 #&#
%
\begin{equation}
\mathbb{E} \bigl\{A^{+}_{k,N} \bigr\} = N\int
_{1}^{\infty}r^{+}_{k}(x)
\rho_{N}(x)\,dx.
\end{equation}
%
% The idea is that the eigenvalue density $\rho_{N}(x)$ should decay
%outside the limiting spectrum $[-1,1]$. However, we have to pay
%special attention to eigenvalues close to the edge region $x=1$.
To proceed, we split the integral as
%
%e5.10 #&#
%
\begin{equation}
\label{edgeandouter}
\quad\mathbb{E} \bigl\{A^{+}_{k,N} \bigr\} = N
\int_{1}^{1+\delta_{N}}r^{+}_{k}(x)
\rho_{N}(x)\,dx+N\int_{1+\delta_{N}}^{\infty}r^{+}_{k}(x)
\rho_{N}(x)\,dx,
\end{equation}
where we choose $\delta_{N}=N^{-7/12}$. The first integral in (\ref
{edgeandouter}) is over a shrinking neighbourhood of the spectral edge
$x=1$. An estimate that holds uniformly in this region can be given in
terms of the Airy function $\mathrm{Ai}(x)$ and its derivatives. In
particular, equation (4.4) of \cite{EM03} (see also the Proof
of Lemma~2.2 in \cite{G05}) shows that as $N \to\infty$
%
%e5.11 #&#
%e5.12 #&#
%
\begin{eqnarray}
%
%\begin{split}
N\rho_{N}(x) &=& \biggl(\frac{\Phi'(x)}{4\Phi(x)}-
\frac{\gamma
'(x)}{\gamma(x)} \biggr) \bigl[2\mathrm{Ai} \bigl(N^{2/3}\Phi(x)
\bigr) \mathrm{Ai}' \bigl(N^{2/3}\Phi(x) \bigr) \bigr]
\nonumber
\\
\label{airyasy}
&&{}+N^{2/3}\Phi'(x) \bigl[ \bigl(\mathrm{Ai}'
\bigl(N^{2/3}\Phi(x) \bigr) \bigr)^{2}-N^{2/3}\Phi
(x) \bigl(\mathrm{Ai} \bigl(N^{2/3}\Phi(x) \bigr) \bigr)^{2}
\bigr]\\
\nonumber
&&{}+O \biggl(\frac{1}{N(\sqrt
{x-1})} \biggr),  %\end{split}
\end{eqnarray}
where
%
%e5.13 #&#
%
\begin{equation}
\gamma(x) = \biggl(\frac{x-1}{x+1} \biggr)^{1/4}
\end{equation}
and
%
%e5.14 #&#
%
\begin{equation}
\Phi(x) = %
\cases{\displaystyle - \biggl(3\int_{x}^{1}
\sqrt{1-y^{2}}\,dy \biggr)^{2/3}, & \quad$|x|\leq1$,
\vspace*{3pt}
\cr
\displaystyle\biggl(3\int_{1}^{x}
\sqrt{y^{2}-1}\,dy \biggr)^{2/3}, & \quad$|x|>1$.}
\end{equation}
Since $\Phi(x)\geq0$ for $x\geq1$, the functions\vspace*{1pt} $\mathrm{Ai}(N^{2/3}\Phi(x))$ and $\mathrm{Ai}'(N^{2/3}\Phi(x))$ are
uniformly bounded on $[1,\infty)$. Furthermore,\vspace*{1pt} $ (\frac{\Phi
'(x)}{4\Phi(x)}-\frac{\gamma'(x)}{\gamma(x)} )$ and $\Phi
'(x)$ are bounded near $x=1$. Inserting (\ref{airyasy}) into the first
integral in (\ref{edgeandouter}), we obtain the bound
%
%e5.15 #&#
%
\begin{equation}
\label{twoints} N\int_{1}^{1+\delta_{N}}r^{+}_{k}(x)
\rho_{N}(x)\,dx = c_{1}N^{2/3}\int
_{1}^{1+\delta_{N}}r^{+}_{k}(x)
\,dx+O \biggl(\frac
{1}{N} \biggr),
\end{equation}
where $c_{1}$ is an $N$-independent constant. In (\ref{twoints}), we
used that $r^{+}_{k}(x)(x-1)^{-1/2}$ is bounded near $x=1$ to estimate
the contribution of the error term in (\ref{airyasy}). A~simple
computation shows that $\int_{1}^{1+\delta_{N}}r^{+}_{k}(x)\,
dx=O(\delta_{N}^{3/2})$ as $N \to\infty$ for $k \geq0$. Inserting
the latter into (\ref{twoints}) yields the bound
%
%e5.16 #&#
%
\begin{equation}
N\int_{1}^{1+\delta_{N}}r^{+}_{k}(x)
\rho_{N}(x)\,dx = O \bigl(N^{2/3}\delta_{N}^{3/2}
\bigr) = O \bigl(N^{-5/24} \bigr).
\end{equation}
Now consider the second integral in (\ref{edgeandouter}). We will
prove below that it is exponentially small as $N \to\infty$. Using
the fact that (for $k\geq1$) $-r^{+}_{k}(x) \leq T_{k}(x)$ and
applying Lemma~\ref{leexpdecaylem}, we obtain
%
%e5.17 #&#
%e5.18 #&#
%e5.19 #&#
%
\begin{eqnarray}\label{cint1}
&&-N\int_{1+\delta_{N}}^{\infty}r^{+}_{k}(x)
\rho_{N}(x)\,dx
\\
&&\qquad \leq N\delta_{N}\int_{1}^{\infty}T_{k}(1+u
\delta_{N})\rho_{N}(1+u\delta_{N})\,du
\\
\label{cint}
&&\qquad \leq B^{-1}\int_{1}^{\infty}u^{-1}T_{k}(1+u
\delta_{N})e^{-buN^{1/8}}\,du,
\end{eqnarray}
where $B,b>0$ are absolute constants. Then, for example,
expanding $T_{k}(1+u\delta_{N})$ in powers of $(u\delta_{N})$ and
integrating (\ref{cint}) term\vspace*{1pt} by term, we can apply the standard
Laplace method and find that (\ref{cint}) is $O(e^{-cN^{1/8}})$ for
some $c>0$. If $k=0$ in the integral (\ref{cint1}), one can use the
inequality $|r^{+}_{0}(1+x)| \leq\sqrt{2x}$, $x>0$ and then apply the
Laplace method as before yielding a similar error bound. This completes
the proof of the proposition.
\end{pf}

%s5.2 #&#
\subsection{Tightness}
% The next Lemma was the main tool used to prove the uniform tightness
%in Proposition~\ref{proptight}. It also played a crucial role in
%estimating the second sum on the right-hand side of \eqref{decompose}.
The final\vspace*{1pt} ingredient required for proving the weak convergence in
(\ref{eqweakconveq}) is to show that the sequence $\tilde{D}_{N}$ is
tight in $V^{(-a)}$. In direct analogy to the proof given in
Theorem~2.5 of \cite{HKOC01} for the Circular Unitary Ensemble, we will
exploit the convenient fact that for $-\infty< a < b < \infty$, the
closed unit ball in $V^{(b)}$ is compact in $V^{(a)}$. Then by
Chebyshev's inequality, tightness follows if we can bound the variance
%
%e5.20 #&#
%
\begin{equation}
\mathbb{E}\Vert \tilde{D}_{N}\Vert^{2}_{(-b)} = \sum
_{k=0}^{\infty
}\mathbb{E} \bigl
\{c_{k}(\tilde{D}_{N})^{2} \bigr\}
\bigl(1+k^{2} \bigr)^{-b}
\end{equation}
uniformly in $N$. Such a uniform bound will follow for any $b > 1/2$
provided we show that $\mathbb{E}\{c_{k}(\tilde{D}_{N})^{2}\} \leq
C$ for some constant $C$ independent of $k$ and $N$. We begin by
writing the Chebyshev--Fourier coefficient as
%
%e5.21 #&#
%
\begin{equation}
c_{k}(\tilde{D}_{N}) = \sum_{j=1}^{N}h_{k}(x_{j})-
\mathbb{E} \Biggl\{ \sum_{j=1}^{N}h_{k}(x_{j})
\Biggr\},
\end{equation}
where
%
%e5.22 #&#
%e5.23 #&#
%
\begin{eqnarray}
%
%\begin{split}
h_{k}(x) &=& (2/k)T_{k}(x)
\chi_{[-1,1]}(x) - (2/k) \bigl(x-\sqrt{x^{2}-1}
\bigr)^{k}\chi_{(1,\infty)}(x)
\nonumber
\\[-8pt]
\\[-8pt]
\nonumber
&&{}-(2/k) \bigl(x+\sqrt{x^{2}-1} \bigr)^{k}
\chi_{(-1,-\infty)}(x). %h_{k}(x) = \frac{2}{k}T_{k}(x)
%\chi_{[-1,1]}(x)+\frac{2}{k}\left(x-
%\sqrt{x^{2}-1}\right)^{k}\chi_{(1,\infty)}(x)-\frac{2}{k}\left(x-
%\sqrt{x^{2}-1}\right)^{k}\chi_{(-\infty,-1)}(x).
%\end{split}
%
\end{eqnarray}
Then by formula (\ref{orig}), we have
%
%e5.24 #&#
%
\begin{equation}\label{varform}
\mathbb{E} \bigl\{c_{k}(\tilde{D}_{N})^{2} \bigr
\} = \frac{1}{8}\int_{\mathbb
{R}^{2}} \bigl(h_{k}(x_{1})-h_{k}(x_{2})
\bigr)^{2}K_{N}(x_{1},x_{2})^{2}
\,dx_{1}\, dx_{2},
\end{equation}
where $K_{N}(x,y)$ is the GUE kernel defined in equation (\ref{guekernel}).

First, we consider the contribution to the integral (\ref{varform})
coming from the region $[-1,1]^{2}$, namely the integral
%
%e5.25 #&#
%
\begin{equation}\label{varform2}
\frac{1}{2k^{2}}\int_{[-1,1]^{2}} \biggl(\frac{\Delta
T_{k}(x)}{\Delta x}
\biggr)^{2} \mathcal{F}_{N}(x_{1},x_{2})
\,dx_{1}\, dx_{2},
\end{equation}
where $\mathcal{F}_{N}(x_{1},x_{2})$ is defined by (\ref{cfn}) and,
as in Section~\ref{sefouriercoeff}, for a function $f$, we denote by $\Delta f$ the
difference $\Delta f(x) = f(x_{1})-f(x_{2})$. By the Plancherel--Rotach
asymptotics of Hermite polynomials, we have the bound (as follows from,
e.g., parts (iii) and (v) of Theorem~2.2 in \cite{DKMVZ99})
%
%e5.26 #&#
%
\begin{equation}
\bigl|\mathcal{F}_{N}(x_{1},x_{2})\bigr| \leq
\frac{K_{1}}{\sqrt
{1-x_{1}^{2}}\sqrt{1-x_{2}^{2}}}
\end{equation}
uniformly for $(x_{1},x_{2}) \in[-1,1]^{2}$. This implies that the
modulus of (\ref{varform2}) is bounded by
%
%e5.27 #&#
%
\begin{equation}\label{explicitint}
\frac{K_{1}}{2k^{2}}\int_{[-1,1]^{2}} \biggl(\frac{\Delta
T_{k}(x)}{\Delta x}
\biggr)^{2}\frac{1}{\sqrt{1-x_{1}^{2}}\sqrt
{1-x_{2}^{2}}}\,dx_{1}\,dx_{2}=K_{1}
\pi^{2}/8.
\end{equation}
The equality in (\ref{explicitint}) is a simple exercise involving
standard properties of Chebyshev polynomials and we omit the derivation.

Finally, consider the contribution to the integral (\ref{varform})
from outside the square $[-1,1]^{2}$. For simplicity, consider just the
region $1 < x_{1} < \infty$ and $-1<x_{2}<1$, all others being
analogous. Since $h_{k}(x)$ is uniformly bounded in $k$ and $x$ on the
whole real line, we have
%
%e5.28 #&#
%e5.29 #&#
%e5.30 #&#
%
\begin{eqnarray}
&&\int_{-1}^{1}\int_{1}^{\infty
}
\bigl(h_{k}(x_{1})-h_{k}(x_{2})
\bigr)^{2}K_{N}(x_{1},x_{2})^{2}
\,dx_{1}\,dx_{2}
\\
&&\qquad \leq\int_{-\infty}^{\infty}\int_{1}^{\infty
}K_{N}(x_{1},x_{2})^{2}
\,dx_{1}\,dx_{2}
\\
\label{lastint}
&&\qquad =\int_{1}^{\infty}N\rho_{N}(x_{1})
\,dx_{1}= \int_{1}^{1+\delta
}N
\rho_{N}(x_{1})\,dx_{1}+O \bigl(Ne^{-c_{\delta}N}
\bigr),
\end{eqnarray}
where $\delta>0$ is a constant and $c_{\delta}>0$. The last equality
in (\ref{lastint}) follows from Theorem~5.2.3(iii) in \cite{PS11}.
Now we can insert the formula (\ref{airyasy}) which holds uniformly on
$[1,1+\delta]$. The first term in (\ref{airyasy}) is bounded in $N$
and $x_{1}$ and so its integral over $[1,1+\delta]$ is bounded in $N$.
The third term gives an error of order $1/N$. The contribution from the
middle term can be explicitly integrated using the substitution
$u=N^{2/3}\Phi(x_{2})$:
%
%e5.31 #&#
%e5.32 #&#
%e5.33 #&#
%e5.34 #&#
%
\begin{eqnarray}
&&\qquad\int_{1}^{1+\delta}N^{2/3}
\Phi'(x_{2}) \bigl(\mathrm{Ai}'^{2}
\bigl(N^{2/3}\Phi(x_{2}) \bigr)-N^{2/3}
\Phi(x_{2})\mathrm{Ai}^{2} \bigl(N^{2/3}
\Phi(x_{2}) \bigr) \bigr)\,dx_{2}
\\
&&\qquad\qquad=\int_{0}^{N^{2/3}\Phi(1+\delta)} \bigl[\mathrm{Ai}'^{2}(u)-u
\mathrm{Ai}^{2}(u) \bigr]\,du
\\
&&\qquad\qquad=- \biggl[\frac{2}{3} \bigl(u^{2}\mathrm{Ai}^{2}(u)-u
\mathrm{Ai}'^{2}(u) \bigr)-\frac{1}{3}
\mathrm{Ai}(u)\mathrm{Ai}'(u) \biggr]_{0}^{N^{2/3}\Phi(1+\delta)}
\\
&&\qquad\qquad =\mathrm{Ai}(0)\mathrm{Ai}'(0)/3+O \bigl(e^{-d_{\delta}N} \bigr),
\end{eqnarray}
where $d_{\delta}>0$. A completely\vspace*{1pt} analogous argument proves that the
integral over the region $\{1<x_{1}<\infty, 1<x_{2}<\infty\}$ is also
uniformly bounded in $k$ and $N$, in addition to the remaining $6$
regions that make up $B^{c}$. This completes the proof that $\tilde
{D}_{N}$ is tight in $V^{(-a)}$ for any $a > 1/2$, and hence completes
the proof of Theorem~\ref{thglobal}.

\begin{appendix}\label{app}
%s6 #&#
\section{Proof of Proposition \texorpdfstring{\protect\ref{11111}}{3.2}}
\label{apriemann}
The purpose of this Appendix is to give the technical details required
to show that the matrix $P_{\infty}(z)$ in Section~\ref{seouter}
gives a good approximation to the matrix $S(z)$ in Section~\ref
{sesmatrix} for large $N$, as described by Proposition~\ref
{11111}. Although we can mostly follow the now standard techniques
described in \cite{DKMVZ99}, we must take special care with the
estimates because the system of contours in Figure~\ref{figcontour}
can come arbitrarily close to the real axis as $N \to\infty$.
%
%re6.1 #&#

\begin{remark}
In this Appendix, there are many estimates holding uniformly in the
parameters\vspace*{1pt} $\{\tau_{k}\}_{k=1}^{m-1}$, $\{\alpha_{k}\}_{k=1}^{m-1}$
and $x_{0}$ that appear in the partition function~(\ref{multint}). We
will use the big-oh notation $\mathcal{O}$ (distinguished from the
usual $O$) for an error term that defines an analytic function of the
parameters $\{\alpha_{k}\}_{k=1}^{m-1}$ on $\Omega$ [cf. (\ref
{omset})] satisfying uniformity in the following parameters:
\begin{itemize}
\item $\tau_{k}$ varying in a compact subset of $\mathbb{R}$ for
$k=1,\ldots,m-1$,
\item $\alpha_{k}$ varying in a compact subset of $\Omega$ for
$k=1,\ldots,m-1$,
\item $x_{0}$ varying in a compact subset of $(-1+\delta,1-\delta)$.
\end{itemize}
\end{remark}

\subsection*{Construction of the parametrices at \texorpdfstring{$z=\pm1$}{z=+-1}}
The parametrices at $z=\pm1$ consist of a matrix valued function
$P_{\pm1}(z)$ defined in the discs $B_{\pm1}(\delta)$ (cf.
Figure~\ref{figcontour}) satisfying the following properties:
\begin{longlist}[3.]
\item[1.] $P_{\pm1}(z)$ is analytic in $B_{\pm1}(\delta)\setminus\Sigma$.
%alteration 05/07/2014
%
\item[2.] $P_{\pm1}(z)$ satisfies the same jump conditions as $S(z)$ on
$\Sigma\cap B_{\pm\delta}$.
\item[3.] The following matching condition is satisfied on the boundary
$\partial B_{\pm1}(\delta)$:
%
%e6.1 #&#
%
\begin{equation}
\label{matching}
P_{\pm1}(z)P_{\infty}(z)^{-1} = I+O
\bigl(N^{-1} \bigr), \qquad z \in\partial B_{\pm1}(\delta),
\end{equation}
as $N \to\infty$.
\end{longlist}

The functions $P_{1}(z)$ and $P_{-1}(z)$ can be obtained in precisely
the same way as in~\cite{K07}, which was itself based on the
construction in \cite{DKMVZ99} corresponding to weights $\omega(z)
\equiv1$. In our situation, the only difference is that our weight
$\omega(z)$ and the Szeg\"o function $\mathcal{D}(z)$ are
$N$-dependent, so that one has to be careful with the matching
condition (\ref{matching}). From equation (76) in \cite{K07}, we have
%
%e6.2 #&#
%
\begin{eqnarray}
&&\qquad P_{\pm1}(z)P_{\infty}(z)^{-1}
\nonumber
\\[-8pt]
\label{parametrix}
\\[-8pt]
\nonumber
&&\qquad\qquad =
P_{\infty}(z)\omega(z)^{\sigma
_{3}/2}\tilde{P}_{\infty}(z)^{-1}
\tilde{P}_{\pm1}(z)\tilde{P}_{\infty}(z)^{-1}
\tilde{P}_{\infty}(z)\omega(z)^{-\sigma
_{3}/2}P_{\infty}(z)^{-1},
\end{eqnarray}
where\vspace*{1pt} $\tilde{P}_{\pm1}(z)$ and $\tilde{P}_{\infty}(z)$ are the
quantities $P_{\pm1}(z)$ and $P_{\infty}(z)$ with $\omega(z)\equiv
1$. For our purposes, we will not need the explicit expression for
$\tilde{P}_{\pm1}(z)$, which can be found in, for example,
\cite{DKMVZ99} or \cite{K07}. Our main goal here is to check that the
matching condition (\ref{matching}) is still satisfied.

%le6.2 #&#
%
\begin{lemma}
\label{lediscestlem}
Let $P_{\pm1}(z)$ denote the parametrix defined in (\ref
{parametrix}). Then we have as $N \to\infty$
%
%e6.3 #&#
%
\begin{equation}
P_{\pm1}(z)P_{\infty}(z)^{-1}=I+\frac{\tilde{\Delta}_{1}^{(\pm
1)}(z)}{N}+
\mathcal{O} \biggl(\frac{1}{Nd_{N}} \biggr), \qquad z \in\partial
B_{\pm1}(\delta),
\end{equation}
where the estimate is uniform for $z \in\partial B_{\pm1}(\delta)$.
The first correction term $\tilde{\Delta}_{1}^{(\pm1)}(z)$ depends
only on $z$ and is analytic except for a second-order pole at $z = \pm1$.
\end{lemma}

\begin{pf}
Proposition~7.7 of \cite{DKMVZ99} implies that there is a uniform
asymptotic expansion
%
%e6.4 #&#
%
\begin{equation} \label{deiftexp}
\tilde{P}_{\pm1}(z)\tilde{P}_{\infty}(z)^{-1} \sim I +
\sum_{k=1}^{\infty}\frac{\tilde{\Delta}^{(\pm1)}_{k}(z)}{N^{k}},
\qquad z
\in\partial B_{\pm1}(z),
\end{equation}
where $\tilde{\Delta}^{(\pm1)}_{k}(z)$ are independent of $N$ [and
independent of $\omega(z)$], and have meromorphic continuations inside
the disc $\partial B_{\pm1}(\delta)$ with a pole of order $(3k+1)/2$
at $z=\pm1$. Inserting (\ref{deiftexp}) back into (\ref
{parametrix}), we find that
%
%e6.5 #&#
%
\begin{equation}
\label{asympt} \quad\hspace*{5pt} P_{\pm1}(z)P_{\infty}(z)^{-1}-I \sim
\sum_{k=1}^{\infty}\frac
{Q(z)\tilde{\Delta}^{(\pm1)}_{k}(z)Q(z)^{-1}}{N^{k}}, \qquad z
\in\partial B_{\pm1}(\delta),
\end{equation}
where $Q(z) = P_{\infty}(z)\omega(z)^{\sigma_{3}/2}\tilde
{P}_{\infty}(z)^{-1}$. To prove the lemma, it is sufficient to show that
%
%e6.6 #&#
%
\begin{equation} \label{qasympt}
Q(z) = I+\mathcal{O} \bigl(d_{N}^{-1} \bigr), \qquad z \in
B_{\pm1}(\delta).
\end{equation}
First, note that
%
%e6.7 #&#
%
\begin{equation}\label{para1est}
\omega(z) = 1+\mathcal{O} \bigl(d_{N}^{-1} \bigr), \qquad z
\in\partial B_{\pm
1}(\delta)\cup[-1,1]
\end{equation}
as follows immediately from the representation (\ref{omegaweight}).
Then the proof is complete if we can check that
%
%e6.8 #&#
%
\begin{equation}\label{szest}
\qquad\frac{\sqrt{z-1}\sqrt{z+1}}{2\pi}\int_{-1}^{1}
\frac{\log\omega
(x)}{\sqrt{1-x^{2}}(z-x)}\,dx = \mathcal{O} \bigl(d_{N}^{-1} \bigr),
\qquad z \in\partial B_{\pm1}(\delta)
\end{equation}
because this would imply the corresponding estimate for the Szeg\"o\vspace*{2pt}
function $\mathcal{D}(z) = 1+\mathcal{O}(d_{N}^{-1})$ [cf. (\ref
{sz})] so that $P_{\infty}(z)=\tilde{P}_{\infty}(z)+\mathcal
{O}(d_{N}^{-1})$. We will prove (\ref{szest}) below only for $z \in
\partial B_{1}(\delta)$, the case $z \in\partial B_{-1}(\delta)$
being identical. If $(z-x)^{-1}$ is bounded, the result follows
immediately from (\ref{para1est}), therefore, we consider only the
contribution to the integral (\ref{szest}) from a small neighbourhood
$[1-\delta-\varepsilon_{0},1-\delta+\varepsilon_{0}]$ and the points $z
\in\partial B_{1}(\delta)$ such that $0<|z-(1-\delta)|<\varepsilon
_{0}/2$. First, consider $\operatorname{Im}(z)>0$ and let $\mathcal
{C}$ denote the
clockwise oriented semi-circle in the upper-half plane connecting the
points $1-\delta-\varepsilon_{0}$ and $1-\delta+\varepsilon_{0}$. Then by
the residue theorem and analyticity of $\omega(x)$, (\ref{szest}) is
equal to
%
%e6.9 #&#
%
\begin{equation}\label{rescalc}\quad
i\sqrt{z+1}\sqrt{z-1}\frac{\log\omega(z)}{\sqrt{1-z^{2}}}+\frac
{\sqrt{z-1}\sqrt{z+1}}{2\pi}\int
_{\mathcal{C}}\frac{\log\omega
(x)}{\sqrt{1-x^{2}}(x-z)}\,dx,
\end{equation}
where we take the principal branch of the square root. Now both terms
in (\ref{rescalc}) are clearly $O(d_{N}^{-1})$, as follows\vspace*{1pt} from (\ref
{para1est}) and the fact that $(x-z)^{-1}$ is uniformly bounded in
(\ref{rescalc}). A similar calculation applies when $\operatorname
{Im}(z)<0$. This
completes the proof of the lemma.
\end{pf}
\subsection*{Final transformation}
\label{secontours}
We will now define the final transformation of the Riemann--Hilbert
problem, $S \to R$. As usual, we set
%
%e6.10 #&#
%
\begin{eqnarray}
\label{rsp} R(z) &= &
\cases{ S(z)P_{\infty}(z)^{-1},
& \quad $z \in U_{\infty}\setminus\Sigma$,\vspace*{3pt}
\cr
S(z)P_{\pm1}(z)^{-1}, & \quad$z \in B_{\pm1}(\delta)
\setminus\Sigma$.}%
\end{eqnarray}
%
%f2 #&#
%
\begin{figure}

\includegraphics{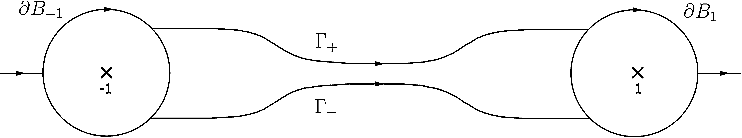}

\caption{The contour $\Sigma_{R}$ for the $R(z)$ Riemann--Hilbert
problem. The parts of the lenses $\Gamma=\Sigma\setminus\partial
B_{\pm1}(\delta)$ near $x_{0}$ are of distance $O(d_{N}^{-1})$ from
the real line. The circles $\partial B_{\pm1}(\delta)$ are of radius
$\delta$.}
\label{figrcontour}
\end{figure}
%
%alteration 05/07/2014

\noindent From the Riemann--Hilbert problem for $S(z)$, it is easily shown that
$R(z)$ has jumps only on $\partial B_{\pm1}(\delta)$, $\mathbb
{R}\setminus[-1-\delta,1+\delta]$ and the parts of $\Sigma_{\pm}$
outside of $B_{1}(\delta)\cup B_{-1}(\delta)$ (denoted here by
$\Gamma_{\pm}$). In what follows, we will denote the disjoint union
of these contours as $\Sigma_{R}$, which we plot in Figure~\ref
{figrcontour}. The function $R(z)$ satisfies the following:
\begin{longlist}[3.]
\item[1.] $R(z)$ is analytic in $\mathbb{C}\setminus\Sigma_{R}$.
\item[2.] $R(z)$ satisfies the jump condition $R_{+}(s)=R_{-}(s)J(s)$ where
%
%e6.11 #&#
%e6.12 #&#
%
\begin{eqnarray}
J(s) &=& P_{\infty}(s) %
\pmatrix{ 1 & \omega(s)e^{N(g_{+}(s)+g_{-}(s)-2s^{2}-l)}
\vspace*{3pt}
\cr
0 & 1 }
P_{\infty}(s)^{-1},
\nonumber
\\[-8pt]
\\[-8pt]
\eqntext{\displaystyle  s \in
\mathbb{R}\setminus[-1-\delta,1+\delta],}
\\
J(s) &=& P_{\infty}(s) %
\pmatrix{ 1 & 0\vspace*{2pt}
\cr
\omega(s)^{-1}e^{\mp Nh(s)} & 1 } P_{\infty}(s)^{-1},
\qquad s \in\Gamma_{\pm},
%\label{gamjump}
\\
J(s) &=& P_{\pm1}(s)P_{\infty}(s)^{-1},
\qquad s \in\partial B_{\pm1}.
\end{eqnarray}
\item[3.] $R(z) = I+O(z^{-1})$ as $z \to\infty$.
\end{longlist}

\subsection*{Estimating the jump matrix \texorpdfstring{$\Delta(s)$}{Delta(s)}}
\label{sejumpest}
Before we estimate the jump matrix, we need to understand the behaviour
of $P_{\infty}(z)$ [cf. (\ref{pinf})] on the contours $\Gamma_{\pm}$.
%
%le6.3 #&#

\begin{lemma}
\label{leszebound}
The Szeg\"o function $\mathcal{D}(s)$ in (\ref{sz}) and its inverse
$\mathcal{D}(s)^{-1}$ are uniformly bounded on the contours $\Gamma
_{\pm}$. In fact, we have
%
%e6.13 #&#
%
\begin{equation}
\log\mathcal{D}(s) = \mathcal{O}(1), \qquad N \to\infty,
\end{equation}
uniformly for $s \in\Gamma_{\pm}$.
\end{lemma}

\begin{pf}
It suffices to prove that
%
%e6.14 #&#
%
\begin{equation}
\label{intbound}
\int_{-1}^{1}\frac{\log\omega(x)}{(s-x)\sqrt{1-x^{2}}}
\,dx = \mathcal{O}(1).
\end{equation}
We remind the reader that the weight $\omega(x)$ can be written
%
%e6.15 #&#
%
\begin{equation}
\omega(x) = \prod_{k=1}^{m-1} \biggl[
\frac{(x-x_{0}-\tau
_{k}/d_{N})^{2}+(\eta/d_{N})^{2}}{(x-x_{0})^{2}+(\eta
/d_{N})^{2}} \biggr]^{\alpha_{k}/2},
\end{equation}
as follows from the constraints on $\alpha_{k}$'s in (\ref{cyclic}).
We have the elementary inequality
%
%e6.16 #&#
%
\begin{equation}
\bigl|\log\bigl(\omega(x) \bigr)\bigr| \leq\frac{1}{2}\sum
_{k=1}^{m-1}|\alpha_{k}| \bigl|\log
\bigl(1+g_{\tau,\eta,N}(x,x_{0}) \bigr) \bigr|,
\end{equation}
where
%
%e6.17 #&#
%
\begin{equation}
g_{\tau,\eta,N}(x,x_{0}) = \frac{(\tau/d_{N})^{2}-2(x-x_{0})\tau
/d_{N}}{(x-x_{0})^{2}+(\eta/d_{N})^{2}}.
\end{equation}
Now, clearly if $x\leq x^{*} = x_{0}+\tau/(2d_{N})$, we have $g_{\tau
,\eta,N}(x,x_{0})\geq0$, so that $\log(1+g_{\tau,\eta,N}(x,x_{0}))
\leq g_{\tau,\eta,N}(x,x_{0})$. If $x>x^{*}$, we symmetrise about the
point $x^{*}$ exploiting the symmetry $|\log(1+g_{\tau,\eta
,N}(x^{*}-x,x_{0}))| = |\log(1+g_{\tau,\eta,N}(x^{*}+x,x_{0}))|$ to obtain
%
%e6.18 #&#
%
\begin{equation}
\label{2ineqs}\quad
\bigl|\log\bigl(1+g_{\tau,\eta,N}(x,x_{0}) \bigr)\bigr|
\leq \bigl|g_{\tau,\eta,N}(x,x_{0})\bigr| + \bigl|g_{\tau,\eta,N}
\bigl(2x^{*}-x,x_{0} \bigr)\bigr|.
\end{equation}
We will focus only on the region $x \in[x_{0}-\varepsilon,x_{0}+\varepsilon
]$ as this gives the dominant contribution to the integral (\ref
{intbound}). For $s \in\Gamma_{\pm}$ and $x \in[x_{0}-\varepsilon
,x_{0}+\varepsilon]$, we have $|s-x|^{-1} \leq((x-x_{0})^{2}+(\eta
/2d_{N})^{2})^{-1/2}$ and $(1-x^{2})^{-1/2} = \mathcal{O}(1)$. Then
the contribution to (\ref{intbound}) from the first term on the RHS of
(\ref{2ineqs}) is bounded by
%
%e6.19 #&#
%e6.20 #&#
%
\begin{eqnarray}
&&\int_{x_{0}-\varepsilon}^{x_{0}+\varepsilon}\frac{|g_{\tau,\eta
,N}(x,x_{0})|}{\sqrt{(x-x_{0})^{2}+(\eta/2d_{N})^{2}}}\,dx
\nonumber
\\[-8pt]
\label{extendints}
\\[-8pt]
\nonumber
&&\qquad\leq  \int
_{-1}^{1}\frac{|(\tau/d_{N})^{2}-2x\tau/d_{N}|}{(x^{2}+(\eta
/2d_{N})^{2})^{3/2}}\,dx
\\
&&\qquad= \frac{8|\tau|}{\eta} \biggl(\frac{\sqrt{\tau^{2}/\eta
^{2}+1}\sqrt{(2d_{N}/\eta)^{2}+1}-1}{\sqrt{(2d_{N}/\eta
)^{2}+1}} \biggr) = \mathcal{O}(1),
\end{eqnarray}
where we changed variables $x \to x-x_{0}$ and extended the limits of
integration back to $[-1,1]$. The resulting integral on the RHS of
(\ref{extendints}) can be evaluated exactly in, for example, Maple.

For the\vspace*{1pt} second term in (\ref{2ineqs}), we use the estimate
$((x-x_{0})^{2}+(\eta/d_{N})^{2})^{-1/2} \leq c((x_{0}-x+\tau
/d_{N})^{2}+(\eta/d_{N})^{2})^{-1/2}$ (where $c$ depends on $\eta$
and $\tau$ only) to get
%
%e6.21 #&#
%e6.22 #&#
%
\begin{eqnarray}
&& \int_{x_{0}-\varepsilon}^{x_{0}+\varepsilon}\frac{|g_{\tau,\eta
,N}(2x^{*}-x,x_{0})|}{\sqrt{(x-x_{0})^{2}+(\eta/(2d_{N}))^{2}}}\,dx
\nonumber
\\[-8pt]
\\[-8pt]
\nonumber
&&\qquad\leq   c
\int_{x_{0}-\varepsilon}^{x_{0}+\varepsilon}\frac{|(\tau
/d_{N})^{2}-2(x_{0}-x+\tau/d_{N})\tau/d_{N}|}{((x_{0}-x+\tau
/d_{N})^{2}+(\eta/(2d_{N}))^{2})^{3/2}}\,dx
\\
&&\qquad=  c\int_{-\varepsilon+\tau/d_{N}}^{\varepsilon+\tau/d_{N}}\frac{|(\tau
/d_{N})^{2}-2u\tau/d_{N}|}{(u^{2}+(\eta/(2d_{N}))^{2})^{3/2}}\,du =
\mathcal{O}(1),
\end{eqnarray}
where we used that the last integral is bounded by the RHS of (\ref
{extendints}).
\end{pf}

%pr6.4 #&#
%
\begin{proposition}
\label{propdeltasigr}
Let $\Delta(s) = J(s)-I$ where $J(s)$ is the jump matrix for $R(z)$
defined on the contour $\Sigma_{R}$. We have the following bounds:
\begin{itemize}
\item On the discs
%
%e6.23 #&#
%
\begin{equation}\label{discest}
\bigl|\Delta(s)\bigr| = \mathcal{O} \bigl(N^{-1} \bigr), \qquad s \in
\partial
B_{\pm
1}(\delta).
\end{equation}
\item On the upper and lower lips
%
%e6.24 #&#
%
\begin{equation}\label{lipsbound}
\bigl|\Delta(s)\bigr| = \mathcal{O} \biggl(\exp\biggl(-c_{1}
\frac
{N}{d_{N}} \biggr) \biggr), \qquad s \in\Gamma_{\pm}.
\end{equation}
\item On the real line
%
%e6.25 #&#
%
\begin{equation}\label{realbound}
\bigl|\Delta(s)\bigr| = \mathcal{O} \bigl(\exp(-c_{2}N ) \bigr), \qquad s
\in
\mathbb{R}\setminus[-1-\delta,1+\delta].
\end{equation}
Here, $c_{1}>0$ and $c_{2}>0$ are constants depending only on $\delta$
and $\eta$.
\end{itemize}
\end{proposition}

\begin{pf}
The bound (\ref{discest}) follows immediately from Lemma~\ref
{lediscestlem}, while (\ref{realbound}) follows from the fact that
$P_{\infty}(s)$ is uniformly bounded in $ \mathbb{R}\setminus
[-1-\delta,1+\delta]$ combined with the inequalities (\ref{gprops}).
It remains to settle (\ref{lipsbound}). On the contours $\Gamma_{\pm
}$, we have the explicit expression
%
%e6.26 #&#
%
\begin{eqnarray}
&& \Delta(s) = e^{\mp Nh(s)} %
\pmatrix{ P_{\infty}(s)_{12}P_{\infty}(s)_{22}
& - \bigl(P_{\infty
}(s)_{12} \bigr)^{2}\vspace*{3pt}
\cr
\bigl(P_{\infty}(s)_{22} \bigr)^{2} &
-P_{\infty
}(s)_{12}P_{\infty}(s)_{22} },
\nonumber
\\[-8pt]
\\[-8pt]
\eqntext{\displaystyle s \in\Gamma_{\pm},}
\end{eqnarray}
where $h(s)$ was defined in (\ref{hdef}). By Lemma~\ref{leszebound},
we see that $P_{\infty}(s)$ is uniformly bounded on $\Gamma_{\pm}$.
Therefore, the only danger is that $\operatorname{Re}{h(s)}$ vanishes
too quickly as
$N \to\infty$. However, a careful examination of the function (\ref
{hdef}) shows that $\operatorname{Re}{h(z)}$ vanishes at the same rate
that the
contours $\Gamma_{\pm}$ collapse onto the real axis. Indeed, an
elementary calculation using Taylor's theorem shows that we have the
inequalities
%e6.27 #&#
%e6.28 #&#
%
\begin{eqnarray}
\operatorname{Re}\bigl(h(s) \bigr) &>&  c_{1}/d_{N}, \qquad s
\in\Gamma_{+},
\nonumber
\\[-8pt]
\\[-8pt]
\nonumber
\operatorname{Re}\bigl(h(s) \bigr) &<& -c_{1}/d_{N}, \qquad s \in
\Gamma_{-},
\end{eqnarray}
where $c_{1}=4\eta\sqrt{1-(1-\delta)^{2}}$. This completes the proof
of (\ref{lipsbound}).
\end{pf}

\subsection*{Estimating the $R$-matrix and the proof of
Proposition~\texorpdfstring{\protect\ref{11111}}{}}
Finally, we are in a position to prove Proposition~\ref{11111}.
The proof follows from the standard method described in \cite
{DKMVZ99}. However, in our case extra care must be taken with the
estimates because our contour $\Sigma_{R}$ depends explicitly on $N$;
see, for example, \cite{BF06} for another example of
$N$-dependent contours.
%
%pr6.5 #&#

\begin{proposition}
The matrix $R(z)$ satisfies the following estimate:
%
%e6.29 #&#
%
\begin{equation}
\label{Rasympt}\hspace*{6pt} R(z) = I + \mathcal{O} \biggl(\frac{1}{N} \biggr)+
\mathcal{O} \biggl(\log(d_{N})\exp\biggl(-c_{1}
\frac{N}{d_{N}} \biggr) \biggr), \qquad N \to\infty
\end{equation}
uniformly for $z \in\mathbb{C}\setminus\Sigma_{R}$.
\end{proposition}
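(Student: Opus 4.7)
The proof follows the now-standard small-norm Riemann--Hilbert method (see e.g.\ \cite{DKMVZ99}), but with the extra care required by the fact that both the contour $\Sigma_R$ and the jump matrix $\Delta$ depend on $N$ through the mesoscopic parameter $d_N$. My plan is to recast the jump problem for $R$ as a singular integral equation on $\Sigma_R$, invert it via Neumann series, and then read off the pointwise estimate from the Cauchy integral representation.

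First, using the Beals--Coifman formalism, I would represent
\[
R(z) = I + \frac{1}{2\pi i}\int_{\Sigma_R} \frac{R_-(s)\Delta(s)}{s-z}\,ds,
\]
and write $\mu := R_- - I$, so that the jump relation $R_+ = R_-(I+\Delta)$ becomes the singular integral equation $(I - C_\Delta)\mu = C_-\Delta$, where $C_- f$ is the Cauchy integral boundary value from the minus side of $\Sigma_R$ and $C_\Delta f := C_-(f\Delta)$. Then the key analytic input is Proposition \ref{prop:deltasigr}, which supplies the global bound $\|\Delta\|_{L^\infty(\Sigma_R)} = \mathcal{O}(1/N)$ together with the sharper piecewise bounds: $\mathcal{O}(1/N)$ on $\partial B_{\pm 1}(\delta)$, $\mathcal{O}(e^{-c_1 N/d_N})$ on the lips $\Gamma_\pm$, and $\mathcal{O}(e^{-c_2 N})$ on $\mathbb{R}\setminus[-1-\delta,1+\delta]$. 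Together with the fact that $\Delta$ is supported on a contour of finite total arclength, these bounds give $\|\Delta\|_{L^2(\Sigma_R)} = \mathcal{O}(1/N)$.

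Next, I would control the operator norm $\|C_\Delta\|_{L^2\to L^2}$. Here the mesoscopic geometry enters in an essential way: the upper and lower lips $\Gamma_+$ and $\Gamma_-$ are separated by only $O(1/d_N)$ near the reference point $x_0$, so one cannot just apply a uniform bound for the Cauchy operator on a smooth curve. A localisation near the near-intersection region shows that for any $z \in \Gamma_+$ close to $x_0$, the contribution of $\Gamma_-$ to $(C_- f)(z)$ is controlled by $\int_{\Gamma_-} |ds|/|s-z| = O(\log d_N)$. Inserting this pointwise bound gives $\|C_\Delta\|_{L^2 \to L^2} = \mathcal{O}(\log(d_N)/N)$, which is $o(1)$ under the mesoscopic hypothesis $d_N = o(N/\log N)$. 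Hence $I - C_\Delta$ is invertible via Neumann series and $\|\mu\|_{L^2(\Sigma_R)} \le C\|C_-\Delta\|_{L^2} = \mathcal{O}(\log(d_N)/N)$.

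Finally, I would substitute $R_- = I + \mu$ into the Cauchy representation and estimate the two resulting integrals piece by piece. The ``linear'' integral $(2\pi i)^{-1}\int \Delta(s)/(s-z)\,ds$ decomposes into a disc contribution of size $\mathcal{O}(1/N)$ (by Lemma \ref{le:discestlem}), a lips contribution of size $\mathcal{O}(\log(d_N)\,e^{-c_1 N/d_N})$ (the $\log d_N$ entering through the same near-collision mechanism when $z$ is close to $\Gamma_\pm$), and a real-line tail which is absorbed by the other terms. The ``quadratic'' integral involving $\mu\Delta$ is bounded by Cauchy--Schwarz in terms of $\|\mu\|_{L^2}\|\Delta\|_{L^2}$, and is therefore of higher order. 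Combining these contributions yields the claimed asymptotics \eqref{Rasympt}. The main obstacle throughout is precisely this near-collision of $\Gamma_+$ and $\Gamma_-$ at distance $O(1/d_N)$: bounding the Cauchy operator on a nearly self-intersecting, $N$-dependent contour is the one non-routine step, and the $\log d_N$ overhead is the price paid by this geometry.
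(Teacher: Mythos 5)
Your proposal and the paper both start from the same singular integral equation and the same Neumann series for its solution, but the way you control the iterates is genuinely different from the paper's. The paper works entirely in $L^\infty(\Sigma_R)$: to estimate $\nu_1(s)=C_-[\Delta](s)$ for $s$ on the contour (and, later, $R_j(z)$ for $z$ within $O(1/d_N)$ of the contour), it exploits the analyticity of $\Delta$ and deforms $\Gamma_\pm$ by a semicircle of radius $\eta/(4d_N)$ centred at the point in question, so the Cauchy kernel never becomes singular. The quantity $\int_{\tilde\Gamma_\pm}|dx|/|x-s|=O(\log d_N)$ then produces the same $\log d_N$ factor you identify, and the bound is propagated inductively: $|\nu_j(s)|\le K_1 N^{-j}+K_2\bigl(\log(d_N)\,e^{-c_1 N/d_N}\bigr)^j$. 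This completely bypasses any discussion of the $L^2$ operator norm of the Cauchy projection on an $N$-dependent contour, which is exactly the delicate point in your proposal.

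Your route, controlling $\|\mu\|_{L^2(\Sigma_R)}$ and then applying Cauchy--Schwarz, is a legitimate alternative in principle, but the step where you derive $\|C_\Delta\|_{L^2\to L^2}=\mathcal{O}(\log(d_N)/N)$ from the pointwise bound $\int_{\Gamma_-}|ds|/|s-z|=O(\log d_N)$ does not follow as stated. The Cauchy kernel on $\Sigma_R$ is singular on the diagonal, so a Schur-type estimate using an $L^1$ bound on rows/columns of the kernel is not available; what one actually needs is the $L^2$-boundedness of the Cauchy projection on the curve, uniformly over the $N$-dependent family $\Sigma_R$. That is true here -- $\Sigma_R$ is uniformly Ahlfors--David regular even though $\Gamma_+$ and $\Gamma_-$ pass within $O(1/d_N)$ of each other -- but it is a non-trivial external input (David's theorem, or Coifman--McIntosh--Meyer on Lipschitz graphs plus a decomposition), and once you invoke it the correct conclusion is $\|C_\Delta\|_{L^2\to L^2}\le\|C_-\|_{L^2\to L^2}\|\Delta\|_{L^\infty}=\mathcal{O}(1/N)$, \emph{without} the $\log d_N$. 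The $\log d_N$ belongs only in the final pointwise step. In addition, your sketch does not explain how to make the bound on $R(z)-I$ uniform as $z$ approaches $\Sigma_R$ from outside at distances smaller than $O(1/d_N)$; in the paper this is done by applying the same semicircle deformation to the Cauchy integral for $R$, and some version of that deformation seems unavoidable in your $L^2$ argument too, because $\|\Delta(\cdot)/(\cdot-z)\|_{L^2}$ blows up as $z$ approaches the discs $\partial B_{\pm 1}(\delta)$, where $\Delta$ is only $\mathcal{O}(1/N)$ rather than exponentially small. In short: the framework and the decomposition of $\Sigma_R$ match the paper, and the mechanism producing $\log d_N$ is correctly identified, but the $L^2$ operator estimate needs to be replaced by a genuine boundedness theorem for the Cauchy operator on the degenerating contour, and the near-contour uniformity requires the contour-deformation trick that is the paper's main technical novelty here.
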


\begin{pf}
Since for every $N$, $\Sigma_{R}$ is a finite union of smooth
contours, standard theory (see, e.g., \cite
{De99,KMVaV04,K07}) gives
%
%e6.30 #&#
%
\begin{equation}\label{rsol}
R(z) = I+\frac{1}{2\pi i}\int_{\Sigma_{R}}\frac{\Delta(s)\nu
(s)}{s-z}\,ds,
\end{equation}
where $\Delta(s)$ is as in Proposition~\ref{propdeltasigr} and $\nu
(s)$ is the unique solution to the singular integral equation $\nu(s)
= I+C_{-}[\nu\Delta](s)$. Here, $C_{-}$ is the Cauchy operator on
$L^{2}(\Sigma_{R})$, defined by
%
%e6.31 #&#
%
\begin{equation}
C_{-}[f](s) = \frac{1}{2\pi i}\int_{\Sigma_{R}}
\frac
{f(x)}{x-s_{-}}\,dx, \qquad f \in L^{2}(\Sigma_{R}),
\end{equation}
where $s_{-}$ denotes the limiting value of the integral as the point
$s \in\Sigma_{R}$ is approached from the minus side of the contour.

We begin by solving the equation for $\nu(s)$ in a perturbation series
(see, e.g., \cite{B08})
%
%e6.32 #&#
%
\begin{equation}
\label{pert} \nu(s)=I+\sum_{k=1}^{\infty}
\nu_{k}(s), \qquad\nu_{k}(s) = C_{-}[
\nu_{k-1}\Delta](s),
\end{equation}
and $\nu_{0}=I$. We need to show that this series is absolutely and
uniformly convergent for any $s \in\Sigma_{R}$.
%
%f3 #&#
%
\begin{figure}

\includegraphics{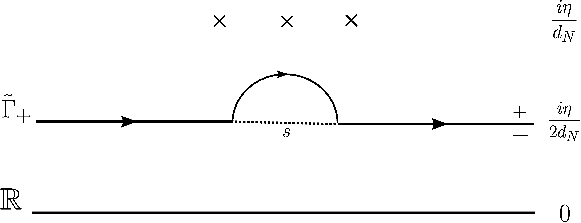}

\caption{The deformed contour $\tilde{\Gamma}_{+}$. The semi-circle
of radius $\eta/(4d_{N})$ is sufficiently small that it does not touch
the singularities (crosses), whose imaginary parts are $\eta/d_{N}$.}
\label{figdefcontour}
\end{figure}
%
%First, since $\Sigma_{R}$ is a finite union of Lipschitz continuous
%contours, and the bound \eqref{lipschitz} holds uniformly in $N$, we
%can use the known bounds on the Cauchy operator (see the main Theorem
%in \cite{CMM82}) which imply
%\begin{equation}
%\norm{\nu_{1}}_{2} \leq c_{1}\norm{\Delta}_{2} \leq c_{2}N^{-1},
%\label{nu1ineq}
%\end{equation}
%for some absolute constants $c_{1},c_{2}$ where the second inequality
%used Prop. \ref{propdeltasigr}. Then inductively, we obtain
%\begin{equation}
%\label{nukest}
%\norm{\nu_{j}}_{2} \leq K_{1}\norm{\nu_{j-1}\Delta}_{2} \leq
%K_{1}N^{-j}
%\end{equation}
%so that the series $\eqref{pert}$ converges in $L^{2}(\Sigma_{R})$ for
%large enough $N$.
Let $s \in\Gamma_{+}$ and deform $\Gamma_{+}$ to a new contour
$\tilde{\Gamma}_{+}$ differing only by a small semi-circle of radius
$\eta/(4d_{N})$ centered at $s$, as depicted in Figure~\ref
{figdefcontour}. Denote by $\tilde{\Sigma}_{R}$ the contour $\Sigma
_{R}$ with $\Gamma_{+}$ replaced with $\tilde{\Gamma}_{+}$. By the
Cauchy theorem, we have
%
%e6.33 #&#
%
\begin{equation}
\label{deform} \nu_{1}(s) = \frac{1}{2\pi i}\int
_{\Sigma_{R}}\frac{\Delta
(x)}{x-s_{-}}\,dx= \frac{1}{2\pi i}\int
_{\tilde{\Sigma}_{R}}\frac
{\Delta^{(0)}(x)}{x-s}\,dx,
\end{equation}
where $\Delta^{(0)}$ is the analytic continuation of $\Delta$ to
$\tilde{\Sigma}_{R}$ and satisfies the same bounds as in
Proposition~\ref{propdeltasigr}. Now we estimate, splitting the
integral into a
contribution from the discs $\partial B_{\pm1}(\delta)$, the real
line $\mathbb{R}\setminus[-1-\delta,1+\delta]$ [both of which are
at most $\mathcal{O}(N^{-1})$] and the contribution from $\tilde
{\Gamma}_{\pm}$:
%
%e6.34 #&#
%e6.35 #&#
%e6.36 #&#
%
\begin{eqnarray}
\bigl|\nu_{1}(s)\bigr| &\leq& c_{3}/N+
\frac{1}{2\pi}\int_{\tilde{\Gamma
}_{\pm}}\frac{|\Delta^{(0)}(x)|}{|x-s|}\,dx\nonumber
\\
\label{nuest}
&\leq& c_{3}/N + \frac{1}{2\pi}e^{-c_{1}N/d_{N}}\int
_{\tilde{\Gamma
}_{\pm}}\frac{1}{|x-s|}\,dx
\\
&\leq& c_{3}/N+c_{2}\log(d_{N})e^{-c_{1}N/d_{N}},
\qquad s \in\Gamma_{+},\nonumber
\end{eqnarray}
where $c_{3}$ and $c_{2}$ are constants depending only on $\delta$ and
$\eta$, with a similar bound if $s \in\Gamma_{-}$. If $s \in\Sigma
_{R}\setminus(\Gamma_{+}\cup\Gamma_{-})$, then the same bound holds
with $c_{2}=0$. Applying this procedure inductively, we obtain
%
%e6.37 #&#
%
\begin{eqnarray}\label{induct}
&& \bigl|\nu_{j}(s)\bigr| \leq K_{1}N^{-j}+K_{2}
\bigl(\log(d_{N})e^{-c_{1}N/d_{N}} \bigr)^{j}, \qquad s
\in\Sigma_{R},
\end{eqnarray}
where we can choose $K_{2}=0$ if $s \in\Sigma_{R}\setminus(\Gamma
_{+}\cup\Gamma_{-})$. The bound (\ref{induct}) implies that the
series (\ref{pert}) is absolutely convergent. Inserting (\ref{pert})
back into (\ref{rsol}), we arrive at
%
%e6.38 #&#
%
\begin{eqnarray}
R(z)  &=&  I+\sum_{j=1}^{\infty}R_{j}(z),
\nonumber
\\[-8pt]
\label{Rseries}
\\[-8pt]
\nonumber
 R_{j}(z) &=& \frac
{1}{2\pi i}\int_{\Sigma_{R}}
\frac{\nu_{j-1}(s)\Delta(s)}{s-z}\,ds, \qquad j=1,2,3,\ldots.
\end{eqnarray}
Now we bound the terms in the sum (\ref{Rseries}). First, consider the
case that $\operatorname{dist}(z,\Sigma_{R}) \geq\eta/(4d_{N})$. Then
estimates entirely analogous to (\ref{nuest}) yield
%
%e6.39 #&#
%
\begin{equation}
\label{rbound}\quad \bigl|R_{j}(z)\bigr| \leq K_{1}N^{-j}+K_{2}
\bigl(\log(d_{N})e^{-c_{2}N/d_{N}} \bigr)^{j}, \qquad
j=1,2,3,\ldots.
\end{equation}
On the other hand, if $0<\operatorname{dist}(z,\Sigma_{R}) < \eta
/(4d_{N})$, one can again deform the contour with a semi-circle of
radius $\eta/(4d_{N})$ and obtain the same bound (\ref{rbound}) after
essentially repeating the steps (\ref{deform}) and (\ref{nuest}).
\end{pf}

%re6.6 #&#

\begin{remark}
To complete the proof of Proposition~\ref{11111}, we will derive
the explicit form of the $O(1/N)$ term in (\ref{Rasympt}). Thus, we
need to compute the function $R_{1}(z)$ defined in (\ref{Rseries}). By
Proposition~\ref{propdeltasigr} and Lemma~\ref{lediscestlem}, we have
%
%e6.40 #&#
%
\begin{equation}\label{r1est}
R_{1}(z) = \frac{\tilde{R}_{1}(z)}{N}+\mathcal{O} \biggl(\frac
{1}{Nd_{N}}
\biggr)+\mathcal{O} \biggl(d_{N}\exp\biggl(-c_{1}
\frac{N}{d_{N}} \biggr) \biggr),
\end{equation}
where
%
%e6.41 #&#
%
\begin{equation}
\label{r1tilde} \tilde{R}_{1}(z) = \frac{1}{2\pi i}\int
_{\partial B_{1}(\delta
)}\frac{\Delta^{(+1)}_{1}(s)}{s-z}\,ds+\frac{1}{2\pi i}\int
_{\partial B_{-1}(\delta)}\frac{\Delta^{(-1)}_{1}(s)}{s-z}\,ds.
\end{equation}
The functions $\Delta^{(\pm1)}_{1}(s)$ are explicitly known, for example, by setting $\omega(z)\equiv1$ in equations (79), (83),
of \cite{K07} or by using the results in \cite{DKMVZ99}. Then
expanding (\ref{r1est}) near $z=\infty$ and computing the residues of
the function $\Delta^{(\pm1}_{1}(s)$ near the poles $s= \pm1$, we
find that
%
%e6.42 #&#
%
\begin{equation}
\tilde{R}_{1}(z) = A/z+B/z^{2}+O \bigl(z^{-3}
\bigr), \qquad z \to\infty,
\end{equation}
where
%
%e6.43 #&#
%
\begin{equation}
\label{constantmatrix} A = %
\pmatrix{ 0 & i/24\vspace*{2pt}
\cr
i/24 & 0 },
\qquad B = \pmatrix{ -1/48 & 0\vspace*{2pt}
\cr
0 & 1/48}.
\end{equation}
Then inserting (\ref{Rasympt}) and the first-order correction above
into the definition (\ref{rsp}), we arrive at (\ref{zkas}).
\end{remark}

%big alterations on the following section 08/07/2014
%s7 #&#
\section{The Szeg\"o function}
For a weight $\omega(x)$, the Szeg\"o function is defined by the formula
%
%e7.1 #&#
%
\begin{equation}
\mathcal{D}(z) = \exp\biggl(\frac{\sqrt{z+1}\sqrt{z-1}}{2\pi
}\int_{-1}^{1}
\frac{\log(\omega(x))}{\sqrt{1-x^{2}}}\frac
{dx}{z-x} \biggr).
\end{equation}
It satisfies the properties:
\begin{longlist}[3.]
\item[1.] $\mathcal{D}(z)$ is nonzero and analytic in $\mathbb
{C}\setminus[-1,1]$,
\item[2.] $\mathcal{D}_{+}(x)\mathcal{D}_{-}(x) = \omega(x)$ for $x \in
(-1,1)$,
\item[3.] $\lim_{z \to\infty}\mathcal{D}(z) = \mathcal{D}_{\infty}
\neq0$.
\end{longlist}
For our problem, we are interested in the weight $\omega(x) = \prod
_{k=1}^{m}|x-z_{k}|^{\alpha_{k}}$ where $\operatorname{Im}(z_{k})
\neq0$ for
$k=1,\ldots,m$. It can easily be seen that the above three properties
uniquely specify the Szeg\"o function for this weight. Let $c(z) =
z+\sqrt{z-1}\sqrt{z+1}$ be the conformal map from $\mathbb
{C}\setminus[-1,1]$ to the exterior of the unit disk. Then the Szeg\"o
function for the weight $|x-\mu|^{2}$ is
%
%e7.2 #&#
%
\begin{equation}
\frac{|c(\mu)|}{2} \biggl(1-\frac{1}{c(\mu)c(z)} \biggr) \biggl(1-
\frac{1}{\overline{c(\mu)}c(z)} \biggr), \qquad\operatorname
{Im}(\mu) \neq0.
\end{equation}
This can be checked by verifying the above three conditions using the
properties $c(z)+\frac{1}{c(z)}=2z$ and $c_{+}(x)c_{-}(x) = 1$ for $x
\in[-1,1]$. Thus, the Szeg\"o function for $\omega(x)$ is
%
%e7.3 #&#
%
\begin{equation}
\label{exactszego} \mathcal{D}(z) = \prod_{k=1}^{m}
\biggl(\frac{|c(z_{k})|}{2} \biggl(1-\frac{1}{c(z_{k})c(z)} \biggr
) \biggl(1-
\frac{1}{\overline
{c(z_{k})}c(z)} \biggr) \biggr)^{\alpha_{k}/2}.
\end{equation}
Similar considerations show straightforwardly that the function
$C(z,\mu)$ defined in~(\ref{phifn}) is given by
%
%e7.4 #&#
%
\begin{equation}
C(z,\mu) = \frac{1}{2}\log\biggl(\frac{|c(\mu)|}{2} \biggl(1-
\frac
{1}{c(\mu)c(z)} \biggr) \biggl(1-\frac{1}{\overline{c(\mu
)}c(z)} \biggr) \biggr).
\end{equation}
Defining $z_{k}= x_{0}+\frac{\tau_{k}+i\eta}{d_{N}}$, one easily
gets the asymptotic
%
%e7.5 #&#
%
\begin{eqnarray}
&& d_{N}\frac{|c(z_{j})|}{2} \biggl(1-\frac{1}{c(z_{j})c(z_{k})} \biggr)
\biggl(1-\frac{1}{\overline{c(z_{j})}c(z_{k})} \biggr)
\nonumber
\\[-8pt]
\\[-8pt]
\nonumber
&&\qquad= 2\eta
+i(\tau_{j}-
\tau_{k}) + \mathcal{O} \bigl(d_{N}^{-1} \bigr)
\end{eqnarray}
which immediately implies that
%
%e7.6 #&#
%
\begin{equation}
\label{szasy2} \hspace*{5pt}\operatorname{Re}\bigl(C(z_{j},z_{k}) \bigr) = -
\tfrac{1}{2}\log(d_{N})+\tfrac{1}{4}\log\bigl((
\tau_{j}-\tau_{k})^{2}+4\eta^{2}
\bigr)+\mathcal{O} \bigl(d_{N}^{-1} \bigr).
\end{equation}
The uniformity of the error term in the relevant compact sets follows
from the uniform expansions of the logarithm and square roots in these
regions. From (\ref{exactszego}), we obviously have the expansion
%
%e7.7 #&#
%
\begin{equation}
\mathcal{D}(z) = \mathcal{D}_{\infty} \biggl(1+\frac{\mathcal
{D}_{1}}{z}+
\frac{\mathcal{D}_{1}^{2}/2+\mathcal
{D}_{2}}{z^{2}} \biggr)+O \bigl(z^{-3} \bigr),
\end{equation}
where
%
%e7.8 #&#
%
\begin{equation}\label{formdinf}
\mathcal{D}_{\infty} = \prod_{k=1}^{m-1}
\biggl|\frac
{c(z_{k})}{c(z_{m})} \biggr|^{\alpha_{k}/2}
\end{equation}
and
%
%e7.9 #&#
%
\begin{equation}
\label{higherord}
\mathcal{D}_{1} = -\frac{1}{2}\sum
_{k=1}^{m}\alpha_{k}\operatorname{Re}\biggl(
\frac{1}{c(z_{k})} \biggr), \qquad\mathcal{D}_{2} = -\frac
{1}{8}
\sum_{k=1}^{m}\alpha_{k}\operatorname{Re}
\biggl(\frac
{1}{c(z_{k})^{2}} \biggr).
\end{equation}

\section{Proof of equation \texorpdfstring{(\protect\ref{432432})}{(4.29)}}
\label{apintcomp}
Our first task is to prove that we have the limit
%
%e8.1 #&#
%
\begin{equation}
\label{zerolim} \lim_{N \to\infty}\int_{[I_{N}^{c}]^{2}}
\frac{\Delta
f_{1}(d_{N}x)}{\Delta x}\frac{\Delta f_{2}(d_{N} x)}{\Delta
x}F_{N}(x_{1},x_{2})
\,dx_{1}\,dx_{2}=0,
\end{equation}
where $I_{N}^{c}$ is the complement of the region $I_{N} = [-(1-\delta
_{N}),(1-\delta_{N})]$, $\delta_{N} = N^{-7/12}$ and we defined
$F_{N}(x,y)=(x-y)^{2}K_{N}^{2}(x,y)$ in terms of the GUE kernel (\ref
{guekernel}). After proving (\ref{zerolim}), we show that $\delta_{N}$
can be replaced with an $N$-independent $\delta>0$ costing an error
term that can be neglected.

Let $0<\varepsilon<1$ and consider the following three subsets of
$\mathbb{R}^{2}$:
\begin{eqnarray}
R_{1} &=& \bigl\{(x_{1},x_{2}) \in
\mathbb{R}^{2} \vert\bigl(|x_{1}| < \varepsilon\bigr) \land
\bigl(x_{2}>(1+\delta_{N}) \bigr) \bigr\},
\nonumber
\\
R_{2} &=& \bigl\{(x_{1},x_{2}) \in
\mathbb{R}^{2} \vert\bigl(|x_{1}| < \varepsilon\bigr) \land(1-
\delta_{N} < x_{2} < 1+\delta_{N}) \bigr\},
\nonumber
\\
R_{3} &=& \bigl\{(x_{1},x_{2}) \in
\mathbb{R}^{2} \vert(x_{1}>\varepsilon) \land(x_{2}>
\varepsilon) \bigr\}.
\nonumber
\end{eqnarray}
It is sufficient\vspace*{1pt} to consider only these regions, because together with
their reflections in the $x_{1}$ and $x_{2}$ axes, they cover the
entire region $[I_{N}^{c}]^{2}$. In the following, we will prove that
the contribution from each of these regions to the integral (\ref
{zerolim}) tends to zero as $N \to\infty$. Finally, we complete the
proof of equation (\ref{eqnfour}) by showing that the difference
between the integral (\ref{zerolim}) over $[I_{N}^{c}]^{2}$ and
$[I_{\delta}^{c}]^{2}$ converges as $N \to\infty$ to a function that
is $O(\delta)$ as $\delta\to0$.

We start with the contribution of the region $R_{3}$ to the integral
(\ref{zerolim}). Using the Schwartz property of $f_{1}, f_{2}$ and the
inequality $K_{N}^{2}(x_{1},x_{2}) \break \leq N^{2}\rho_{N}(x_{1})\rho
_{N}(x_{2})$, we have for any $\gamma>0$
%
%e8.2 #&#
%e8.3 #&#
%
\begin{eqnarray}
&& \Biggl|\int_{\varepsilon}^{\infty}\!\!\int_{\varepsilon}^{\infty}
\Delta f_{1}(d_{N} x)\Delta f_{2}(d_{N}
x) K_{N}^{2}(x_{1},x_{2})
\,dx_{1}\, dx_{2} \Biggr|
\\
&&\qquad \leq N^{2}(2\varepsilon d_{N})^{-2\gamma} \biggl(\int
_{\varepsilon
}^{\infty}\rho_{N}(x_{1})
\,dx_{1} \biggr) \biggl(\int_{\varepsilon
}^{\infty}
\rho_{N}(x_{2})\,dx_{2} \biggr)
\nonumber
\\[-8pt]
\\[-8pt]
\nonumber
&&\qquad= O
\bigl(N^{2}d_{N}^{-\infty} \bigr),
\end{eqnarray}
where we used the inequality $|\Delta g_{j}(d_{N}x)| \leq|g_{j}(d_{N}
x_{1})+g_{j}(d_{N} x_{2})| \leq\break  d_{N}^{-\gamma}(|x_{1}|^{-\gamma
}+|x_{2}|^{-\gamma}) \leq 2d_{N}^{-\gamma}(\varepsilon^{-\gamma})$. We\vspace*{1pt}
conclude that the integral (\ref{orig}) restricted to the region
$R_{3}$ is of order $O(N^{-\infty})$ as $N \to\infty$.

Now let us consider the edge region $R_{2}$. We will make use of the
following lemma from \cite{PS11}, which states
%
%le8.1 #&#

\begin{lemma}[(Theorem~5.2.3(ii) \cite{PS11})]
\label{leexpdecaylem}
Let $\rho_{N}(x)$ denote the normalized density of states, as in
(\ref{dos}). The bound
%
%e8.4 #&#
%
\begin{equation}
\rho_{N} \bigl(1+sN^{-2/3} \bigr) \leq\bigl(BN^{1/3}s
\bigr)^{-1}e^{-bs^{3/2}}
\end{equation}
holds for $N$ large enough. Here, $B$ and $b$ are absolute constants
and $s \to\infty$ as $N \to\infty$.
\end{lemma}

Using this result and again the bound $K_{N}(x_{1},x_{2})^{2} \leq
N^{2}\rho_{N}(x_{1})\rho_{N}(x_{2})$, we see that the contribution to
the integral (\ref{zerolim}) from the region $R_{2}$ is bounded by
%
%e8.5 #&#
%
\begin{eqnarray}
&& N^{2}\int_{-\infty}^{\infty}\int
_{(1+\delta_{N})}^{\infty
}\bigl|\Delta f_{1}(d_{N}x)\bigr|\bigl|
\Delta f_{2}(d_{N}x)\bigr|\rho_{N}(x_{1})\rho
_{N}(x_{2})\,dx_{1}\,dx_{2}
\nonumber
\\
&&\qquad= C\delta_{N}N^{2}\int_{-\infty}^{\infty}
\int_{1}^{\infty}\rho_{N}(1+x_{1}
\delta_{N})\rho_{N}(x_{2})\,dx_{1}
\,dx_{2}
\nonumber
\\[-8pt]
\label{inequal}
\\[-8pt]
\nonumber
&&\qquad\leq  CBN\int_{-\infty}^{\infty}\int_{1}^{\infty
}x_{1}^{-1}e^{-bx_{1}^{3/2}N^{1/8}}
\rho_{N}(x_{2})\,dx_{1}\,dx_{2} = O
\bigl(N^{-\infty} \bigr), \\
\eqntext{\displaystyle N \to\infty,}
\end{eqnarray}
where we used that $f_{1}$, $f_{2}$ are uniformly bounded on $\mathbb{R}^{2}$.

For the region $R_{1}$, we need a bound for the absolute value of the
functions $\psi_{l}^{(N)}(x)$.
%
%le8.2 #&#

\begin{lemma}[(Szeg\"o, Section~10.8 \cite{Sze39})]
Let $\psi^{(N)}_{l}(x)$ denote the orthonormal functions defined in
(\ref{wavefns}). Then the following bound holds uniformly in $l$ as \mbox{$N
\to\infty$}:
%
%e8.6 #&#
%
\begin{equation}
\sup_{u \in\mathbb{R}}\bigl|\psi^{(N)}_{l}(u)\bigr| = O
\bigl(N^{1/4} \bigr).
\end{equation}
\end{lemma}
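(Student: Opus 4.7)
The plan is to reduce the claim to the classical Cramér uniform bound for orthonormal Hermite functions. Let $\phi_l(y) := (2^l l! \sqrt{\pi})^{-1/2} H_l(y) e^{-y^2/2}$ denote the standard orthonormal Hermite functions on $\mathbb{R}$, where $H_l$ is the Hermite polynomial of degree $l$ in the physicists' normalization. Cramér's classical inequality (see \cite{Sze39}, Section 10.8, as cited in the statement) asserts the universal bound $|\phi_l(y)| \leq \pi^{-1/4}$ valid for every $l \geq 0$ and every $y \in \mathbb{R}$.

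The next step is to identify $\psi^{(N)}_l$ with a rescaled $\phi_l$ via the change of variable $y = x\sqrt{2N}$. The defining conditions in \eqref{wavefns} characterise $\psi^{(N)}_l$ uniquely (up to a sign) as a function of the form $P^{(N)}_l(x) e^{-Nx^2}$ with $P^{(N)}_l$ a polynomial of degree $l$ such that $\int_{\mathbb{R}} (\psi^{(N)}_l)^2 \, dx = 1$. Setting
\begin{equation*}
\Psi_l(x) := (2N)^{1/4} \phi_l(x\sqrt{2N}),
\end{equation*}
one sees that $\Psi_l(x) = Q_l(x) e^{-Nx^2}$ for a polynomial $Q_l$ of degree $l$, and the substitution $y = x\sqrt{2N}$ yields
\begin{equation*}
\int_{\mathbb{R}} \Psi_l(x)^2 \, dx = (2N)^{1/2} \int_{\mathbb{R}} \phi_l(y)^2 \frac{dy}{\sqrt{2N}} = \int_{\mathbb{R}} \phi_l(y)^2 \, dy = 1.
\end{equation*}
By uniqueness, $\psi^{(N)}_l(x) = \pm\, (2N)^{1/4} \phi_l(x\sqrt{2N})$.

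Combining this identification with Cramér's inequality then gives immediately
\begin{equation*}
\sup_{u \in \mathbb{R}} |\psi^{(N)}_l(u)| = (2N)^{1/4} \sup_{y \in \mathbb{R}} |\phi_l(y)| \leq (2N)^{1/4} \pi^{-1/4},
\end{equation*}
which is the required $O(N^{1/4})$ bound, uniform in $l$. I anticipate no analytic obstacle: the substantive content is the classical Cramér inequality invoked from \cite{Sze39}, and the only care required is bookkeeping of the Jacobian factor $\sqrt{2N}$ that produces the prefactor $(2N)^{1/4}$.
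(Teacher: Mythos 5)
Your proof is correct, and since the paper itself supplies no argument (the lemma is simply attributed to Szeg\H{o}), there is no competing paper proof to compare against; the reduction you give---identify $\psi^{(N)}_l$ with the rescaled classical Hermite function $(2N)^{1/4}\phi_l(x\sqrt{2N})$ via the uniqueness of orthonormal polynomials, then invoke the uniform bound on $\phi_l$---is exactly the standard route and is surely what the citation intends. The rescaling bookkeeping is right: the Jacobian $\sqrt{2N}$ under $y=x\sqrt{2N}$ is absorbed by the prefactor $(2N)^{1/4}$ to preserve the $L^2(\mathbb{R})$ normalization, and the resulting function is a degree-$l$ polynomial times $e^{-Nx^2}$, so uniqueness (up to sign) pins it down as $\pm\psi^{(N)}_l$.

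One small imprecision worth flagging: Cram\'er's inequality reads $|H_l(y)|\,e^{-y^2/2} < k\,2^{l/2}\sqrt{l!}$ with a numerical constant $k\approx 1.0865$, so the uniform bound on the orthonormal Hermite function is $\sup_y|\phi_l(y)|\le k\,\pi^{-1/4}$, not exactly $\pi^{-1/4}$ as you wrote. This changes nothing in the conclusion---the bound is still uniform in $l$ and yields $\sup_u|\psi^{(N)}_l(u)|\le k\,(2N/\pi)^{1/4}=O(N^{1/4})$---but the constant should be stated correctly.
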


First, consider the contribution from the product of squares, that is,
that of $\psi^{(N)}_{N}(x_{1})^{2}\psi^{(N)}_{N-1}(x_{2})^{2}$ in
$F_{N}(x_{1},x_{2})$. Since in the region $R_{1}$ we have $x_{1} \neq
x_{2}$, the bound $|\Delta f_{j}(d_{N}x)/ \Delta x| \leq C$, $j=1,2$
holds for some $N$-independent $C>0$. Then the contribution coming from
$\psi^{(N)}_{N}(x_{1})^{2}\psi^{(N)}_{N-1}(x_{2})^{2}$ is bounded by
%
%e8.7 #&#
%e8.8 #&#
%
\begin{eqnarray}
&&\hspace*{3pt}\quad C\int_{(1-\delta_{N})}^{(1+\delta_{N})}\!\int_{-\varepsilon}^{\varepsilon
}
\psi^{(N)}_{N}(x_{1})^{2}
\psi^{(N)}_{N-1}(x_{2})^{2}
\,dx_{1}\, dx_{2}
\\
&&\hspace*{3pt}\quad\qquad \leq C\int_{(1-\delta_{N})}^{(1+\delta_{N})}\!\int_{-\infty
}^{\infty}
\psi^{(N)}_{N}(x_{1})^{2}\sup
_{u \in\mathbb{R}}\bigl|\psi^{(N)}_{N-1}(u)\bigr|^{2}
\,dx_{1}\,dx_{2} \leq C'N^{-1/12},
\end{eqnarray}
where $C'>0$ is another constant independent of $N$. A similar
calculation shows that the contribution from the mixed term $\psi
^{(N)}_{N}(x_{1})\psi^{(N)}_{N-1}(x_{1})\times\break \psi^{(N)}_{N}(x_{2})\psi
^{(N)}_{N-1}(x_{2})$ is also $O(N^{-1/12})$ as $N \to\infty$. We
conclude that the contribution of the region $R_{1}$ is $O(N^{-1/12})$
as $N \to\infty$. Finally, a completely analogous calculation shows
that the contribution to (\ref{orig}) coming from all reflections of
the regions $R_{1}$, $R_{2}$ and $R_{3}$ in the $x_{1}$ and $x_{2}$
axes satisfy the same corresponding asymptotic estimates as $N \to
\infty$ and, therefore, may be neglected. Equation (\ref{zerolim}) is proven.

To complete the argument, we need to show that the difference between
the integral (\ref{orig}) over $I_{N}^{2}$ and the same integral over
$I_{\delta}=[-(1-\delta),(1-\delta)]^{2}$ for some $N$-independent
$\delta>0$, can be neglected in the limit $N \to\infty$. It will be
sufficient to consider only the thin strip $|x_{1}|<\varepsilon$ and
$(1-\delta) < x_{2} < (1-\delta_{N})$, because the remaining parts of
$I_{N}^{c}\setminus I_{\delta}$ are either reflections of this region
or are subsets of the region $R_{1}$ treated earlier. Thus, we just
have to estimate the integral
%
%e8.9 #&#
%
\begin{equation}
\label{strip} \int_{(1-\delta)}^{(1-\delta_{N})}\!\int
_{-\varepsilon
}^{\varepsilon}\frac{\Delta f_{1}(d_{N}x)}{\Delta x}\frac{\Delta
f_{2}(d_{N}x)}{\Delta x}F_{N}(x_{1},x_{2})
\,dx_{1}\,dx_{2}.
\end{equation}
According to the first Plancherel--Rotach formula of Corollary~5.1.5 in
\cite{PS11}, we have the bound $F_{N}(x_{1},x_{2}) =
(1-x_{1}^{2})^{-1/2}(4-x_{2}^{2})^{-1/2}O(1)$ uniformly as \mbox{$N \to
\infty$.} Therefore, since $x_{1} \neq x_{2}$ in (\ref{strip}) and
$f_{1},f_{2}$ are uniformly bounded, we see that (\ref{strip}) is
bounded in absolute value by
%
%e8.10 #&#
%e8.11 #&#
%
\begin{eqnarray}
&& C \biggl|\int_{(1-\delta)}^{(1-\delta_{N})}\!\int_{-\varepsilon}^{\varepsilon}
\bigl(1-x_{1}^{2} \bigr)^{-1/2}
\bigl(1-x_{2}^{2} \bigr)^{-1/2}\, dx_{1}
\,dx_{2} \biggr|
\\
&&\qquad \leq C\bigl| \bigl(\cos^{-1}(1-\delta_{N})-\cos^{-1}(1-
\delta) \bigr)\bigr| \to C\bigl|\cos^{-1}(1-\delta)\bigr|,
\nonumber
\\[-8pt]
\\[-8pt]
\eqntext{\displaystyle N \to\infty,}
\end{eqnarray}
where $C>0$ is some $N$-independent constant. Hence, by choosing
$\delta>0$ sufficiently small, we can ensure that the integral over
this strip is as small as we desire. This proves equation (\ref{432432}).
\end{appendix}

%\section{The Chebyshev-Fourier coefficients of $\tilde{D}_{N}(x)$}
%\section*{Acknowledgements}
%We are grateful for numerous insightful discussions with Leonid
%Pastur, for helpful correspondence with Arno Kuijlaars and to Philippe
%Sosoe and Percy Wong for sharing their preprint of \cite{SW13} with
%us. We are particularly grateful for informative discussions with Igor
%Krasovsky about the Riemann-Hilbert problem. We would like to thank
%Paul Bourgade for his interest in the problem and for illuminating
%discussions of our results in the broader context of beta-ensembles.
%Finally, we are grateful to J\'er\'emie Unterberger for bringing the
%paper \cite{U09} to our attention.

\section*{Acknowledgements}
We would like to thank Paul Bourgade, Arno Kuijlaars and Leonid Pastur
for insightful discussions and correspondence relating to our results,
Philippe Sosoe and Percy Wong for sharing their preprint \cite{SW13}
with us and J\'er\'emie Unterberger for bringing the paper \cite{U09}
to our attention. We are particularly grateful to Igor Krasovsky for
informative discussions about the Riemann--Hilbert problem, and also to
anonymous referees for their helpful suggestions and constructive
critique of the first version of the paper. The second author thanks
the Isaac Newton Institute, Cambridge, UK, for its support and
hospitality during the semester Periodic and Ergodic Spectral Problems.

% imsref loaded by daiva.urboniene, 2015-08-13 10:15:51
% imsref loaded by daiva.urboniene, 2015-08-13 11:10:56
%
% imsref loaded by daiva.urboniene, 2015-08-17 16:48:56

%\begin{appendix}
%\section{}
%\end{appendix}

% zodis "Acknowledgments" paliekamas pagal autoriu
%\section*{Acknowledgments}

%\begin{supplement}[id=suppA]
%\sname{Supplement A}
%\stitle{}
%\slink[doi]{10.1214/00-AOPXXXXSUPP} %[doi,text={...}] - jei reikia
%suskaldyti doi
%\sdatatype{.pdf}
%\sfilename{aopXXXX\_supp.pdf}
%\sdescription{}
%\end{supplement}

%\begin{thebibliography}{99}
%\bibitem[\protect\citeauthoryear{}{}]{r1}
%\bibitem{r1}
%\end{thebibliography}

\printaddresses
\end{document}